\newtheorem{theorem}{Theorem}
\newtheorem{lemma}[theorem]{Lemma}
\newtheorem{corollary}[theorem]{Corollary}
\newtheorem{claim}[theorem]{Claim}
\newtheorem{proposition}[theorem]{Proposition}
\newtheorem{fact}[theorem]{Fact}
\theoremstyle{definition}
\newtheorem{definition}[theorem]{Definition}
\theoremstyle{remark}
\newtheorem{remark}[theorem]{Remark}
\newenvironment{claimproof}[1][\proofname]{\begin{proof}[#1]}{\end{proof}}
\DeclareMathOperator{\Cl}{Cl}
\DeclareMathOperator{\vars}{vars}
\DeclareMathOperator{\Res}{\mathsf{Res}}
\DeclareMathOperator{\ACz}{\mathsf{AC^0}}
\DeclareMathOperator{\dom}{dom}
\DeclareMathOperator{\Ext}{Ext}
\DeclareMathOperator{\psz}{\ell}
\DeclareMathOperator{\pln}{\kappa}
\DeclareMathOperator{\np}{\mathsf{NP}}
\DeclareMathOperator{\conp}{\mathsf{co-NP}}
\newcommand{\bbF}{\mathbb{F}}
\newcommand{\calA}{\mathcal{A}}
\newcommand{\calB}{\mathcal{B}}
\newcommand{\calC}{\mathcal{C}}
\newcommand{\calF}{\mathcal{F}}
\newcommand{\calH}{\mathcal{H}}
\newcommand{\calJ}{\mathcal{J}}
\newcommand{\Q}{{\{0,1\}}}
\newcommand{\bigO}[1]{\mathcal{O}\mleft(#1\mright)}
\newcommand{\card}[1]{{\left|%
    #1%
    \right|%
    }}
\newcommand{\littletaller}{\mathchoice{\vphantom{\big|}}{}{}{}}
\newcommand{\restr}[2]{{
    \left.\kern-\nulldelimiterspace 
    #1 
    \littletaller 
    \right|_{#2} 
    }}
\title{\texorpdfstring{\vspace{-2cm}}{}%
    Bounded-Depth Frege Lower Bounds for Random 3-CNFs via Deterministic Restrictions}
\author[a]{Svyatoslav Gryaznov\thanks{E-mail: \texttt{s.griaznov22@imperial.ac.uk}. Part of this project has received funding from the European Research Council (ERC) under the European Union's Horizon 2020 research and innovation programme (grant agreement No.~101002742 -- EPRICOT).}}
\affil[a]{Imperial College London}
\author[b]{Navid Talebanfard\thanks{E-mail: \texttt{n.talebanfard@sheffield.ac.uk}. This project has received funding from the European Union's Horizon Europe research and innovation programme under the Marie Sk{\l}odowska-Curie grant agreement No.~101106684 -- EXCICO. Views and opinions expressed are however those of author(s) only and do not necessarily reflect those of the European Union or REA. Neither the European Union nor the granting authority can be held responsible for them. \\\includegraphics[scale=0.1]{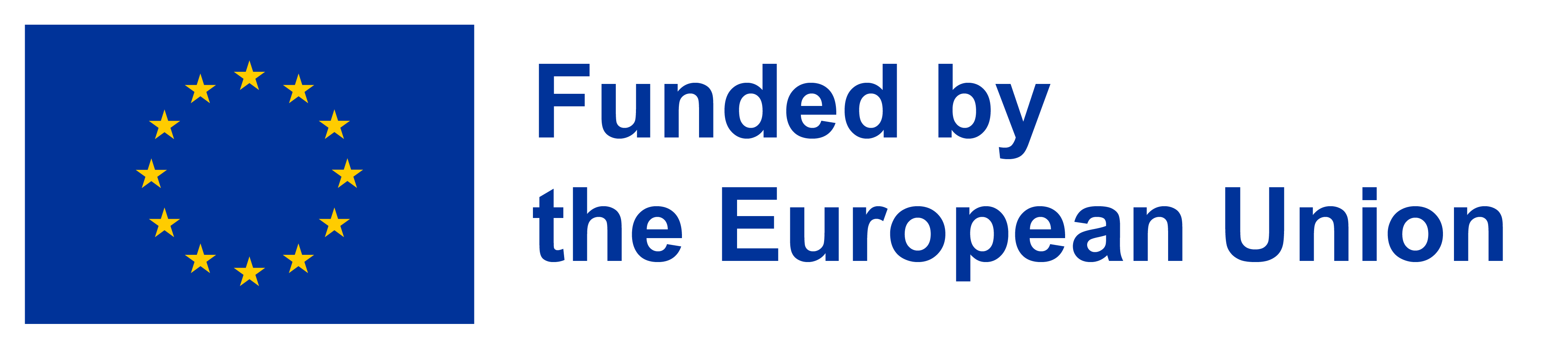}}}
\affil[b]{University of Sheffield}
\affil[b]{Institute of Mathematics of the Czech Academy of Sciences}
\date{\texorpdfstring{\vspace{-1cm}}{}}
\begin{document}

\maketitle

\begin{abstract}
    A major open problem in proof complexity is to demonstrate that random 3-CNFs with a linear number of clauses require super-polynomial size refutations in bounded-depth Frege systems. We take the first step towards addressing this question by establishing a super-linear lower bound: for every $k$, there exists $\epsilon_k > 0$ such that any depth-$k$ Frege refutation of a random $n$-variable 3-CNF with $\Theta(n)$ clauses has $\Omega(n^{1 + \epsilon_k})$ steps w.h.p. Our proof involves a novel adaptation of the deterministic restriction technique introduced by Chaudhuri and Radhakrishnan (STOC'96).

    For a given formula, this technique provides a method to fix a small number of variables in a bottom-up manner, ensuring that every surviving gate has small fan-in. Consequently, the resulting formula depends on a limited number of variables and can be simplified to a constant by a small variable assignment. Adapting this approach to proof complexity requires addressing the usual challenges associated with maintaining the hardness of a given instance. To this end, we introduce the following generalizations of standard proof complexity tools:
    \begin{itemize}
        \item \emph{Weak expanders}: These bipartite graphs relax the classical notion of expansion by only requiring that small sets have a non-empty boundary, while intermediate-sized sets have a large boundary. This property is sufficient to preserve hardness (e.g., for resolution width) and is easier to maintain as we remove vertices from the graph.
        \item \emph{Formula assignments}: To simplify a Frege proof, we consider a generalization of partial restrictions that assign values to formulas instead of just variables. We treat these assignments as new axioms added to our formula, as they generally cannot be expressed as variable substitutions.
    \end{itemize}
\end{abstract}

\newpage

\section{Introduction}

\emph{Propositional proof complexity} systematically studies the difficulty of proving tautologies (or equivalently, refuting unsatisfiable formulas). The primary goal is to demonstrate the weakness of a given proof system by showing that there exists a formula that does not admit short refutations. Assuming $\np \neq \conp$, any \emph{reasonable} proof system fails to refute some formulas in polynomial size. Consequently, the pursuit of lower bounds for increasingly strong proof systems can be seen as an attempt to substantiate this assumption. However, even when a hard formula for a proof system is identified, finding additional hard formulas remains highly relevant. The inherent weakness of a proof system is truly exposed when we can show a lower bound for almost all or \emph{random} formulas. Moreover, lower bounds for random formulas support \emph{Feige's hypothesis} \cite{Feige02}, which posits that there is no polynomial-time algorithm that accepts all satisfiable 3-CNF formulas and rejects almost all unsatisfiable 3-CNF formulas with sufficient density.

Exponential lower bounds for random formulas are known for various proof systems, including resolution~\cite{ChvatalS88,Ben-SassonW01}, resolution operating on $k$-DNF formulas ($\Res(k)$)~\cite{DBLP:journals/cc/Alekhnovich11,DBLP:journals/eccc/Sofronova022}, Polynomial Calculus~\cite{DBLP:conf/focs/AlekhnovichR01,Ben-SassonI10}, and Cutting Planes~\cite{HrubesP17,FlemingPPR22}. For $\ACz$-Frege systems, which operate with constant-depth, or $\ACz$ circuits, lower bounds for random formulas are still largely unknown, despite being extensively studied with several exponential lower bounds. In this paper, we focus on $\ACz$-Frege and make the first non-trivial progress in this direction.

It is of significant interest to prove lower bounds on random formulas for the smallest possible initial width, namely for 3-CNFs. For instance, the Cutting Planes lower bounds for random formulas are only known for $O(\log n)$-CNFs, and reducing this width remains an intriguing open problem. For many proof systems, the lower bounds on random formulas are based on the \emph{expansion properties} of their underlying graphs and employ techniques based on restrictions, which fix only a small number of variables. However, preserving the expansion of the graph of a random 3-CNF under such restrictions is impossible. This issue can usually be addressed by considering wider CNFs (e.g., 12-CNFs provide strong enough expansion, as shown in~\cite{DBLP:journals/jacm/Razborov16,DBLP:journals/eccc/Sofronova022}). In such cases, it is possible to show that the graph remains an expander under restrictions of appropriate size, albeit with weaker parameters. 

In the current work, we also adopt the restriction-based technique but demonstrate that it is sufficient to work with 3-CNFs. We observe that the restricted graph satisfies a relaxed notion of expansion, and call such graphs \emph{weak expanders}, that still retains the essential properties required to achieve the desired lower bound on the complexity of the restricted formula, specifically in terms of resolution width. This adjustment is highly technical and requires re-establishing some standard properties from expander theory.
Nonetheless, we believe this is valuable, and the notion of weak expanders can be applied to other proof systems to obtain lower bounds for random 3-CNFs. Furthermore, our technique requires working with a more general notion of restrictions, which cannot be expressed via simple variable assignments, and instead operates with formulas directly. We call such restrictions \emph{formula assignments} and show that (syntactic) Frege proofs retain their structure under such restrictions.

\paragraph{Proof systems and circuit classes.}
Proof lines in every propositional proof system belong to some circuit class. Intuitively, the deductive power of a proof system should correlate with the computational power of the underlying circuit class. This intuition is supported by many proof complexity lower bounds, such as feasible interpolation, which directly translates circuit lower bounds into proof size lower bounds\footnote{The broader intuition here is that a proof implicitly performs some computation, which is captured by extracting a circuit from the proof.} (see, e.g.,~\cite{Krajicek97,Pudlak97}). Additionally, various applications of switching lemmas in the context of $\ACz$-Frege further support this intuition (see, e.g.,~\cite{KrajicekPW95,PitassiBI93,PitassiRST16,Hastad21,HastadR22}).

A typical application of a switching lemma employs \emph{random restrictions} to show that a circuit (or a proof) of small size simplifies under a random restriction from the appropriate distribution.

In the circuit complexity setting, a small $\ACz$ circuit becomes constant under a small random partial assignment. This, in particular, implies a lower bound on the parity function, which cannot be made constant by any partial assignment of small size.

Adapting the methods based on switching lemmas to proof complexity is more involved and requires careful consideration of the proof's structure to avoid trivializing it. This is achieved by using carefully constructed distributions of random restrictions tailored to the specific instance. An instance-specific switching lemma is then used to show that, under these restrictions, each proof line has a simple structure (e.g., it can be computed by a small-depth decision tree), while still preserving the hardness of the restricted instance.

\paragraph{Obstacles for 3-CNFs.}
All the currently known lower bounds for $\ACz$-Frege employ switching lemma and involve random restrictions that fix $1-o(1)$ fraction of variables. However, all the known restriction based lower bounds on random formulas fix only $\mathcal{O}(n)$ variables in order to utilize their \emph{expansion properties}. It is natural to ask whether these expansion properties and the corresponding small restrictions are sufficient to give at least a non-trivial lower bound for random formulas in $\ACz$-Frege.

Indeed, this idea was explored by Segerlind, Buss, and Impagliazzo~\cite{DBLP:journals/siamcomp/SegerlindBI04} for $\Res(k)$, a subsystem of $\ACz$-Frege that operates with disjunctions of $k$-DNFs. They introduced a variant of a switching lemma for small restrictions, demonstrating that for sufficiently small $k$, $k$-DNFs can be converted into small-depth decision trees under restrictions that set only a polynomially small fraction of the input bits. Based on this, they showed that random $t$-CNFs are hard for $\Res(k)$ for constant $k < t$. Alekhnovich~\cite{DBLP:journals/cc/Alekhnovich11} extended this result to 3-CNFs and $\Res(\sqrt{\log{n}})$. Sokolov and Sofronova~\cite{DBLP:journals/eccc/Sofronova022} further refined this by considering random $t$-CNFs with higher densities. Razborov~\cite{MR3275844} improved the switching lemma introduced by Segerlind, Buss, and Impagliazzo by eliminating the square root dependency from the bound. This improvement suggests that random $t$-CNFs might be hard for $\Res(\epsilon \log{n})$ for appropriate $t$.

However, it appears crucial that the fan-in of the bottom layer is highly restricted in these results, making it unclear how to extend them to $\Res(k)$ even for $k = \omega(\log{n})$, and even more so to $\ACz$-Frege.

\paragraph{Deterministic restrictions.}
In the context of circuit complexity, the switching lemma requires setting $1-o(1)$ fraction of the input variables, which makes it inapplicable for formulas that can be made constant by such restrictions. Motivated by this, Chaudhuri and Radhakrishnan~\cite{DBLP:conf/stoc/ChaudhuriR96} considered restrictions constructed deterministically and showed that the approximate majority function cannot be computed by $\ACz$ circuits of linear size. At a high level, their technique is based on a greedy procedure that deterministically restricts the circuit layer-by-layer, showing that the circuit becomes constant under a small restriction. This technique was later employed by Cheraghchi et al.~\cite{DBLP:journals/jcss/CheraghchiGJWX18} for $\ACz \circ \mathsf{MOD}_2$ circuits, which are $\ACz$ circuits with parity gates at the bottom layer, and by Kopparty and Srinivasan~\cite{DBLP:journals/toc/Kopparty018} for $\ACz[2]$ circuits, i.e., $\ACz$ circuits which also use parity gates at arbitrary layers. The latter work directly improved upon~\cite{DBLP:conf/stoc/ChaudhuriR96} by showing that the approximate majority function requires $\ACz[2]$ circuits of superlinear size.

The technique implemented by Chaudhuri and Radhakrishnan~\cite{DBLP:conf/stoc/ChaudhuriR96} proceeds as follows\footnote{The original process also reduces fan-out. However, for simplicity, we consider only the reduction of fan-in, as also noted in \cite{DBLP:journals/toc/Kopparty018}.}.
For a sequence of parameters $d = (d_1, \ldots, d_k)$, a circuit is \emph{$d$-regular} if every gate at level $i$ has fan-in of at most $d_i$. Consequently, each gate depends on at most $d_1 \cdots d_i$ variables.
The process proceeds inductively on $i$, starting from the level just above the literals and moving upwards.
Let $g$ be a gate at level $i$ with fan-in exceeding $d_i$. By the inductive hypothesis, all children of $g$ satisfy the $d$-regularity condition, and in particular, they depend on at most $d_1 \cdots d_{i-1}$ variables. Let $h$ be a child of $g$. If $h$ computes a constant function under the current restriction, it is replaced with the corresponding constant. Such gates are called \emph{forced}, and \emph{live} otherwise. We replace forced children of $g$ with constants until there are only live children left. If the fan-in of $g$ still exceeds $d_i$, we can eliminate $g$ by fixing any of its live children to an appropriate constant (e.g., $1$ if $g$ is an $\mathsf{OR}$-gate) by setting at most $d_1 \cdots d_{i-1}$ variables. This process is repeated until no gates at level $i$ have fan-in exceeding $d_i$. The process is then repeated for the next level.
If the circuit has a slightly super-linear size, then with carefully chosen parameters $d$ such that $d_1 \cdots d_k = o(n)$, the circuit depends on only $o(n)$ variables after a restriction of size $o(n)$. Consequently, it can be further reduced to a constant by a restriction of size $o(n)$. This implies that functions that cannot be made constant by small restrictions (notably, approximate majority) cannot be computed by circuits of this slightly super-linear size.

\paragraph{Narrow resolution proofs.}
Deterministic restrictions were applied to simplify resolution lower bounds.
The seminal work by Ben-Sasson and Wigderson~\cite{Ben-SassonW01} employs a similar concept to eliminate all clauses of large width from a resolution proof. We provide an outline of their proof and compare it to the argument above.

Let $\Pi$ be a resolution refutation of a CNF $F$ with an initial width $w_0$ that requires resolution width $w$. We can view the clauses in $\Pi$ as depth-1 formulas to fit the exposition of the technique above, and let $d_1 \ge w_0$ be the threshold parameter. Lemma~3.2 of~\cite{Ben-SassonW01} implies that for every variable $x$, one of the two options holds:
\begin{enumerate}
    \item there exists $\alpha \in \Q$ such that the resolution width of $\restr{F}{x = \alpha}$ is still equal to $w$;
    \item the width of both $\restr{F}{x = 0}$ and $\restr{F}{x = 1}$ reduces by exactly one and becomes $w-1$.
\end{enumerate}

The first case captures the notion of forced gates: the values of such variables are determined by the formula and do not simplify the proof. The second case corresponds to live gates, where the width reduces by only one when the variable is fixed. We will refer to these variables as \enquote{forced} and \enquote{live}, respectively. We greedily satisfy clauses with more than $d_1$ live variables and fix the forced variables to their corresponding constant values. This process is repeated until there are no clauses of width greater than $d_1$ left. If the size of the proof is small enough, the number of steps required to eliminate all clauses of width exceeding $d_1$ is also small, leading to a contradiction for formulas that require large resolution width. Since there are only $2n$ literals, this greedy technique is able to provide exponential lower bounds on such formulas.

\subsection{Our contributions}

We generalize and combine the methods of Chaudhuri and Radhakrishnan~\cite{DBLP:conf/stoc/ChaudhuriR96} and Ben-Sasson and Wigderson~\cite{Ben-SassonW01} to establish the first non-trivial lower bound for $\ACz$-Frege refutations of random 3-CNFs. Unlike previous approaches to $\ACz$-Frege, which rely on random restrictions and $t$-evaluations, our method directly applies random restrictions to the proof lines and derives the lower bound directly from the expansion and resolution width.

\begin{theorem}[restate=maintheorem,name=Main result]\label{thm:main theorem}
    For any sufficiently large constant $C$ and any natural number $k$, any depth-$k$ Frege refutation of a randomly chosen 3-CNF with $n$ variables and $Cn$ clauses requires $\Omega(n^{1 + \epsilon_k})$ steps w.h.p., where $\epsilon_k = 2^{-\Theta(k)}$.
\end{theorem}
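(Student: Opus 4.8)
The plan is to argue by contradiction. Assume $\Pi$ is a depth-$k$ Frege refutation of the random $3$-CNF $F$ with only $o(n^{1+\epsilon_k})$ steps, and manufacture from it a resolution refutation that is too narrow. The starting point is that w.h.p.\ the clause--variable incidence graph of $F$ is a \emph{weak expander}: very small sets of clauses have a non-empty boundary (so no small subset is unsatisfiable) while intermediate-sized sets have boundary $\Omega(n)$. Re-proving the Ben-Sasson--Wigderson argument for this weaker notion yields a resolution width lower bound $\Omega(n)$; and --- what matters more --- weak expansion is stable under the modifications our restriction performs, namely fixing a sublinear number of variables and adjoining new axioms of small width, provided the additions are chosen carefully, so that the modified formula still requires width much larger than $n^{1-\Omega_k(1)}$.

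Next I would run a Chaudhuri--Radhakrishnan style deterministic restriction on the whole proof, sweeping the layers of the proof lines from the literals upward. Fix $d_1,\dots,d_k$ with $\prod_i d_i = n^{1-\Omega_k(1)}$ and call a gate at layer $i$ \emph{fat} if its fan-in exceeds $d_i$. Once the layers below $i$ have been made regular, every live child of a layer-$i$ gate depends on at most $D_{i-1} := \prod_{j<i} d_j$ variables. For a fat live gate $g$ at layer $i$: discard the children that restriction has already turned constant; if $g$ is still fat it has more than $d_i$ live children, and I eliminate it by a \emph{formula assignment}, asserting one such child $h$ to the value $c_g$ that forces $g$ ($c_g=1$ for $\ORgate$ gates, $c_g=0$ for $\ANDgate$ gates) and recording $h\equiv c_g$ as a new axiom of width at most $D_{i-1}$, propagating the resulting constant upward. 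The choice of child is the crucial point: generalising the forced/live dichotomy of Ben-Sasson--Wigderson to formula assignments, I would show that among the more than $d_i$ candidates at least one can be asserted to $c_g$ without lowering the current resolution-width hardness --- were more than $d_i$ of them ``live'' in this sense one could violate weak expansion of the incidence graph --- so each elimination is effectively free. A supporting lemma must also certify that $\Pi$ restricted by the resulting sequence of formula assignments is again a legal depth-$k$ Frege refutation, now of $F$ together with the accumulated axioms; this is the ``Frege proofs retain their structure'' claim, and it is not automatic precisely because a formula assignment is not a variable substitution.

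After all $k$ layers have been processed, every surviving gate --- the root of every line included --- has fan-in at most $d_k$ over children depending on at most $D_{k-1}$ variables, hence depends on at most $\prod_i d_i = n^{1-\Omega_k(1)}$ variables and is therefore computed by a decision tree of that depth. A bounded-depth Frege refutation all of whose lines are such shallow decision trees converts, by the standard line-by-line simulation, into a resolution refutation of width $O(\prod_i d_i) = n^{1-\Omega_k(1)}$ of the CNF consisting of $F$ together with the accumulated width-$D_{k-1}$ axioms. But by the stability of weak expansion that CNF still requires width much larger than $n^{1-\Omega_k(1)}$, a contradiction. The exponent $\epsilon_k$ drops out of the Chaudhuri--Radhakrishnan telescoping: balancing the restriction cost evenly across the $k$ layers forces the fan-in bounds to grow like $d_i = n^{\Theta(2^i)(\epsilon_k+o(1))}$, so $\prod_i d_i = n^{\Theta(2^k)(\epsilon_k+o(1))}$, and keeping this exponent bounded below $1$ requires exactly $\epsilon_k = 2^{-\Theta(k)}$.

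I expect hardness preservation to be the principal obstacle, and it is what forces both new ingredients. The strong expansion of a random $3$-CNF is destroyed by the sublinear restrictions the argument has to make, so the weaker, robust notion is indispensable and the relevant facts --- the width lower bound, its behaviour under vertex deletion, and the generalised forced/live dichotomy --- all have to be redeveloped for it. Eliminating a fat gate by actually fixing the $D_{i-1}$ underlying variables of one of its children would pin variables down in patterns dictated adversarially by the proof, which the dichotomy cannot absorb; replacing that by the milder act of asserting a single formula --- carried thereafter as an extra axiom --- is what keeps the instance hard, at the price of having to control, uniformly across all $k$ layers at once, the cumulative effect of these axioms on the expansion of the incidence graph and on the width of the resolution refutation finally extracted.
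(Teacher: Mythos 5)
Your high-level skeleton is the one the paper uses: sweep the levels from the literals upward, cap in-degrees by thresholds $d_1,\dots,d_k$ telescoping as you describe, keep hardness alive through a weakened expansion notion that survives restriction, and finally extract a narrow semantic refutation to contradict a width lower bound. The exponent analysis also matches (the paper sets $d_i = m^{2^{i-1}}$ with $m = n^{1/(2^k+1)}$ and $\epsilon_k = 1/2^{k+1}$). But the mechanism you propose for eliminating fat \emph{live} gates is not what the paper does, and I believe it cannot be made to work as stated.

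You propose to kill a fat live gate by \emph{asserting one live child $h$ to a value} via a formula assignment and carrying $h \equiv c_g$ as a new axiom, explicitly rejecting the alternative of fixing $h$'s underlying variables on the grounds that this ``pins variables down in patterns dictated adversarially.'' The paper does the opposite: for a live child $D$ of a fat gate it picks a full \emph{variable} assignment $\tau$ that sets $D = \alpha$ \emph{and simultaneously satisfies $L^{\Cl(D)}$}, and uses the closure/extension machinery (Lemmas \ref{lem:weak boundary from closure} and \ref{lem:closure step by step}) to argue that removing $\Ext(\vars(D))$ keeps the incidence graph a weak expander. Formula assignments enter only afterwards, to clean up the subformulas that have become \emph{forced} under the accumulated variable restriction. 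The distinction is not cosmetic. The crucial fact (\cref{proposition new axioms}) is that every axiom in $\calA_\sigma$ is a \emph{semantic consequence} of the restricted linear system, derivable in small width, so adjoining $\calA_\sigma$ genuinely does not add power, and the final width lower bound still applies. If instead you assert a \emph{live} $h$ to $c_g$, then by the very definition of live there are solutions of the local system with $h = 0$ and with $h = 1$, so $h \equiv c_g$ is a fresh constraint the instance does not imply; you have no mechanism to rule out that a sublinear collection of such width-$D_{i-1}$ constraints collapses the resolution width of the augmented instance. Your heuristic ``at least one of the $> d_i$ live children can be asserted freely, else weak expansion is violated'' does not match the actual forced/live dichotomy and isn't substantiated; there is no a priori reason that any \emph{live} child can be asserted without cost.

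Two secondary but non-trivial omissions: the paper passes from the random 3-CNF to its 3-XOR image $L$ over $\bbF_2$, and the forced/live dichotomy is defined with respect to the solution set of the subsystem $L^{\Cl(C)}$ — this linear-algebraic structure is what makes \cref{lm:sat remains sat} and \cref{proposition new axioms} go through, and it is not available for raw 3-CNF clauses. And the random graph is used as a (strong) $(r,3,c)$-\emph{boundary} expander; weak expansion is what one \emph{proves} about $G \setminus \Ext(\calJ)$ after the cumulative deletions, not a hypothesis you get to start from. These need to be repaired for your argument to line up with a correct proof.
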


\paragraph{Comparison with the circuit complexity setting and limitations.}
The technique introduced by Chaudhuri and Radhakrishnan~\cite{DBLP:conf/stoc/ChaudhuriR96} is optimal within its context due to the corresponding upper bound on approximate majority. This implies that if we consider an $\ACz$-Frege proof merely as a collection of formulas (or circuits) disregarding the syntactic nature of the proof, the lower bound obtained is the best possible. 

Adapting this technique to proof complexity poses a challenge. The main obstacle is that proof lines are not independent, and the restriction must preserve the proof structure while maintaining the hardness of the instance during the greedy procedure. To address this, we introduce the notion of \emph{formula assignments}, which allows us to carefully maintain \emph{live} and \emph{forced} subformulas.
In the circuit complexity setting, the situation is much simpler. Forced gates are defined as gates that compute constant functions under the current restriction, and thus they can be easily replaced with constants. However, in our context, this is not sufficient. A formula can be complex and non-constant as a Boolean function, but its value might be easily derived by a simple proof. Informally, we call a formula \emph{forced} if its value is easy to prove, and \emph{live} otherwise. This adds a layer of complexity to the analysis that is not present in the circuit complexity setting.

This complexity is further highlighted by the fact that the technique cannot be applied to the semantic variant of $\ACz$-Frege, as maintaining the proof structure becomes impossible in this case. This is supported by the observation that any $t$-CNF, where $t$ is constant, can be refuted by a linear-length semantic $\ACz$-Frege proof. Consequently, our analysis heavily relies on the syntactic structure of the proof system. It is plausible that this syntactic nature can be further leveraged to improve the greedy procedure and establish better lower bounds for $\ACz$-Frege.

\paragraph{Future directions and other proof systems.}
There is also evidence suggesting that the technique could be applicable to other, more powerful proof systems. As previously mentioned, a variant of the technique can be applied to $\ACz \circ \mathsf{MOD}_2$~\cite{DBLP:journals/jcss/CheraghchiGJWX18} and even $\ACz[2]$ circuits~\cite{DBLP:journals/toc/Kopparty018}. This suggests that the technique could potentially be adapted to $\ACz[2]$-Frege to establish superlinear lower bounds on random 3-CNFs. This is particularly important because $\ACz[2]$-Frege represents the current frontier of proof complexity, and no non-trivial lower bounds on any formula are known for this system.

\subsection{Organization of the paper}

We begin with the preliminaries in \cref{sec:preliminaries}, where we introduce the necessary definitions and notation. In \cref{sec:proof-overview}, we provide an overview of the paper and outline the proof of our main results. This is followed by the detailed proof of the main results in \cref{sec:main-proof}. Next, we present the proofs related to formula assignments in \cref{sec:formula-assignment} and live and forced formulas in \cref{sec:live-forced-gates}. Finally, the results concerning weak expanders are proven in the appendix in \cref{app:weak-expanders}.

\section{Preliminaries}\label{sec:preliminaries}

We assume familiarity with the basic concepts of propositional proof complexity, such as resolution. Here we introduce the necessary definitions and notation for the concepts used throughout this paper. For a comprehensive introduction, refer to \cite{krajicek_2019}.

\subsection{Semantic derivations}

The \emph{semantic rule} allows deriving a set $S_0 \subseteq \Q^n$ from a pair of sets $S_1, S_2 \subseteq \Q^n$ if $S_1 \cap S_2 \subseteq S_0$. In this paper we need a slightly different -- but equivalent -- notion of semantic proof systems, where we allow constantly many premises instead of just two. A \emph{semantic derivation} of $S$ from a family of subsets $\mathcal{S}$ is a sequence $(T_1, \ldots, T_s)$, where $T_s = S$ and for each $i$, either $T_i \in \calF$, or $T_i$ is derived from a constant-sized subset of $T_1, \ldots, T_{i-1}$ using the semantic rule. A \emph{semantic refutation} is a derivation of $\varnothing$.

Our primary complexity measure for semantic derivations is the width. As mentioned by Razborov~\cite{DBLP:journals/jacm/Razborov16}, it is natural to treat the width of a proof line as the number of variables the line depends on as a Boolean function. Given a set $S \subseteq \Q^n$, consider the Boolean function $f$ with the support $S$, i.e., $f^{-1}(1) = S$. The \emph{width} of $S$, denoted $w(S)$, is the number of variables $f$ depends on. The width of a semantic derivation is the maximum width of the sets in the derivation. When the set $S$ is a set of satisfying assignments of some clause $C$, the width of $S$ coincides with the standard width of $C$. The following lemma can be obtained as a direct consequence of the game characterization of resolution width~\cite{AtseriasD08}.

\begin{lemma}[restate=resolutionfromsemantic,name=]\label{thm:small width semantic derivation implies small width resolution}
    Let $\Pi = (T_1, \ldots, T_s)$ be a semantic refutation of a 3-CNF $F$ such that each rule involves at most $c$ subsets. If $w(T_i) \le w$ for all $i$, then there exists a \emph{resolution} refutation of $F$ of width at most $cw$.
\end{lemma}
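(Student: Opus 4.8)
The plan is to use the Prover–Adversary (Pudlák–Buss / Atserias–Dalmau) game characterisation of resolution width: a CNF $F$ has a resolution refutation of width $\le W$ if and only if the Prover wins the $W$-width game on $F$, i.e. the game in which the Prover maintains a partial assignment of size $\le W$, may query a variable (Adversary answers), or may forget variable values, and wins when the current partial assignment falsifies a clause of $F$; the Adversary wins if he can play forever. So it suffices to extract, from the semantic refutation $\Pi = (T_1,\dots,T_s)$ of width $\le w$ with rules of arity $\le c$, a Prover strategy in the $cw$-width game on $F$.

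First I would set up the Prover's strategy as a walk backwards through $\Pi$, maintaining the invariant that at each point the Prover "stands at" some set $T_i$ and holds a partial assignment $\rho$ with $\dom(\rho) \subseteq \vars(T_i)$ (the variables $T_i$ depends on) and $\rho \notin T_i$ — that is, $\rho$ is inconsistent with $T_i$ viewed as the constraint "the assignment lies in $T_i$". Since $|\vars(T_i)| \le w$, the assignment has size $\le w$ at the endpoints of each macro-step, well within the $cw$ budget. The walk starts at $T_s = \varnothing$: every assignment is inconsistent with the empty set, so the Prover begins at $T_s$ with $\rho$ empty (or, more conveniently, as soon as the invariant is first set up). The walk ends when the Prover reaches some $T_i \in F$ (an axiom): then $\rho \notin T_i$ means $\rho$ falsifies the clause $T_i$ of $F$, so the Prover has won.

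The key step is the macro-move at an inference $T_i$ derived by the semantic rule from premises $T_{j_1},\dots,T_{j_m}$ with $m \le c$, where the Prover currently holds $\rho$ with $\rho \notin T_i$ and $\dom(\rho) \subseteq \vars(T_i)$. Since $T_{j_1}\cap\dots\cap T_{j_m} \subseteq T_i$, the assignment $\rho$ (extended arbitrarily outside its domain — note that since each $T_{j_\ell}$ depends only on $\vars(T_{j_\ell})$, consistency is determined on that variable set) cannot lie in all of the $T_{j_\ell}$; pick $\ell$ with (every extension of) $\rho$ restricted to $\vars(T_{j_\ell})$ outside $T_{j_\ell}$. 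The Prover queries all variables of $\vars(T_i) \cup \vars(T_{j_\ell})$ not yet assigned, obtaining an extended assignment $\rho'$; at this moment $|\dom(\rho')| \le |\vars(T_i)| + |\vars(T_{j_\ell})| \le 2w$. Now $\rho' \notin T_{j_\ell}$ and $\dom(\rho') \supseteq \vars(T_{j_\ell})$, so the Prover forgets every variable outside $\vars(T_{j_\ell})$, arriving at an assignment of size $\le w$ that is inconsistent with $T_{j_\ell}$ — the invariant is restored at the premise $T_{j_\ell}$, which occurs strictly earlier in $\Pi$. Iterating strictly decreases the index, so the walk terminates at an axiom. The peak assignment size during a macro-move is $\le 2w$, but to handle the general-arity rule one may instead query the premises' variable sets one at a time, which gives the bound $\le cw$ claimed in the statement (and is needed when $c > 2$); I would do the bookkeeping that way to match the $cw$ bound verbatim.

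The main obstacle — really the only subtle point — is the semantic soundness bookkeeping: one must be careful that "$\rho \notin T$" is a well-defined, monotone-under-extension notion given that $T$ depends only on $\vars(T)$, and that the containment $\bigcap_\ell T_{j_\ell} \subseteq T_i$ genuinely forces some premise to exclude $\rho'$ once $\dom(\rho')$ covers all the relevant variable sets. With at most $c$ premises and each of width $\le w$, covering $\vars(T_i)$ together with the premises one-by-one keeps the live assignment of size $\le cw$ throughout, and a Prover who survives each macro-move with a consistent small assignment and always moves to a strictly earlier line must reach an axiom and win; by the game characterisation this yields a resolution refutation of $F$ of width $\le cw$.
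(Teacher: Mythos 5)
Your high-level plan matches what the paper has in mind: the paper simply cites the Atserias--Dalmau game characterisation and gives no argument, and walking the Prover backwards through the semantic refutation while maintaining a small falsifying partial assignment is indeed the intended route. However, the macro-move as you wrote it has a genuine flaw. You \enquote{pick $\ell$ with (every extension of) $\rho$ restricted to $\vars(T_{j_\ell})$ outside $T_{j_\ell}$,} and only then query $\vars(T_i)\cup\vars(T_{j_\ell})$. Such an $\ell$ need not exist: the containment $\bigcap_\ell T_{j_\ell}\subseteq T_i$ only guarantees that \emph{each individual full extension} of $\rho$ lands outside \emph{some} $T_{j_\ell}$, and which one can depend on the extension. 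Concretely, let $T_i$ be the set where $x\vee y$ holds, $T_{j_1}$ where $x\vee z$ holds, $T_{j_2}$ where $y\vee\neg z$ holds, and $\rho=\{x\mapsto 0,\, y\mapsto 0\}$; then $T_{j_1}\cap T_{j_2}\subseteq T_i$ and $\rho$ falsifies $T_i$, yet for each $\ell\in\{1,2\}$ one extension of $\rho$ to $\vars(T_{j_\ell})$ lies in $T_{j_\ell}$ and the other does not. The Prover must therefore query $\bigcup_\ell\vars(T_{j_\ell})\setminus\dom(\rho)$ \emph{first} and only afterwards select the premise the Adversary's answers have falsified; committing to an $\ell$ in advance can leave the Prover stuck at a premise that the answers satisfy.

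Once corrected, the peak is $\bigl|\dom(\rho)\cup\bigcup_\ell\vars(T_{j_\ell})\bigr|\le (c+1)w$, so a little more bookkeeping is needed to match the stated $cw$: for example, first forget $\dom(\rho)\setminus\bigcup_\ell\vars(T_{j_\ell})$ and check the invariant survives, or observe that one may assume WLOG $\vars(T_i)\subseteq\bigcup_\ell\vars(T_{j_\ell})$. Your suggested fix --- query the premises' variable sets \enquote{one at a time} to keep the live assignment at $\le cw$ --- does not work in the form you describe: if you forget $T_{j_1}$'s variables before querying $T_{j_2}$'s, the Adversary is free to answer any shared or re-queried variable inconsistently, and the semantic guarantee (which concerns a single consistent full assignment) no longer forces any premise to be falsified. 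None of this affects the paper's application, where any $\bigO{w}$ bound suffices, but as written the argument has both a missing case in the choice of $\ell$ and an unsupported claim about the $cw$ constant.
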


\subsection{Frege systems}

We consider the language of propositional logic based on binary disjunction $\vee$, negation $\neg$, and the constants $0$ and $1$\footnote{For brevity, we do not include the binary conjunction $\wedge$, but our results can be easily extended to support it.}. Given a formula $A$, the depth of $A$ is defined as the number of alternations between $\vee$ and $\neg$. 

A \emph{Frege rule} is a constant-size tuple $(A_0, \ldots, A_b)$, written in the form:
\[
    \frac{A_1, \ldots, A_b}{A_0},
\]
where $A_1, \ldots, A_b \vDash A_0$. If $b = 0$, we call such a rule a \emph{Frege axiom scheme}. Given a substitution $\alpha$ that replaces the variables in the Frege rule with formulas, we can derive $\alpha(A_0)$ from $\alpha(A_1), \ldots, \alpha(A_b)$.

Let $P$ be a finite set of Frege rules. A \emph{$P$-Frege proof} of a formula $A$ from a set of formulas $\mathcal{F}$ is a sequence of formulas $\Pi = (T_1, \ldots, T_s)$ such that $T_s = A$ and for every $i$ in $[s]$,
\begin{enumerate}
    \item $T_i \in \mathcal{F}$; or
    \item $T_i$ follows from the previous formulas $T_j$ for $j < i$ by applying a rule from $P$.
\end{enumerate}

A \emph{$P$-refutation} is a $P$-proof of the empty formula $\bot$. A \emph{depth-$k$ $P$-refutation} is a $P$-refutation that contains only formulas of depth at most $k$. A \emph{Frege proof system} is a finite set $P$ of sound and implicationally complete Frege rules.

Let $\Pi$ be a Frege proof. The main complexity measures for Frege systems that we consider are the \emph{number of steps}, denoted $\pln(\Pi)$, and the number of distinct subformulas in $\Pi$, denoted $\psz(\Pi)$. It is immediate that $\pln(\Pi) \leq \psz(\Pi)$. Kraj\'{i}\v{c}ek~\cite{krajicek_1995} shows that these two measures are essentially equivalent for Frege systems. Although the original statement does not explicitly state anything about depth, it is clear from the proof that it does not change.

\begin{proposition}[cf. Lemma~4.4.6 in~\cite{krajicek_1995}]\label{prop:l bound implies steps}
    Let $F$ be an unsatisfiable 3-CNF and $\Pi$ a depth-$k$ $P$-refutation of $F$ for some Frege system $P$. Then there exists another depth-$k$ $P$-refutation $\Pi'$ of $F$ such that $\psz(\Pi') \le O(\pln(\Pi) + \card{F})$.
\end{proposition}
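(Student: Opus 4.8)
The statement is the standard fact, going back to Reckhow and recorded by Kraj\'{i}\v{c}ek, that the number of steps and the number of distinct subformulas of a Frege proof are linearly related; the only things needing care are that the transformation preserves depth and that the additive $\card{F}$ term is correct. The plan is to use syntactic unification. Observe that a $P$-refutation of $F$ is completely described by two pieces of data: its \emph{skeleton} --- for each line, whether it is an instance of a clause of $F$ or the conclusion of an application of a fixed rule $R \in P$, and in the latter case which earlier lines play the role of the (constantly many) premises of $R$ --- together with a \emph{substitution} that instantiates, for each rule application, the pattern variables of $R$ (renamed apart between applications) and any auxiliary variables. Fixing the skeleton and varying the substitution produces exactly the family of $P$-refutations with that skeleton, and the given $\Pi$ is one member of this family.

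First I would extract the skeleton of $\Pi$ and write down the associated syntactic unification problem over the term algebra of formulas in the language $(\vee,\neg,0,1)$: introduce a term variable $X_i$ for each of the $\pln(\Pi)$ lines; for each rule-application line $i$ coming from $R$ with premises among lines $j_1,\dots,j_b$, impose the constant-size equations $X_i = \sigma_i(A^{R}_0)$ and $X_{j_t} = \sigma_i(A^{R}_t)$ for $t\in[b]$, where the $\sigma_i$ are the renamed-apart pattern variables of $R$; for each line $i$ that is a clause $C$ of $F$ impose the ground equation $X_i = C$; and impose $X_s = \bot$ for the last line. The total size of this system is $O(\pln(\Pi) + \card{F})$, since each rule contributes $O(1)$ and there are at most $\card{F}$ ground clause-equations. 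The system is solvable, as $\Pi$ itself is a solution.

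Next I would invoke linear unification (Paterson--Wegman): a solvable syntactic unification problem of total size $m$ has a most general unifier whose term DAG has $O(m)$ nodes, and every solution is obtained from the mgu by a further substitution. Applying the mgu $\theta$ to the skeleton yields $\Pi' = (\theta(X_1),\dots,\theta(X_s))$. This is a legal $P$-proof, because any solution of the equations makes each rule-application line a genuine substitution instance $\theta\circ\sigma_i$ of a rule of $P$ applied to earlier lines; it is a refutation of $F$, because the clause-lines are pinned to actual clauses of $F$ and the last line to $\bot$; and the number of distinct subformulas of $\Pi'$ is at most the size of the mgu DAG, so $\psz(\Pi') = O(\pln(\Pi)+\card{F})$. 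Finally, depth is preserved: since $\Pi$ is a solution, $X_i^{\Pi} = \rho(\theta(X_i))$ for a \emph{single} substitution $\rho$, and substituting formulas for (depth-$0$) variables cannot decrease the number of $\vee/\neg$ alternations along any root-to-leaf path, so each line of $\Pi'$ has depth at most that of the corresponding line of $\Pi$; hence $\Pi'$ is a depth-$k$ refutation.

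The main point requiring care is the exact correspondence between ``$P$-refutations with a fixed skeleton'' and ``solutions of the unification system'': one must rename the pattern variables of each rule application apart so that distinct applications do not spuriously share structure, treat the auxiliary/free variables as ordinary term variables that the unifier may leave uninstantiated, and verify that the ground boundary conditions $X_i = C$ and $X_s = \bot$ are correctly handled. Everything else --- solvability, the linear DAG bound, soundness of instantiated rules, and the depth inequality --- is routine. (One could alternatively argue directly, rewriting $\Pi$ bottom-up and replacing each substituted subformula by a minimal representative, but the unification formulation makes the linear bound transparent.)
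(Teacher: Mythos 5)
The paper itself does not prove this proposition; it cites Lemma~4.4.6 of Kraj\'{i}\v{c}ek~\cite{krajicek_1995} (which in turn goes back to Reckhow) and merely remarks that depth preservation is evident from that proof. Your proposal reconstructs precisely that argument: encode the proof skeleton as a syntactic unification system, apply Paterson--Wegman to obtain a linear-size most general unifier represented as a DAG, and observe that instantiating the skeleton with the mgu yields a legal $P$-refutation whose distinct-subformula count is bounded by the DAG size. Your depth argument --- that the original proof is a further instance $\rho\circ\theta$ of the mgu skeleton, and substituting formulas for variables cannot decrease alternation depth, so $\theta(X_i)$ is no deeper than the corresponding line of $\Pi$ --- is exactly the observation the authors allude to when they say the transformation does not change depth. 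One small imprecision: you assert there are ``at most $\card{F}$ ground clause-equations,'' but the same clause of $F$ could be repeated on several lines, so the number of such equations is really bounded by $\pln(\Pi)$; this does not affect the final $O(\pln(\Pi)+\card{F})$ bound. Overall your proposal is correct and is the same proof as the cited source.
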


In our application, we will give a superlinear lower bound on the number of distinct subformulas in the proof for 3-CNFs with linearly many clauses. Thus, the bound also holds for the number of steps. Since we only count the distinct appearances of subformulas, it is more helpful to think of proofs as collections of circuits.

Although the primitive connective $\lor$ is binary, the depth of a formula is measured by the number of alternations between connectives. To simplify our analysis, we will treat proof lines as formulas in their \emph{merged form}, where all binary connectives are compressed into unbounded ones.
This naturally extends to proofs, where each proof line is a formula in the merged form. Note that after this transformation the number of distinct subformulas in the proof can only decrease.

It is a well-known fact that all Frege systems are polynomially equivalent. If the language of the systems is the same, then the equivalence is, in fact, linear and relies only on the implicational completeness and constant size of Frege rules.

\begin{proposition}[cf. Lemma~2.3.1 in~\cite{krajicek_2019}]\label{prop:Shoenfield is enough}
    Let $P$ and $P'$ be two Frege systems in the same language $L$, and let $\Pi$ be a $P$-proof of depth $k$. Then there exists a $P'$-proof $\Pi'$ of depth $k + O(1)$ such that $\kappa(\Pi') = \bigO{\kappa(\Pi)}$ and $w(\Pi') = \bigO{w(\Pi)}$.
\end{proposition}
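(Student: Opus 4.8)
Proof proposal for Proposition~\ref{prop:Shoenfield is enough} (equivalence of Frege systems in the same language)

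The plan is to run the standard simulation argument that turns a $P$-proof into a $P'$-proof line by line, being careful to track depth, number of steps, and width. Fix a $P$-proof $\Pi = (T_1,\ldots,T_s)$ of depth $k$. Since $P'$ is implicationally complete, for every Frege rule $\frac{A_1,\ldots,A_b}{A_0}$ of $P$ — which is sound, so $A_1,\ldots,A_b \vDash A_0$ — there is a fixed $P'$-derivation $D$ of $A_0$ from the hypotheses $A_1,\ldots,A_b$. There are only finitely many rules in $P$, so over all of them these derivations have bounded size, bounded depth increase $c = O(1)$, and involve only the (finitely many) propositional variables appearing in the rules. The key observation is that $D$ is substitution-closed: applying a substitution $\sigma$ (replacing the rule's variables by arbitrary formulas) to every line of $D$ yields a valid $P'$-derivation of $\sigma(A_0)$ from $\sigma(A_1),\ldots,\sigma(A_b)$, because Frege-rule applicability is preserved under substitution.

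First I would handle the axiom/hypothesis lines of $\Pi$: a line $T_i$ that is a hypothesis from $\mathcal F$ is copied verbatim, contributing nothing. For a line $T_i$ obtained by a $P$-rule instance with substitution $\sigma$ from premises $T_{j_1},\ldots,T_{j_b}$ already derived, I splice in the substituted fixed $P'$-derivation $\sigma(D)$ of $T_i$ from $T_{j_1},\ldots,T_{j_b}$. Concatenating these blocks in order yields a $P'$-proof $\Pi'$ of $T_s$. Now the bookkeeping. For \emph{steps}: each original line is replaced by at most $O(1)$ new lines (the length of $D$, maximised over the finitely many rules of $P$), so $\kappa(\Pi') = O(\kappa(\Pi))$. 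For \emph{depth}: every new line of $\sigma(D)$ has the form $\sigma(B)$ for a fixed formula $B$ of constant depth built from the rule's variables; since $\sigma$ substitutes formulas from $\Pi$, which have depth at most $k$, the depth of $\sigma(B)$ is at most $k + \operatorname{depth}(B) \le k + O(1)$. For \emph{width}: $\sigma(B)$ depends only on variables occurring in the formulas $\sigma$ substitutes, i.e.\ among the variables that premises $T_{j_1},\ldots,T_{j_b}$ depend on; as each such premise has width $O(w(\Pi))$ and $B$ has boundedly many variable-occurrences, $w(\sigma(B)) = O(w(\Pi))$, hence $w(\Pi') = O(w(\Pi))$.

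The main obstacle — really the only subtle point — is the depth accounting when the fixed derivation $D$ nests a substituted formula underneath connectives (so the naive bound could be $\operatorname{depth}(B)\cdot k$ rather than $k+\operatorname{depth}(B)$). This is resolved by the \emph{merged-form} convention already in force: $B$ has only $O(1)$ connective-alternations, and substituting depth-$k$ formulas into the leaves of $B$ merges at the interface, so consecutive like-connectives collapse and the alternation count is additive, giving the clean $k + O(1)$ bound. One should also note that the hidden constants depend on $P$, $P'$, and the language $L$, but since these are fixed this is harmless. I would also remark that this is exactly Lemma~2.3.1 of~\cite{krajicek_2019} with the routine observation, visible from its proof, that the simulation is local and therefore controls depth and width as claimed.
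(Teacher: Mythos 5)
The paper gives no in-house proof here --- the proposition is stated as a quantitative reading of Lemma~2.3.1 in Kraj\'{i}\v{c}ek's monograph --- so there is nothing internal to compare against; your argument is precisely the standard simulation that lemma rests on. It is correct: implicational completeness yields a fixed $P'$-derivation $D$ for each of the finitely many $P$-rules, splicing in $\sigma(D)$ gives the $O(1)$ per-line blow-up in steps, the depth increase is additive under substitution (your merged-form observation correctly dispatches the one subtle point), and the width bound follows because every rule variable occurs in a premise or the conclusion, so each intermediate line $\sigma(B)$ involves at most the variables of the constantly many adjacent lines of $\Pi$.
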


Throughout this paper, we use $F_k$ to denote a depth-$k$ Frege system without specifying the exact rules.

\subsection{Boundary expanders}

Let $G = (L \sqcup R, E)$ be a bipartite graph. For a set $I \subseteq L$, let $N(I)$ denote the set of neighbours of $I$, and let the boundary of $I$, denoted $\partial(I)$, be the set of all unique neighbours of $I$. The graph $G$ is called an \emph{$(r, \Delta, c)$-boundary expander} if the degrees of all vertices in $L$ are bounded by $\Delta$ and for every $I \subseteq L$ with $\card{I} \le r$, the size of its boundary is at least $c\card{I}$.

In our application, we will be removing sets of vertices $J \subseteq R$ from $G$. The graph $G \setminus J$ is defined as the induced subgraph of $G$ obtained by removing the vertices in $J$ along with all vertices in $L$ that have neighbours only in $J$.

As usual with expansion-based lower bound techniques, we need to remove vertices in a controlled way to preserve useful enough expansion properties. This is achieved through the \emph{closure} operation for boundary expanders. Let $J \subseteq R$, and $r \le n$ a parameter. A set $I \subseteq L$ is \emph{$(r, J)$-contained} if $\card{I} \leq r$ and $\partial(I) \subseteq J$.

\begin{definition}
    A \emph{closure of $J$} is an $(r, J)$-contained set of maximal size. $\Cl_G(J)$ denotes the lexicographically first one.
\end{definition}

The main properties of the closure operation are captured by the following theorem, which shows that in the case of boundary expanders the closure is unique and monotone for sets $J$ of sufficiently small size.

\begin{theorem}[\cite{DBLP:conf/focs/AlekhnovichR01}]\label{thm:closure properties}
    Let $G = (L \sqcup R, E)$ be an $(r, \Delta, c)$-boundary expander and $J \subseteq R$.
    \begin{enumerate}
        \item\label{itm:closure small} If $I$ is a closure of $J$, then $\card{I} \leq \frac{\card{J}}{c}$.
        \item\label{itm:closure unique} If $\card{J} \leq cr/2$, then the closure of $J$ is unique.
        \item If $\card{J} \leq cr/2$ and $J' \subseteq J$, then $\Cl_G(J') \subseteq \Cl_G(J)$.
    \end{enumerate}
\end{theorem}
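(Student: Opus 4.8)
The plan is to prove the three items in order, each building on the previous one. For item~\ref{itm:closure small}, let $I$ be a closure of $J$, so $|I| \le r$ and $\partial(I) \subseteq J$. By the expansion property applied to $I$ (which is legitimate since $|I| \le r$), we have $|\partial(I)| \ge c|I|$. Combining these, $c|I| \le |\partial(I)| \le |J|$, which gives $|I| \le |J|/c$ immediately. (If $I = \varnothing$ the bound is trivial.)

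For item~\ref{itm:closure unique}, suppose $I_1$ and $I_2$ are both closures of $J$, and consider $I = I_1 \cup I_2$. The first thing I would check is that $I$ is again $(r,J)$-contained. The size bound follows from item~\ref{itm:closure small}: $|I_1|, |I_2| \le |J|/c$, so $|I| \le 2|J|/c \le r$ using the hypothesis $|J| \le cr/2$. For the boundary condition, I would use the standard submodularity-type fact that $\partial(I_1 \cup I_2) \subseteq \partial(I_1) \cup \partial(I_2)$ — more precisely, a unique neighbour of $I_1 \cup I_2$ has exactly one neighbour in $I_1 \cup I_2$, hence it is a unique neighbour of whichever of $I_1, I_2$ contains that neighbour (and has no neighbour in the other set), so it lies in $\partial(I_1) \cup \partial(I_2) \subseteq J$. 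Therefore $\partial(I) \subseteq J$, so $I$ is $(r,J)$-contained; by maximality of $I_1$ and $I_2$ we get $I_1 = I = I_2$. This also shows closures are closed under union among $(r,J)$-contained sets of controlled size, which is the crux.

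For the third item, let $J' \subseteq J$ with $|J| \le cr/2$. Write $I' = \Cl_G(J')$ and $I = \Cl_G(J)$; both are unique by item~\ref{itm:closure unique} (note $|J'| \le |J| \le cr/2$). Consider $I \cup I'$. Its size is at most $|J|/c + |J'|/c \le 2|J|/c \le r$ by item~\ref{itm:closure small}, and by the same unique-neighbour argument as above, $\partial(I \cup I') \subseteq \partial(I) \cup \partial(I') \subseteq J \cup J' = J$. Hence $I \cup I'$ is $(r,J)$-contained, so by maximality of the closure $I$ we must have $I \cup I' = I$, i.e.\ $I' \subseteq I$, which is exactly $\Cl_G(J') \subseteq \Cl_G(J)$.

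The main obstacle — really the only non-bookkeeping point — is the boundary inclusion $\partial(A \cup B) \subseteq \partial(A) \cup \partial(B)$; one must be careful that "unique neighbour" refers to having exactly one neighbour in the set, and verify that a unique neighbour of $A \cup B$ genuinely remains a unique neighbour of the part it attaches to (its single edge into $A \cup B$ lands in, say, $A$, and since it has no other edge into $A \cup B \supseteq A$, it has exactly one edge into $A$). Everything else is a direct application of item~\ref{itm:closure small} together with the size hypothesis $|J| \le cr/2$ to stay within the range $r$ where the expansion guarantee applies.
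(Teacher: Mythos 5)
Your proof is correct and follows the same standard approach that the paper uses for the analogous weak-expander version (\cref{thm:weak closure properties}): item~\ref{itm:closure small} by applying expansion directly to the closure, items~\ref{itm:closure unique} and the monotonicity claim by taking unions, verifying $(r,J)$-containment via the size bound from item~\ref{itm:closure small} together with $|J|\le cr/2$ and the boundary inclusion $\partial(A\cup B)\subseteq \partial(A)\cup\partial(B)$, and then invoking maximality. The only cosmetic difference is in the third item, where the paper extends the $(r,J)$-contained set $\Cl_G(J')$ to a (necessarily unique) closure of $J$ rather than forming the union $\Cl_G(J)\cup\Cl_G(J')$ explicitly, but this is the same idea.
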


\section{Proof overview}\label{sec:proof-overview}

Here we give an overview of the proof of our main results.
We believe that the subsections about formula assignments and weak expanders can be of independent interest.

\subsection{Formula assignments}
Many proof complexity lower bounds employ techniques based on partial restrictions of variables. At a high level, these variable assignments simplify proofs of \enquote{low-complexity} (e.g., small or narrow) while preserving the hardness of the restricted instance, ultimately leading to a contradiction. Our approach is based on a similar idea, but we introduce a more general notion of \emph{formula assignments} to simplify the proof structure. These restrictions impose additional constraints on the proofs that cannot be captured by standard variable assignments.

A simple example of a formula assignment is a restriction that asserts that the disjunction $x \lor y$ (where $x$ and $y$ are variables) is true. Clearly, this cannot be expressed by a simple variable assignment since we do not specify the values of $x$ and $y$ individually. Due to the more complex structure of these restrictions, additional care is required to define them appropriately and maintain the proof structure. We begin by formally defining formula assignments and then demonstrate how \emph{syntactic} Frege proofs change under such restrictions.

As noted earlier, for simplicity, we only consider Frege proofs in the $\{\neg, \lor\}$ basis, but the results can also be extended to the $\{\neg, \lor, \land\}$ basis. For this reason, the definitions in this chapter are given in a general form that can be easily extended to support conjunction.

\begin{definition}[Formula assignment]
    Let $\sigma$ be a set of pairs $(C,\alpha)$, where $C$ is a formula and $\alpha \in \Q$. The domain of $\sigma$, denoted $\dom(\sigma)$, is defined as $\dom(\sigma) \coloneqq \{C : (C, \alpha) \in \sigma\}$. Given a formula $C$, we denote by $\sigma(C)$ the value $\alpha$ such that $(C, \alpha) \in \sigma$.
    We say that $\sigma$ is a \emph{formula assignment} if the following conditions hold:
    \begin{enumerate}
        \item for every $C \in \dom(\sigma)$, the top gate of $C$ is not a negation, i.e., $C$ is either a variable or the top gate of $D$ is a disjunction;
        \item if $(C, \alpha) \in \sigma$ and the top gate of $C$ is a disjunction, then $\alpha = 1$.
    \end{enumerate}
\end{definition}

Such $\sigma$ naturally induces a restriction on the formula in the bottom-up manner. We say that a disjunction $C = \bigvee_{i \in I} C_i$ is a \emph{weakening} of a disjunction $D = \bigvee_{j \in J} D_j$, if ${\{D_j\}}_{j \in J} \subseteq {\{C_i\}}_{i \in I}$.

\begin{definition}[Restricted formula]
    Let $C$ be a Boolean formula and $\sigma$ a formula assignment.
    Then the restriction of $C$ by $\sigma$, denoted $\restr{C}{\sigma}$, is defined inductively on the structure of $C$ as follows.
    \begin{enumerate}
        \item $C = x$ for some variable $x$: if $x \in \dom(\sigma)$, then $\restr{C}{\sigma} \coloneqq \sigma(x)$; otherwise we leave $C$ unchanged and define $\restr{C}{\sigma} \coloneqq x$.
        
        \item $\restr{(\neg C)}{\sigma} \coloneqq \neg (\restr{C}{\sigma})$. Note that constant values are propagated, i.e., if $\restr{C}{\sigma}$ is a constant $\alpha \in \Q$, then $\restr{(\neg C)}{\sigma}$ is defined as $1-\alpha$.

        \item $C$ is a disjunction $\bigvee_{i \in I} C_i$. Applying $\sigma$ inductively to the children of $C$, we obtain $C' \coloneqq \bigvee_{i \in I} (\restr{C_i}{\sigma})$. If there exists a disjunction $D$ in the domain of $\sigma$ such that $C'$ is a weakening of $D$, then by the definition of formula assignments, it must be that $(D, 1) \in \sigma$. In this case, we define $\restr{C}{\sigma} \coloneqq \sigma(D) = 1$. Otherwise, we set $\restr{C}{\sigma} \coloneqq C'$. Note that $C'$ is not necessarily a disjunction due to possible simplifications and constant propagation.
    \end{enumerate}
    
    For a set or sequence $\calC$ of formula, $\restr{\calC}{\sigma}$ is obtained by applying $\sigma$ to every formula in $\calC$. Observe that this is equivalent to considering $\calC$ as a (disjoint) multi-output Boolean circuit and applying $\sigma$ to it.
\end{definition}

Note that if $\dom(\sigma)$ contains only variables, this definition coincides with the conventional notion of restriction or (partial) variable assignment.

Both conditions we impose on formula assignments are quite natural. If $C = \neg D$ for some formula $D$, then the value of $C$ is uniquely determined by the value of $D$. Thus, it is redundant to assign a value to $C$ separately due to the bottom-up nature of the restriction. Similarly, for the second condition is similar, if $C = \bigvee_i D_i$, then $C = 0$ if and only if all $D_i = 0$. Therefore, it is also redundant to assign a value to $C$ separately.

Given a sequence of variable or formula assignments $\sigma_1, \ldots, \sigma_m$, we denote the successive application of these assignments to $C$ as $\restr{C}{\sigma_1, \ldots, \sigma_m}$, defined as:
\[
    \restr{\left( \ldots \restr{\left(\restr{C}{\sigma_1}\right)}{\sigma_2} \ldots \right)}{\sigma_m}.
\]

Our main result regarding formula assignments is the following theorem, which demonstrates how formula assignments alter Frege proofs. For a given formula $D$, we write $D^1$ to denote $D$ and $D^0$ to denote $\neg D$. Given a formula assignment $\sigma$, we define its \emph{axiom set} as $\calA_{\sigma} \coloneqq \{D^\alpha : (D, \alpha) \in \sigma\}$.

\begin{lemma}[restate=theoremflarestrictions,name=]\label{thm:semantic proofs from restriction}
    Let $\Pi = (T_1, \ldots, T_s)$ be a depth-$k$ Frege proof of a formula $C$ from a set of formula $\mathcal{F}$ and $\sigma$ a formula assignment.
    Then for every $i \in [s]$, the formula $\restr{T_i}{\sigma}$ is semantically implied by constantly many formulas from the set $\restr{\{T_1, \ldots, T_{i-1}\}}{\sigma} \cup \calA_\sigma$.
    Consequently, there exists a \emph{semantic} depth-$k$ Frege proof $\Pi'$ of $\restr{C}{\sigma}$ from $\restr{\mathcal{F}}{\sigma} \cup \calA_\sigma$ of width at most $\bigO{\max(w(\Pi), w(\mathcal{A}_\sigma))}$.
\end{lemma}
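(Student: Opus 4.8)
The plan is to verify, line by line, that each restricted formula $\restr{T_i}{\sigma}$ is semantically implied by constantly many earlier restricted lines together with the axioms $\calA_\sigma$, and then assemble these local semantic derivations into a single semantic depth-$k$ Frege proof. The first step is to handle the base cases: if $T_i \in \calF$, then $\restr{T_i}{\sigma} \in \restr{\calF}{\sigma}$, so it is (trivially) semantically implied by itself. If $T_i$ is a Frege axiom or follows from earlier lines $T_{j_1},\ldots,T_{j_b}$ by a sound Frege rule, the key claim is that restriction respects semantic implication \emph{in the presence of $\calA_\sigma$}: whenever $B_1,\ldots,B_b \vDash B_0$, we have $\restr{B_1}{\sigma},\ldots,\restr{B_b}{\sigma}, \calA_\sigma \vDash \restr{B_0}{\sigma}$. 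To see this, take any assignment $\beta$ to the (surviving) variables that satisfies all of $\restr{B_1}{\sigma},\ldots,\restr{B_b}{\sigma}$ and every formula in $\calA_\sigma$; I want to lift $\beta$ to an assignment $\gamma$ of the original variables that agrees with $\sigma$ and with $\beta$, so that $\gamma$ satisfies $B_1,\ldots,B_b$, hence $B_0$, and then show that $\beta$ satisfies $\restr{B_0}{\sigma}$.

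The crucial technical ingredient — and the step I expect to be the main obstacle — is the precise relationship between the semantic value of $\restr{C}{\sigma}$ under $\beta$ and the semantic value of $C$ under the lifted assignment $\gamma$. Because formula assignments act bottom-up and can collapse a disjunction $C'$ to $1$ whenever $C'$ is a weakening of some $D\in\dom(\sigma)$, the restriction is \emph{not} literally "substitute then evaluate"; so I need an induction on the structure of $C$ showing that if $\gamma$ extends $\beta$ and is consistent with $\sigma$ (meaning $\gamma$ makes every $D\in\dom(\sigma)$ take the value $\sigma(D)$), then $C$ evaluates under $\gamma$ to the same bit that $\restr{C}{\sigma}$ evaluates to under $\beta$. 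The variable and negation cases are immediate; the disjunction case splits according to whether the collapse rule fired. If it did not fire, $\restr{C}{\sigma}=\bigvee_i \restr{C_i}{\sigma}$ and the claim follows from the inductive hypothesis applied to each child. If it did fire, then $\restr{C}{\sigma}=1$ and $C'=\bigvee_i\restr{C_i}{\sigma}$ is a weakening of some $D=\bigvee_j D_j$ with $(D,1)\in\sigma$; since $\gamma$ is consistent with $\sigma$ we have $D$ true under $\gamma$, so some $D_j$ is true under $\gamma$, and because $\{D_j\}_j\subseteq\{\restr{C_i}{\sigma}\}_i$, by the inductive hypothesis the corresponding child of $C$ is true under $\gamma$, hence $C$ is true under $\gamma$, matching $\restr{C}{\sigma}=1$. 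The one subtlety is producing the lift $\gamma$ in the first place: given $\beta$ satisfying $\calA_\sigma$, define $\gamma$ on original variables by $\gamma(x)=\sigma(x)$ if $x\in\dom(\sigma)$ and $\gamma(x)=\beta(x)$ otherwise; then consistency of $\gamma$ with $\sigma$ on variable-domain elements is by definition, and consistency on disjunction-domain elements $D$ (which all carry value $1$) follows because $D^1\in\calA_\sigma$ is satisfied by $\beta$, combined with the same structural induction relating $\restr{D}{\sigma}$ under $\beta$ to $D$ under $\gamma$ — so one proves the lifting consistency and the value-preservation claim together in a single induction.

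Granting this, each step of the original proof $\Pi$ yields, after restriction, a semantic-rule application (with $\leq b + |\calA_\sigma|' = O(1)$ premises, where only the constantly many $\calA_\sigma$-formulas actually referenced in the lift are needed — in fact all of $\calA_\sigma$ can be carried, but one can trim to a constant number by noting that the Frege rule involves constantly many variables, hence constantly many relevant domain elements) from earlier restricted lines and $\calA_\sigma$; lines in $\calF$ become members of $\restr{\calF}{\sigma}$. Concatenating, we obtain a semantic derivation $\Pi'=(\restr{T_1}{\sigma},\ldots,\restr{T_s}{\sigma})$ of $\restr{C}{\sigma}$ from $\restr{\calF}{\sigma}\cup\calA_\sigma$. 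For the depth bound, observe that $\restr{\cdot}{\sigma}$ never increases depth (restriction only deletes subformulas, substitutes variables by constants, performs constant propagation, or replaces a subformula by $1$), so $\Pi'$ has depth at most $k$; one may prepend the axioms of $\calA_\sigma$ as initial lines. For the width bound, each $\restr{T_i}{\sigma}$ depends on a subset of the variables $T_i$ depends on, so $w(\restr{T_i}{\sigma})\le w(T_i)\le w(\Pi)$, while the prepended $\calA_\sigma$-lines have width $\le w(\calA_\sigma)$; hence the total width is $O(\max(w(\Pi), w(\calA_\sigma)))$, as claimed. The remaining routine bookkeeping is to confirm that the constant bounding the number of premises per semantic step depends only on the (constant) arity of Frege rules, which it does.
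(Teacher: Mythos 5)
Your proposal takes a genuinely different route from the paper. The paper works syntactically: it fixes Shoenfield's system, proves idempotence and distributivity lemmas for the restriction operation (Lemmas 5.1 and 5.2 in the paper), and then does a rule-by-rule case analysis, checking that each Shoenfield rule produces a valid semantic step after restriction, with at most two $\calA_\sigma$-axioms added per step. You instead propose a single semantic ``lifting'' lemma: for any $\beta$ satisfying $\calA_\sigma$, define $\gamma$ by $\gamma(x)=\sigma(x)$ for $x\in\dom(\sigma)$ and $\gamma(x)=\beta(x)$ otherwise, and show by structural induction that $C$ evaluates under $\gamma$ to exactly $\restr{C}{\sigma}[\beta]$ for every formula $C$. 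This lemma, as an assertion, is true and cleanly yields $\restr{T_{j_1}}{\sigma},\ldots,\restr{T_{j_b}}{\sigma}, \calA_\sigma \vDash \restr{T_i}{\sigma}$. (A small clean-up: in the weakening case you should use that $D$ is true under $\beta$, not under $\gamma$; since $D_j = \restr{C_{i_0}}{\sigma}$ for some child $C_{i_0}$ of $C$, and $\restr{C_{i_0}}{\sigma}$ contains no variable in $\dom(\sigma)$, $D_j[\beta]=D_j[\gamma]$, which together with the inductive hypothesis applied to $C_{i_0}$ gives $C[\gamma]=1$ without any circularity. Stated this way the induction really is on the structure of $C$, and there is no need for a joint ``lifting consistency'' claim about $\gamma$.)

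However, there is a genuine gap in your argument for the \emph{constant bound} on the number of $\calA_\sigma$-axioms per step, and this bound is the crux of the lemma (it is what allows \cref{thm:small width semantic derivation implies small width resolution} to be applied afterwards). Your lifting argument uses $\beta\vDash\calA_\sigma$ in full to construct $\gamma$; it gives an implication with the entire set $\calA_\sigma$ as side premises, which can be as large as the proof itself. Your remark that ``the Frege rule involves constantly many variables, hence constantly many relevant domain elements'' conflates the constantly many \emph{schematic} variables of the rule with the (arbitrarily large) formulas substituted for them. When you evaluate $T_{j_m}[\gamma]$ via the lifting lemma, every subformula of $T_{j_m}$ that collapses to $1$ by weakening can invoke a distinct axiom $D\in\calA_\sigma$, and $T_{j_m}$ can have unboundedly many such subformulas. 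The reason constantly many axioms actually suffice is the structural fact that the weakening collapses occurring \emph{inside} the substituted formulas $A,B,C,\ldots$ appear identically on both sides of the rule (e.g.\ $\restr{A}{\sigma}$ is computed once and reused), so they cancel in the semantic implication; only collapses at the \emph{outermost} disjunction of each premise or conclusion can matter, and there are at most $b+1=O(1)$ of those. Making this precise is exactly what the paper's distributivity/idempotence lemmas and the per-rule case analysis accomplish. Without that machinery, your argument yields the implication with $|\calA_\sigma|$ side premises, not $O(1)$. A secondary, fixable issue: the claim ``$\restr{T_i}{\sigma}$ depends on a subset of the variables $T_i$ depends on'' is false for the functional notion of dependence (e.g.\ $T_i = \neg(\neg(x\lor w)\lor\neg(x\lor\neg w))$ computes $x$ but $\restr{T_i}{\{(x\lor w,1)\}} = x\lor\neg w$ depends on $w$ as well); what is true and sufficient is that the \emph{syntactic} variable set of $\restr{T_i}{\sigma}$ is contained in that of $T_i$, which is what bounds the width.
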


\subsection{Weak expanders and unsatisfiable systems over \texorpdfstring{$\bbF_2$}{F2}}

We will only provide an overview of the main ideas we require for our application here since the results are mostly adaptations of similar results from standard expander theory. Full details can be found in \cref{app:weak-expanders}.

\begin{definition}[Weak expander]
    $G$ is an \emph{$(r,\Delta,c)$-weak (boundary) expander}, if the degrees of all vertices in $L$ are bounded by $\Delta$ and
    \begin{enumerate}
        \item for every $I \subseteq L$ with $\card{I} \le r/2$, $\card{\partial(I)} > 0$;
        \item for every $I \subseteq L$ with $r/2 < \card{I} \le r$, $\card{\partial(I)} \geq c \card{I}$.
    \end{enumerate}
    For brevity, we omit the word \enquote{boundary} and simply refer to such graphs as \emph{weak expanders}.
\end{definition}

This definition differs from the standard definition of expanders in that we require the expansion to hold only for sets of size $r/2 < \card{I} \le r$. For smaller sets, we only require the boundary to be non-empty. A weaker variant of \cref{thm:closure properties} holds for weak expanders. Specifically, the first item provides a weaker upper bound, but it remains sufficient for the subsequent items. This suffices for our purposes to establish the desired lower bounds on the resolution width. We do not require this theorem directly in our main proof, but state it here for comparison with the standard expander theory.

\begin{lemma}[restate=weakclosureproperties,name=cf.~\cite{DBLP:conf/focs/AlekhnovichR01}]\label{thm:weak closure properties}
    Let $G$ be an $(r,\Delta,c)$-weak expander.
    \begin{enumerate}
        \item\label{itm:weak closure small} If $I$ is a closure of $J$, then $\card{I} \leq \max(\frac{\card{J}}{c}, \frac{r}{2})$.
        \item\label{itm:weak closure unique} If $\card{J} \leq cr/2$, then the closure of $J$ is unique.
        \item If $\card{J} \leq cr/2$ and $J' \subseteq J$, then $\Cl_G(J') \subseteq \Cl_G(J)$.
    \end{enumerate}
\end{lemma}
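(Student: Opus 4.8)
The plan is to follow the standard Alekhnovich--Razborov argument behind \cref{thm:closure properties} essentially verbatim, the only real change being to keep track of which of the two size regimes in the weak-expander definition the relevant sets fall into. The single combinatorial fact that drives everything is: for any $I_1,I_2\subseteq L$, every unique neighbour of $I_1\cup I_2$ is already a unique neighbour of $I_1$ or of $I_2$, so $\partial(I_1\cup I_2)\subseteq\partial(I_1)\cup\partial(I_2)$. Hence a union of two $(r,J)$-contained sets is again $(r,J)$-contained as soon as the union has size at most $r$. The work is to check that, under the hypothesis $\card{J}\le cr/2$, the closures stay small enough for this to apply.

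For the first item, let $I$ be a closure of $J$. If $\card{I}\le r/2$ then $\card{I}\le\max(\card{J}/c,\,r/2)$ trivially. If instead $r/2<\card{I}\le r$, the second clause of the weak-expander definition applies and gives $c\card{I}\le\card{\partial(I)}\le\card{J}$, so $\card{I}\le\card{J}/c$; either way the claimed bound holds. For the second item, suppose $\card{J}\le cr/2$, so that $\card{J}/c\le r/2$ and hence item~1 forces every closure of $J$ to have size at most $r/2$. If $I_1,I_2$ are both closures of $J$, then $\card{I_1\cup I_2}\le r$ and $\partial(I_1\cup I_2)\subseteq\partial(I_1)\cup\partial(I_2)\subseteq J$, so $I_1\cup I_2$ is $(r,J)$-contained; maximality of $I_1$ and of $I_2$ then forces $I_1=I_1\cup I_2=I_2$.

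For the third item, assume $\card{J}\le cr/2$ and $J'\subseteq J$. Then $\card{J'}\le cr/2$ as well, so by item~1 both $\Cl_G(J)$ and $\Cl_G(J')$ have size at most $r/2$; moreover $\Cl_G(J')$ is $(r,J)$-contained since $\partial(\Cl_G(J'))\subseteq J'\subseteq J$. Therefore $\Cl_G(J)\cup\Cl_G(J')$ has size at most $r$ and boundary contained in $J$, i.e.\ it is $(r,J)$-contained, and by the uniqueness established in item~2 it must equal $\Cl_G(J)$. This gives $\Cl_G(J')\subseteq\Cl_G(J)$.

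I do not expect a genuine obstacle here; the argument is a routine adaptation. The point that needs care is the weakening of item~1 to the $\max$ form: unlike the boundary-expander case, we lose control of $\card{I}$ for small $J$, and one must verify this loss is harmless precisely because in the regime $\card{J}\le cr/2$ the bound still lands at $r/2$, which is exactly the threshold below which the union trick keeps the combined set $(r,J)$-contained (two sets of size $\le r/2$ have union of size $\le r$). The other thing to double-check is that the positivity clause (clause~1 of the weak-expander definition, $\card{\partial(I)}>0$ for $\card{I}\le r/2$) is not silently needed for any of these three items — it is only used downstream, e.g.\ to guarantee $\Cl_G(\varnothing)=\varnothing$ and in the resolution-width lower bound — so that the lemma genuinely follows from the stated hypotheses.
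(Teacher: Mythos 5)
Your proof is correct and follows essentially the same route as the paper's: case-split on $\card{I}$ against $r/2$ for item 1, then the union trick via $\partial(I_1\cup I_2)\subseteq\partial(I_1)\cup\partial(I_2)$ for items 2 and 3. Your item 3 is slightly more explicit than the paper's (which says "extend this set to a closure of $J$" without spelling out the size bound that makes the extension legal), but the content is identical.
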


The takeaway from this theorem is that for weak expanders, the closure operation is unique and monotone for sufficiently small sets $J$, and the size of the closure does not exceed $r/2$.

The following lemmas illustrate the motivation behind weak expanders. If we simply remove a set $J$ from a boundary expander, the resulting graph does not necessarily satisfy even the weak expansion properties. To address this, we further remove all vertices that appear in the closure of $J$. We adopt a similar notation from~\cite{DBLP:journals/eccc/Sofronova022} and define the \emph{extension} of $J$ as $\Ext_G(J) \coloneqq J \cup N(\Cl_G(J))$. The lemma below demonstrates that for small enough sets $J$, removing $\Ext_G(J)$ from a boundary expander $G$ makes the graph a weak expander. This allows us to establish lower bounds on the resolution width in subsequent steps.

\begin{lemma}[restate=weakboundaryfromclosure,name=]\label{lem:weak boundary from closure}
    Let $G = (L \sqcup R, E)$ be an $(r, \Delta, c)$-boundary expander. Then for every $J \subseteq R$ such that $\card{\Ext_G(J)} \le cr/4$, the graph $G' \coloneqq G \setminus \Ext_G(J)$ is an $(r, \Delta, c/2)$-weak expander.
\end{lemma}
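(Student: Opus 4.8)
The plan is to write $C \coloneqq \Cl_G(J)$ and $\widehat J \coloneqq \Ext_G(J) = J \cup N(C)$, so that the hypothesis becomes $\card{\widehat J} \le cr/4$, and then to verify the two defining conditions of a weak expander for $G' \coloneqq G \setminus \widehat J$ by a direct argument that splits on the size of the tested set $I$. First I would record three elementary facts. (i)~By the first item of \cref{thm:closure properties}, $\card C \le \card J/c \le \card{\widehat J}/c \le r/4$, and since a closure is $(r,J)$-contained we also have $\partial_G(C) \subseteq J$. (ii)~Every $v \in C$ satisfies $N(v) \subseteq N(C) \subseteq \widehat J$, so $C$ is deleted entirely when we remove $\widehat J$; in particular $C \cap L' = \varnothing$, where $L'$ denotes the surviving left vertices of $G'$. (iii)~Since $G'$ is the subgraph of $G$ induced on $L' \sqcup (R \setminus \widehat J)$ and $I \subseteq L'$, a right vertex is a unique neighbour of $I$ in $G'$ exactly when it is a unique neighbour of $I$ in $G$ and does not lie in $\widehat J$; hence $\partial_{G'}(I) = \partial_G(I) \setminus \widehat J$. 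The degree bound $\Delta$ for $G'$ is inherited from $G$.

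For the intermediate regime $r/2 < \card I \le r$ I would use only a crude count, with no reference to the closure: as $\card I \le r$, boundary expansion of $G$ gives $\card{\partial_G(I)} \ge c\card I$, whence
\[
    \card{\partial_{G'}(I)} = \card{\partial_G(I)} - \card{\partial_G(I) \cap \widehat J} \ge c\card I - \card{\widehat J} \ge c\card I - \tfrac{cr}{4} \ge \tfrac{c}{2}\card I,
\]
where the final inequality uses $\card I > r/2$. This gives the required $(c/2)$-expansion for sets of this size.

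For a nonempty $I \subseteq L'$ with $\card I \le r/2$ the count above is vacuous, and this is the step I expect to be the crux: here the closure does the work, through a contradiction argument. Suppose $\partial_{G'}(I) = \varnothing$, i.e.\ $\partial_G(I) \subseteq \widehat J = J \cup N(C)$. I claim $\partial_G(I \cup C) \subseteq J$. Take $u \in \partial_G(I \cup C)$ with unique neighbour $v$ in $I \cup C$; by (ii) $I$ and $C$ are disjoint, so either $v \in C$, in which case $u$ has exactly one neighbour in $C$ and so $u \in \partial_G(C) \subseteq J$; or $v \in I$, in which case $u$ has exactly one neighbour in $I$ and none in $C$, so $u \in \partial_G(I) \setminus N(C) \subseteq J$. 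Now $\card{I \cup C} \le r/2 + r/4 \le r$ and $\partial_G(I \cup C) \subseteq J$, so $I \cup C$ is $(r,J)$-contained; but $I$ nonempty and disjoint from $C$ forces $\card{I \cup C} > \card C$, contradicting the maximality of the closure. Hence $\partial_{G'}(I) \ne \varnothing$.

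Combining the two regimes with the inherited degree bound shows that $G'$ is an $(r,\Delta,c/2)$-weak expander. The only genuine subtlety is the small-set case, where one must combine the assumption $\partial_{G'}(I) = \varnothing$ with the closure property $\partial_G(C) \subseteq J$ to control $\partial_G(I \cup C)$, and where one needs the slack $\card C \le r/4$ — exactly what the hypothesis $\card{\Ext_G(J)} \le cr/4$ supplies — to keep $I \cup C$ within the range $\card{\cdot} \le r$ in which the maximality of the closure can be applied.
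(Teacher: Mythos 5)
Your proof is correct and follows essentially the same strategy as the paper: split on $\card{I}$, handle the intermediate regime by the crude count $\card{\partial_{G'}(I)} \ge \card{\partial_G(I)} - \card{\Ext_G(J)} \ge c\card I - cr/4 \ge (c/2)\card I$, and handle small sets by adjoining $\Cl_G(J)$ and contradicting maximality of the closure. The one place you diverge, slightly to your advantage, is in the small-set case: by splitting the unique neighbour $v \in I \cup C$ into $v \in C$ (giving $u \in \partial_G(C) \subseteq J$) and $v \in I$ (giving $u \in \partial_G(I)\setminus N(C) \subseteq J$), you obtain $\partial_G(I\cup C) \subseteq J$ directly, so $I\cup C$ is $(r,J)$-contained and the contradiction is immediate. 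The paper only argues $\partial_G(I\cup C) \subseteq \Ext_G(J)$, which gives $(r,\Ext_G(J))$-containment, and then must invoke the auxiliary fact $\Cl_G(\Ext_G(J)) = \Cl_G(J)$ (\cref{lem:no neighbours in closure}) to close the argument; your tighter bookkeeping makes that extra lemma unnecessary here.
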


Moreover, since in our application we will be removing such sets iteratively, we need to show that the graph remains a weak expander after each step. This is captured by the next lemma, which shows that this process can be viewed as a one-off operation, guaranteeing the weak expansion property throughout the iterative process.

\begin{lemma}[restate=closurestepbystep,name=]\label{lem:closure step by step}
    Let $G = (L \sqcup R, E)$ be an $(r, \Delta, c)$-boundary expander. Let $J \subseteq R$ with $\card{\Ext_G(J)} \le cr/4$ and define $G' = G \setminus \Ext_G(J)$. Then for every $J' \subseteq R \setminus \Ext_G(J)$ with $\card{\Ext_G(J \cup J')} \le cr/4$, we have $\Cl_G(J \cup J') = \Cl_G(J) \cup \Cl_{G'}(J')$. Consequently, $G' \setminus \Ext_{G'}(J') = G \setminus \Ext_G(J \cup J')$.
\end{lemma}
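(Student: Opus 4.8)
The plan is to track simultaneously, as a function of the removed set, the closure and the set of $L$-vertices that survive the removal. First the size bounds: since $J \subseteq \Ext_G(J)$ and $J, J' \subseteq J \cup J' \subseteq \Ext_G(J \cup J')$, the hypotheses give $\card{J}, \card{J'}, \card{J \cup J'} \le cr/4$. Hence, by \cref{thm:closure properties}, $\card{\Cl_G(J)} \le \card{J}/c \le r/4$ and $\card{\Cl_G(J \cup J')} \le r/4$; and since $\card{\Ext_G(J)} \le cr/4$, \cref{lem:weak boundary from closure} gives that $G' = G \setminus \Ext_G(J)$ is an $(r, \Delta, c/2)$-weak expander, so $\Cl_{G'}(J')$ is well defined and, by \cref{thm:weak closure properties}, $\card{\Cl_{G'}(J')} \le \max(2\card{J'}/c,\, r/2) = r/2$.

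The structural fact I use repeatedly is that every vertex of $\Cl_G(J)$ is deleted in passing to $G'$: if $v \in \Cl_G(J)$ then $N_G(v) \subseteq N_G(\Cl_G(J)) \subseteq \Ext_G(J)$, so $v \notin L(G')$. Write $K = \Cl_G(J \cup J')$, $K_0 = K \setminus L(G')$ and $K_1 = K \cap L(G')$, so $K = K_0 \sqcup K_1$. Monotonicity of the closure (\cref{thm:closure properties}) gives $\Cl_G(J) \subseteq K$, hence, by the structural fact, $\Cl_G(J) \subseteq K_0$; and $\Cl_{G'}(J') \subseteq L(G')$ by definition. I will establish: (i) $\Cl_G(J) \cup \Cl_{G'}(J')$ is $(r, J \cup J')$-contained in $G$; (ii) $K_0$ is $(r, J)$-contained in $G$; (iii) $K_1$ is $(r, J')$-contained in $G'$. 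Given these: (i), the standard fact that for $\card{W} \le cr/2$ the union of two $(r,W)$-contained sets is again $(r,W)$-contained, and the maximality of $K$ together imply $\Cl_G(J) \cup \Cl_{G'}(J') \subseteq K$, whence $\Cl_{G'}(J') \subseteq K \cap L(G') = K_1$; then (ii) with $\Cl_G(J) \subseteq K_0$ and the maximality of $\Cl_G(J)$ forces $K_0 = \Cl_G(J)$, and (iii) with $\Cl_{G'}(J') \subseteq K_1$ and the maximality of $\Cl_{G'}(J')$ forces $K_1 = \Cl_{G'}(J')$. Therefore $\Cl_G(J \cup J') = K = K_0 \cup K_1 = \Cl_G(J) \cup \Cl_{G'}(J')$, the first assertion.

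Each of (i)–(iii) is verified the same way: take a unique neighbour $u$ of the set in question, let $v$ be its sole neighbour there, and split on the location of $v$ and on whether $u \in \Ext_G(J) = J \cup N_G(\Cl_G(J))$. For (i): if $v \in \Cl_G(J)$, then $u$ has no neighbour in the disjoint set $\Cl_{G'}(J')$, so $u \in \partial_G(\Cl_G(J)) \subseteq J$; if $v \in \Cl_{G'}(J')$ and $u \in \Ext_G(J)$, then $u \in N_G(\Cl_G(J))$ is impossible as $u$ has no neighbour in $\Cl_G(J)$, so $u \in J$; if $v \in \Cl_{G'}(J')$ and $u \notin \Ext_G(J)$, then $u \in R(G')$ and, since $\Cl_{G'}(J') \subseteq L(G')$, the neighbours of $u$ in $\Cl_{G'}(J')$ are the same computed in $G$ or in $G'$, so $u \in \partial_{G'}(\Cl_{G'}(J')) \subseteq J'$; the size bound is $r/4 + r/2 \le r$. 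For (ii): $v \in K_0$ has $N_G(v) \subseteq \Ext_G(J)$, so $u \in J$ or $u \in N_G(\Cl_G(J))$, and in the latter case a neighbour $w \in \Cl_G(J) \subseteq K_0$ of $u$ must equal $v$, giving $v \in \Cl_G(J)$ and $u \in \partial_G(\Cl_G(J)) \subseteq J$; the size bound is $\card{K_0} \le \card{K} \le r/4$. For (iii): here $u \in R(G') = R \setminus \Ext_G(J)$, so $u \notin J$ and $u \notin N_G(\Cl_G(J)) = N_G(K_0)$ (using $K_0 = \Cl_G(J)$ from (ii)), so $u$ has no neighbour in $K_0$; since $K_1 \subseteq L(G')$, the neighbours of $u$ in $K_1$ agree in $G$ and $G'$, so $u$ has exactly one $G$-neighbour in $K = K_0 \cup K_1$, giving $u \in \partial_G(K) \subseteq J \cup J'$ and hence $u \in J'$; the size bound is $\card{K_1} \le r/4$.

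For the final sentence, observe that $(G \setminus W_1) \setminus W_2 = G \setminus (W_1 \cup W_2)$ for any $W_1, W_2 \subseteq R$: the $R$-sides agree, and an $L$-vertex $w$ survives both removals iff $N_G(w) \not\subseteq W_1$ and $N_G(w) \setminus W_1 \not\subseteq W_2$, i.e.\ iff $N_G(w) \not\subseteq W_1 \cup W_2$. With $W_1 = \Ext_G(J)$ and $W_2 = \Ext_{G'}(J') \subseteq R(G')$ this yields $G' \setminus \Ext_{G'}(J') = G \setminus (\Ext_G(J) \cup \Ext_{G'}(J'))$, so it remains to check $\Ext_G(J) \cup \Ext_{G'}(J') = \Ext_G(J \cup J')$. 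Using the first assertion, the right-hand side equals $J \cup J' \cup N_G(\Cl_G(J)) \cup N_G(\Cl_{G'}(J'))$, while the left-hand side equals $J \cup J' \cup N_G(\Cl_G(J)) \cup N_{G'}(\Cl_{G'}(J'))$; these coincide because $N_{G'}(\Cl_{G'}(J')) \subseteq N_G(\Cl_{G'}(J'))$ and, conversely, any $G$-neighbour $u$ of some $v \in \Cl_{G'}(J') \subseteq L(G')$ either lies in $\Ext_G(J)$ or lies in $R(G')$ with $uv \in E(G')$, hence in $N_{G'}(\Cl_{G'}(J'))$. I expect the main obstacle to be precisely this bookkeeping of which $L$-vertices survive in $G'$: because $\Cl_G(J)$ sits \emph{inside} $K$ rather than being disjoint from it, the boundary computations for $K_0$ and $K_1$ must be sequenced correctly (one needs $K_0 = \Cl_G(J)$ before handling $K_1$), and one must keep careful track of when a neighbourhood computed in $G$ coincides with the one computed in $G'$ — the weak-expander structure of $G'$ supplied by \cref{lem:weak boundary from closure} is what makes $\Cl_{G'}(J')$ meaningful and keeps it small enough for the size bound in (i).
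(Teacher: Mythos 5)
Your proof is correct, but it takes a genuinely different route from the paper's.

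The paper defines $I = \Cl_G(J) \cup \Cl_{G'}(J')$, shows that $I$ is $(r, \Ext_G(J \cup J'))$-contained, and then argues by contradiction that no vertex outside $I$ can be added to $\Cl_G(\Ext_G(J \cup J'))$; an auxiliary lemma ($\Cl_G(\Ext_G(W)) = \Cl_G(W)$, \cref{lem:no neighbours in closure}) is then invoked to transfer the conclusion from the extension back to $J \cup J'$ itself. The "Consequently" clause is not argued at all. Your argument instead splits $K = \Cl_G(J \cup J')$ into the deleted part $K_0$ and the surviving part $K_1$, and proves the three containments (i)–(iii) directly, pinning each of $K_0$ and $K_1$ between known closures by maximality. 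Two things you gain: first, you never need \cref{lem:no neighbours in closure}, because you prove the sharper fact that $I$ is $(r, J \cup J')$-contained (not merely $(r, \Ext_G(J \cup J'))$-contained) — your case analysis in (i) rules out $u \in N_G(\Cl_G(J)) \setminus J$ since such $u$ would have a second neighbour in $I$; second, your bookkeeping of which $L$-vertices survive makes the "Consequently" sentence an honest corollary rather than something left to the reader, and you actually supply the argument. One minor simplification you could have made: in (iii), the appeal to $K_0 = \Cl_G(J)$ is not needed — every $v' \in K_0$ has $N_G(v') \subseteq \Ext_G(J)$ by definition of $K_0$, and $u \in R(G') = R \setminus \Ext_G(J)$, so $u$ has no $G$-neighbour in $K_0$ regardless; this would let (iii) stand independently of (ii) and remove the sequencing concern you flagged.
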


Given a linear system $Ax=b$ over $\bbF_2$, we can associate it with a bipartite graph $G$ such that $A$ is its incidence matrix. In this graph, the left part corresponds to the equations in $Ax=b$, and the right part consists of the variables. We refer to $G$ as the \emph{incidence graph of $Ax=b$}. We say that a linear system $Ax=b$ over $\bbF_2$ is an \emph{$(r,\Delta,c)$-weakly expanding linear system} if its incidence graph is an $(r,\Delta,c)$-weak expander.
It is well-known that an unsatisfiable linear system requires large resolution width when its incidence graph is a boundary expander. However, it is easy to see that the same holds even when it is a weak expander. This observation is formalized in the following lemma.

\begin{lemma}\label{lm:weak-exp-width}
    Let $Ax = b$ be an unsatisfiable $(r,\Delta,c)$-weakly expanding linear system. Then the canonical CNF encoding of the system requires resolution width of at least $cr/2$.
\end{lemma}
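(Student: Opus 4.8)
The plan is to adapt the Ben-Sasson--Wigderson width-from-expansion argument to weak expanders by means of a potential function on clauses. For a clause $C$, define $\mu(C)$ to be the minimum size of a set $S$ of equations of $Ax=b$ such that the subsystem $\{A_i x = b_i : i \in S\}$ semantically implies $C$. This measure has three properties: (i) $\mu(C)\le 1$ on every clause $C$ of the canonical CNF encoding, since such a clause rules out exactly the falsifying assignments of a single equation; (ii) $\mu$ is subadditive along the resolution rule, because if $S_1$ implies $C_1$ and $S_2$ implies $C_2$ then $S_1\cup S_2$ implies $C_1\wedge C_2$ and hence their resolvent; and (iii) $\mu(\bot)>r/2$. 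Point (iii) is the only place the small-set guarantee of a weak expander is used: by induction on $\card{I}$, peeling off a vertex of $L$ that owns a boundary variable (one exists whenever $0<\card{I}\le r/2$, since $\partial(I)\neq\varnothing$), every nonempty subsystem of size at most $r/2$ is satisfiable, hence cannot imply $\bot$.

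Next I would prove the ``boundary sits inside the clause'' lemma: if $C$ satisfies $r/2<\mu(C)\le r$ and $S$ is a minimal implying subsystem with $\card{S}=\mu(C)$, then $\partial(S)\subseteq\vars(C)$, and therefore $w(C)\ge\card{\partial(S)}\ge c\,\mu(C)>cr/2$ by weak expansion --- note that $r/2<\card{S}\le r$ is exactly the regime where the expansion axiom applies. The inclusion is the standard local-flip argument: for $v\in\partial(S)$ with unique neighbour $i\in S$, minimality of $S$ yields an assignment $\rho$ satisfying $\{A_j x=b_j\}_{j\in S\setminus\{i\}}$ but falsifying $C$; flipping the value of $v$ leaves the other equations of $S$ untouched and toggles equation $i$, so one of $\rho$ and $\rho^{(v)}$ satisfies all of $S$, hence satisfies $C$. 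Since this cannot be $\rho$, it is $\rho^{(v)}$, and because $\rho$ and $\rho^{(v)}$ differ only in $v$, the clause $C$ must contain a literal on $v$.

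Finally I would locate a clause of the refutation $\Pi$ whose $\mu$-value lies in the window $(r/2,r]$ by the usual intermediate-value walk: start at $\bot$, where $\mu>r/2$, and repeatedly move to the parent with the larger $\mu$-value; the walk terminates at an axiom, where $\mu\le 1\le r/2$, and by subadditivity $\mu$ decreases by at most a factor of $2$ at each step. Hence the first clause $C$ on this path with $\mu(C)\le r$ also has $\mu(C)>r/2$ (its predecessor had $\mu>r$, and $\mu$ was at least halved). Combined with the previous paragraph, $w(C)>cr/2$, so every resolution refutation of the canonical encoding has width more than $cr/2$, as claimed.

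I expect the only genuinely delicate point to be the bookkeeping that yields $cr/2$ rather than $cr/4$: because a weak expander guarantees expansion only for sets of size strictly between $r/2$ and $r$, one must ensure the intermediate-value walk actually deposits a clause in that narrow band, which relies precisely on the combination of $\mu(\bot)>r/2$ with the fact that subadditivity forbids $\mu$ from jumping over the band in a single step. Everything else --- satisfiability of small subsystems, the local-flip inclusion, and subadditivity --- is routine and parallels the boundary-expander case, with each weak-expansion axiom invoked exactly once.
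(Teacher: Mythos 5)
Your proposal follows essentially the same route as the paper's (unstated) proof, which is simply the Ben--Sasson--Wigderson width-from-expansion argument with the observation that every subsystem of at most $r$ equations is satisfiable. There is, however, one small gap worth flagging. You establish only $\mu(\bot) > r/2$, invoking the weak-expander property solely for sets of size at most $r/2$, and then your walk argument asserts that ``its predecessor had $\mu > r$.'' If $\mu(\bot)$ lay in the interval $(r/2,r]$, the walk would begin with a clause already in the target window, $\bot$ itself, which has no predecessor in the walk and has width $0$, so the final conclusion $w(\bot) > cr/2$ would be false and the argument as written collapses. The fix is immediate and is exactly the observation the paper singles out: a weak expander guarantees $\partial(I)\ne\varnothing$ for \emph{all} nonempty $I$ with $\lvert I\rvert \le r$, not just $\lvert I\rvert \le r/2$, since when $r/2 < \lvert I\rvert \le r$ the expansion condition gives $\lvert\partial(I)\rvert \ge c\lvert I\rvert > 0$. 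Your peeling argument therefore shows that every subsystem of size at most $r$ is satisfiable, which yields $\mu(\bot) > r$, and the intermediate-value walk then genuinely deposits a non-root clause in $(r/2,r]$. (Alternatively, one can keep $\mu(\bot)>r/2$ and observe that $\mu(\bot)\le r$ would force $w(\bot)>cr/2>0$ via your boundary-inclusion lemma, a contradiction with $w(\bot)=0$, so $\mu(\bot)>r$ anyway; but you must say this.) The boundary-inclusion lemma, the subadditivity of $\mu$, and the final bound $w(C)\ge \lvert\partial(S)\rvert\ge c\,\mu(C) > cr/2$ are all correct as you stated them.
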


The proof of this lemma is identical to the standard argument for expanders, observing that every subsystem of size at most $r$ is satisfiable since every set of $r$ rows has a non-empty boundary.

\subsection{Live and forced formulas}\label{overview:section live forced}

The idea behind our proof of the main result is to iteratively restrict the proof lines by fixing the values of live and forced subformulas. Here we give the definitions of these concepts for $\ACz$-Frege refutations and provide an overview of how they are used in the proof.

Let $L$ be a linear system over $\bbF_2$ with $m$ equations $\ell_i(x) = b_i$ for $i \in [m]$. For a subset of indices $I \subseteq [m]$, we denote by $L^I$ the subsystem of $L$ consisting of the equations indexed by $I$, i.e., the linear system $\bigwedge_{i \in I} (\ell_i(x) = b_i)$. Given a variable assignment $\rho$ to the variables of $L$ that does not falsify any equation in $L$, we denote by $\restr{L}{\rho}$ the linear system obtained by applying $\rho$ to $L$, with the trivial equations $0 = 0$ removed.

We will define the notions of live and forced formulas using the closure operation applied to the variables of such formulas. Given a formula $C$, we denote by $\vars(C)$ the set of variables in $C$. For brevity, we will write $\Cl(C)$ instead of $\Cl(\vars(C))$ to represent the closure of the variables of $C$.

\begin{definition}[Live and forced formulas]
    Given an $(r, \Delta, c)$-weakly expanding linear system $L$ and a formula $C$ with width $w(C) \le cr/2$, we say that $C$ is \emph{live w.r.t. $L$} if $C$ is non-constant on the solution set of the linear system $L^{\Cl(C)}$. Otherwise, if $C$ becomes a constant $\alpha \in \Q$ on this set, then $C$ is \emph{forced to $\alpha$ w.r.t. $L$}.
\end{definition}

Intuitively, while forced formulas do not necessarily compute constant functions, they are \enquote{provably} forced to specific values. This intuition is formalized in the following proposition, which directly follows from the definition of forced formulas and completeness.

\begin{proposition}\label{proposition new axioms}
    Let $C$ be a formula that is forced to some value $\alpha \in \Q$ w.r.t. a linear system $L$. Then $C^\alpha$ can be obtained from $L$ by a semantic derivation of width at most $\card{\Ext(\vars(C))}$.
\end{proposition}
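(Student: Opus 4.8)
The plan is to write the semantic derivation down explicitly, realizing the semantic implication $L^{\Cl(C)} \vDash C^\alpha$ -- which is precisely what ``$C$ forced to $\alpha$'' says -- as a short chain of applications of the semantic rule. Set $V \coloneqq \Ext(\vars(C)) = \vars(C) \cup N(\Cl(C))$. Two observations keep the width under control. First, $N(\Cl(C))$ is by definition the set of all variables occurring in the equations indexed by $\Cl(C)$, so each of those equations is a constraint on variables lying in $N(\Cl(C)) \subseteq V$; consequently the solution set $S^{\ast}$ of the subsystem $L^{\Cl(C)}$ depends only on variables in $V$. Second, $w(C^\alpha) = w(C) \le \card{\vars(C)} \le \card{V}$. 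Hence every set we write down will have width at most $\card{V} = \card{\Ext(\vars(C))}$.

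Concretely, enumerate $\Cl(C) = \{i_1, \dots, i_t\}$ and, for $i \in \Cl(C)$, let $S_i \subseteq \Q^n$ be the solution set of the $i$-th equation of $L$; this is an axiom from $L$ whose width is at most $\card{N(\Cl(C))} \le \card{V}$. Form the chain $U_1 \coloneqq S_{i_1}$ and $U_{j+1} \coloneqq U_j \cap S_{i_{j+1}}$ for $1 \le j < t$: each $U_{j+1}$ follows from the two preceding lines $U_j$ and $S_{i_{j+1}}$ by the semantic rule (since $U_j \cap S_{i_{j+1}} \subseteq U_{j+1}$), and it depends only on $\bigcup_{l \le j+1} \vars(\ell_{i_l}) \subseteq N(\Cl(C))$. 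Thus $U_t = S^{\ast}$. Since $C$ is forced to $\alpha$ with respect to $L$, it is constantly $\alpha$ on $S^{\ast}$, i.e.\ $S^{\ast} \subseteq S_{C^\alpha}$; so one further application of the semantic rule, with premises $U_t$ and $S_{i_1}$, yields $C^\alpha$ (read as the set $S_{C^\alpha}$), because $U_t \cap S_{i_1} = S^{\ast} \subseteq S_{C^\alpha}$. Every rule used has at most two premises and every line has width at most $\card{\Ext(\vars(C))}$, which is the claimed bound. The degenerate case $t = 0$ is handled separately: then $S^{\ast} = \Q^n$, so $C^\alpha$ is a tautology, $\card{\Ext(\vars(C))} = \card{\vars(C)}$, and the single line $S_{C^\alpha} = \Q^n$ is derivable from the empty set of premises.

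There is no deep obstacle here: the statement is a direct unfolding of the definition of a forced formula together with the semantic rule. The only point requiring attention is the width bound, and it comes down precisely to the containment property of the closure, which confines all of $\ell_{i_1}, \dots, \ell_{i_t}$ to $N(\Cl(C))$ and therefore prevents any partial intersection from ever acquiring a variable outside $\vars(C) \cup N(\Cl(C)) = \Ext(\vars(C))$.
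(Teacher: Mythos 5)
Your proof is correct and follows the approach the paper intends: the paper gives only a one-line justification (``directly follows from the definition of forced formulas and completeness''), and your explicit chain $U_1 \subseteq \Q^n, U_2, \ldots, U_t = S^\ast$ of successive intersections of the equation solution sets, followed by one more semantic step using $S^\ast \subseteq S_{C^\alpha}$, is precisely that justification unfolded. The width control you give---every intermediate line is a solution set of a subsystem of $L^{\Cl(C)}$ and so depends only on variables in $N(\Cl(C)) \subseteq \Ext(\vars(C))$, while the final line depends only on $\vars(C) \subseteq \Ext(\vars(C))$---is exactly where the $\card{\Ext(\vars(C))}$ bound comes from.
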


Our definitions of live and forced formulas extends the notion of local consistency from~\cite{DBLP:journals/cc/Alekhnovich11}. In particular, the following generalization of Lemma~3.4 from~\cite{DBLP:journals/cc/Alekhnovich11} holds. This lemma demonstrates that if a formula $C$ is consistent with a linear subsystem restricted to the equations from $\Cl(C)$, then it remains consistent with any linear subsystem of sufficiently small size.

\begin{lemma}[restate=satremainssat,name=]\label{lm:sat remains sat}
    Let $L$ be an $(r, \Delta, c)$-weakly expanding linear system with $m$ equations and $C$ a formula of width at most $cr/2$. If $(C = \alpha) \land L^{\Cl(C)}$ is satisfiable for some $\alpha \in \Q$, then for every index set $I \subseteq [m]$ of size at most $r/2$, $(C=\alpha) \land L^I$ is also satisfiable.
\end{lemma}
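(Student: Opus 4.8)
The plan is to prove the contrapositive-flavoured statement directly: assume $(C=\alpha)\land L^{\Cl(C)}$ is satisfiable, fix an index set $I$ with $\card{I}\le r/2$, and build a satisfying assignment for $(C=\alpha)\land L^I$. The key structural fact is that the variables of $C$ already "see" their own closure, so the equations in $L^I$ that genuinely constrain $\vars(C)$ are, up to the closure, already inside $L^{\Cl(C)}$; the remaining equations of $L^I$ can be satisfied by a separate assignment because they form a small — hence, via weak expansion, satisfiable and essentially independent — subsystem. Concretely, I would first take a solution $\beta$ of $(C=\alpha)\land L^{\Cl(C)}$ and restrict $L$ by $\beta$ on the variables in $\vars(\Cl(C))$ (the variables appearing in the equations indexed by $\Cl(C)$); call the resulting system $L' \coloneqq \restr{L}{\beta}$ on the surviving variables. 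Since $\beta$ satisfies $L^{\Cl(C)}$, no equation is falsified, so $L'$ is well-defined, and one should check (mirroring Lemma~3.4 of~\cite{DBLP:journals/cc/Alekhnovich11}) that $L'$ is still a weakly expanding system, or at least that every small subsystem of $L'$ remains satisfiable — this uses the monotonicity and size bounds for the closure from \cref{thm:weak closure properties} together with \cref{lem:closure step by step}, which lets us treat the removal of $\Ext(\vars(C))$ as a one-off operation that preserves weak expansion with parameter $c/2$.

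Next, consider the subsystem $L^I$ split into those equations whose support (after substituting $\beta$) became trivial — these are automatically satisfied — and the rest, which live in $L'$ and are indexed by some $I' \subseteq I$ with $\card{I'} \le \card{I} \le r/2$. By the first bullet of the weak expander definition applied to $G'$ (the incidence graph of $L'$), every set of at most $r/2$ equations has a non-empty boundary, and the standard peeling argument then shows that any such subsystem is satisfiable: repeatedly pick a boundary variable, which appears in exactly one remaining equation, and use it to satisfy that equation, removing both. This yields an assignment $\gamma$ to the surviving variables satisfying $L^{I'}$ (and hence all of $L^I$ once combined with $\beta$). Finally I would glue $\beta$ and $\gamma$: on the variables of $\vars(\Cl(C))$ use $\beta$, on the disjoint surviving variables use $\gamma$, and extend arbitrarily on any untouched variables. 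Because $C$ depends only on $\vars(C) \subseteq \vars(\Cl(C))$ (indeed $\partial(\vars(C)) \subseteq$ the neighbourhood, but crucially $\vars(C)$ is inside the closure's vertex set), the value of $C$ under the glued assignment is still $\alpha$, and all equations of $L^I$ are satisfied.

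The main obstacle, as usual with these closure arguments, is bookkeeping the interface between the two assignments: one must verify that the variables "owned" by $\Cl(C)$ and those remaining in $L'$ interact correctly — specifically, that after substituting $\beta$, the equations in $I \setminus \Cl(C)$ do not re-impose constraints on the $\beta$-assigned variables in a way that could conflict, and that $L'$ genuinely inherits the "small subsystems are satisfiable" property. This is exactly where the weak-expansion parameters matter: we need $\card{\Ext(\vars(C))} \le cr/4$ (guaranteed since $w(C) \le cr/2$ forces $\card{\vars(C)} \le cr/2$, hence by \cref{thm:weak closure properties}\ref{itm:weak closure small} the closure has size at most $\max(cr/2/c, r/2) = r/2$, and $\Ext$ adds at most a $\Delta$ factor on the neighbourhood side — one should double-check the constants line up, possibly shrinking $r$ by a constant), so that \cref{lem:weak boundary from closure} applies and $L'$ is an $(r,\Delta,c/2)$-weak expander. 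Once that is in place, the size bound $\card{I} \le r/2$ is precisely what lets the non-empty-boundary clause of the weak expander definition drive the peeling argument, with no need for the stronger $c\card{I}$ expansion. I would also need to confirm that $I' \subseteq I$ still satisfies $\card{I'} \le r/2$ with respect to $G'$'s parameter $r$ — which it does since $r$ is unchanged — and that trivialised equations are genuinely trivial (i.e. $\beta$ does not falsify them), which follows because $\beta$ extends a solution of the full closure subsystem and any equation whose support lies entirely in $\vars(\Cl(C))$ must already be in $\Cl(C)$ by maximality of the closure.
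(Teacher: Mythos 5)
Your approach is genuinely different from the paper's, and it has a real gap in the middle that keeps it from going through as written.

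The paper argues by contradiction using a purely algebraic certificate: if $(C=\alpha)\land L^I$ were unsatisfiable, then replacing $(C=\alpha)$ by the stronger linear system $\calF = \bigwedge_{y\in\vars(C)}(y=\tau(y))$ (fixing $\vars(C)$ to the projection of a satisfying assignment of $(C=\alpha)\land L^{\Cl(C)}$) still gives an unsatisfiable \emph{linear} system $\calF\land L^I$; Fredholm's alternative over $\bbF_2$ then yields $J\subseteq\vars(C)$ and $I'\subseteq I$ whose equations sum to $0=1$, which forces $\partial(I')\subseteq\vars(C)$ and hence $I'\subseteq\Cl(C)$ by $(r,\vars(C))$-containment, contradicting satisfiability of $\calF\land L^{\Cl(C)}$. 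This sidesteps any reasoning about restricted graphs, extensions, or preservation of expansion.

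Your proposal instead builds a satisfying assignment constructively by restricting to $\beta$ on $\Ext(\vars(C))$ and peeling the rest. The idea is sound in spirit, but the middle of the argument does not work as stated. Both \cref{lem:weak boundary from closure} and \cref{lem:closure step by step} are stated only for $(r,\Delta,c)$-\emph{boundary} expanders, whereas the hypothesis of \cref{lm:sat remains sat} is that $L$ is merely a \emph{weak} expander, so neither lemma applies to the graph you start from. Moreover, even if they did apply, the quantitative hypothesis $\card{\Ext_G(\vars(C))}\le cr/4$ fails under the stated bound $w(C)\le cr/2$: you have $\card{\Cl(C)}\le r/2$, so $\card{\Ext(\vars(C))}\le cr/2 + \Delta r/2$, which is much larger than $cr/4$ for any $\Delta>0$; ``shrinking $r$ by a constant'' would change the statement of the lemma. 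What you actually need is weaker and can be argued directly: for any $I'\subseteq[m]\setminus\Cl(C)$ with $\card{I'}\le r/2$, the set $I'$ must have a unique neighbour outside $\Ext(\vars(C))$, because otherwise $\partial_G(I'\cup\Cl(C))\subseteq\vars(C)$ and $\card{I'\cup\Cl(C)}\le r$, so $I'\cup\Cl(C)$ is $(r,\vars(C))$-contained and by maximality $I'\subseteq\Cl(C)$, a contradiction. That recovers the non-empty-boundary property you use for peeling without invoking the expander-preservation machinery. Your check that equations of $I$ fully supported in $\Ext(\vars(C))$ must lie in $\Cl(C)$ is correct and necessary. Net: the peeling route is viable but requires reproving, rather than citing, the boundary property of the restricted graph; the paper's certificate argument is both shorter and avoids the constants issue entirely, because it never needs the restricted graph to have any expansion at all.
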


The following lemma ensures that the property of being forced to a specific value is robust under both variable and formula assignments, meaning that forced formulas remain forced to the same value under these restrictions.

\begin{lemma}[restate=forcedremainsforced,name=]\label{lm:forced remains forced}
    Let $L$ be an $(r,\Delta,c)$-weakly expanding linear system and $C$ a formula forced to some value $\alpha \in \Q$ w.r.t. $L$.
    \begin{enumerate}
        \item If $\rho$ is a variable assignment such that $\restr{L}{\rho}$ is also $(r,\Delta,c)$-weakly expanding, then $\restr{C}{\rho}$ is forced to $\alpha$ w.r.t. $\restr{L}{\rho}$.
        \item If $D$ is a subformula of $C$ that is forced to some value $\beta \in \Q$ w.r.t. $L$ such that $\{(D, \beta)\}$ is a formula assignment, then the formula $C' \coloneqq \restr{C}{\{(D, \beta)\}}$ is forced to $\alpha$ w.r.t. $L$.
    \end{enumerate}
\end{lemma}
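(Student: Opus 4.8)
The plan is to derive both parts from \cref{lm:sat remains sat} together with the bound $\card{\Cl_G(\vars(B))}\le r/2$, which holds for any formula $B$ with $w(B)\le cr/2$ by the first item of \cref{thm:weak closure properties} (since then $w(B)/c\le r/2$); throughout, $G$ denotes the incidence graph of $L$. Two facts will be used repeatedly. First, since $L$ is a weak expander, every subsystem of $L$ on at most $r$ equations is satisfiable (each such set of rows has a non-empty boundary, cf.\ the remark after \cref{lm:weak-exp-width}). Second, combining this with the previous bound, for a formula $B$ with $w(B)\le cr/2$ the statement ``$B$ is forced to $\alpha$ w.r.t.\ $L$'' is equivalent to ``$(B=1-\alpha)\wedge L^{\Cl_G(\vars(B))}$ is unsatisfiable'', and the same equivalence holds over any weakly expanding linear system in place of $L$.

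\emph{Part~(1).} Write $L'\coloneqq\restr{L}{\rho}$, $C'\coloneqq\restr{C}{\rho}$, and let $G'$ be the incidence graph of $L'$. A variable restriction adds no variables, so $w(C')\le w(C)\le cr/2$, and ``$C'$ is forced to $\alpha$ w.r.t.\ $L'$'' is equivalent to the unsatisfiability of $(C'=1-\alpha)\wedge (L')^{\Cl_{G'}(\vars(C'))}$. Suppose for contradiction this system is satisfiable. Since $L'$ is an $(r,\Delta,c)$-weak expander and $w(C')\le cr/2$, \cref{lm:sat remains sat} shows that $(C'=1-\alpha)\wedge (L')^{I}$ is satisfiable for \emph{every} index set $I$ of equations of $L'$ with $\card{I}\le r/2$. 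Apply this with $I$ the set of equations of $L'$ arising as the nontrivial $\rho$-restrictions of the equations of $L$ indexed by $\Cl_G(\vars(C))$; this is admissible since $\card{I}\le\card{\Cl_G(\vars(C))}\le r/2$. Let $y$ be a satisfying assignment and set $z\coloneqq y\cup\rho$. Then $C(z)=C'(y)=1-\alpha$ by soundness of variable restriction, and $z\models L^{\Cl_G(\vars(C))}$: an equation indexed by $\Cl_G(\vars(C))$ that $\rho$ trivialises is satisfied by $\rho$ (which falsifies no equation of $L$, as is required for $\restr{L}{\rho}$ to be defined), while every other such equation lies in $I$ and is satisfied by $y$, hence by $z$. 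This contradicts $C$ being forced to $\alpha$ w.r.t.\ $L$.

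\emph{Part~(2).} Write $\sigma\coloneqq\{(D,\beta)\}$ and $C'\coloneqq\restr{C}{\sigma}$. A formula assignment adds no variables, so $\vars(C')\subseteq\vars(C)$ and $w(C')\le cr/2$; also $\vars(D)\subseteq\vars(C)$ since $D$ is a subformula of $C$. Put $I_1\coloneqq\Cl_G(\vars(C))$, so $\card{I_1}\le r/2$, and by monotonicity of the closure (last item of \cref{thm:weak closure properties}, using $w(C)\le cr/2$) we have $\Cl_G(\vars(C'))\subseteq I_1$ and $\Cl_G(\vars(D))\subseteq I_1$. The crux is to show $(C'=1-\alpha)\wedge L^{I_1}$ is unsatisfiable: granting this, \cref{lm:sat remains sat} applied contrapositively to $C'$ (with value $1-\alpha$ and index set $I_1$) gives that $(C'=1-\alpha)\wedge L^{\Cl_G(\vars(C'))}$ is unsatisfiable, which, since $L^{\Cl_G(\vars(C'))}$ is satisfiable, is exactly the statement that $C'$ is forced to $\alpha$ w.r.t.\ $L$. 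To prove the unsatisfiability, suppose $z\models(C'=1-\alpha)\wedge L^{I_1}$. Then $z\models L^{\Cl_G(\vars(D))}$, so $D(z)=\beta$ because $D$ is forced to $\beta$ w.r.t.\ $L$. I will also use the pointwise soundness of formula assignments: if $D(z)=\sigma(D)$ then $\restr{C}{\sigma}(z)=C(z)$; this is a short structural induction on $C$, the only non-routine case being the weakening rule, where one uses that a weakening of $D$ evaluates to $1$ whenever $D$ does. Hence $C(z)=C'(z)=1-\alpha$, whereas $z\models L^{I_1}=L^{\Cl_G(\vars(C))}$ forces $C(z)=\alpha$ --- a contradiction.

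I expect Part~(1) to require the most care. The closure, and hence the notions of live and forced, are defined relative to the incidence graph, which changes under $\rho$: variables disappear and equations become trivial, so directly relating $\Cl_{G'}(\vars(\restr{C}{\rho}))$ to $\Cl_G(\vars(C))$ is awkward --- in particular, simplifications inside $\restr{C}{\rho}$ may delete variables of $C$ and thereby enlarge boundaries in $G'$. The argument above sidesteps this comparison: via \cref{lm:sat remains sat} it re-satisfies exactly the $L'$-image of $\Cl_G(\vars(C))$ while keeping $\restr{C}{\rho}$ at its ``wrong'' value, and then lifts the assignment back through $\rho$, relying on the fact that $\rho$ satisfies precisely the equations it trivialises. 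In Part~(2) the only non-bookkeeping ingredient is the pointwise soundness lemma for formula assignments --- which is where both defining conditions on formula assignments enter --- and it is naturally proved once in the section on formula assignments and reused here.
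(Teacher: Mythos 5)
Your proposal is correct and follows essentially the same route as the paper: both parts reduce to the observation that an unsatisfiable system of the form $(C=1-\alpha)\wedge L^{\Cl(C)}$ stays unsatisfiable under the restriction, combined with \cref{lm:sat remains sat} (you invoke it directly via contradiction; the paper uses its contrapositive, \cref{lm:unsat implies forced}). In Part~(2) you make explicit a pointwise-soundness step for formula assignments that the paper leaves implicit, but the underlying argument is the same.
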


The next lemma shows that the property of being forced propagates through the structure of a formula.

\begin{lemma}[restate=howforcednesspropagates,name=]\label{lm:how forcedness propagates}
    Let $C$ be a formula forced to a constant $\alpha \in \Q$ w.r.t. an $(r, \Delta, c)$-weakly expanding linear system $L$.
    \begin{enumerate}
        \item If $C = \neg D$ for some formula $D$, then $D$ is forced to $1-\alpha$ w.r.t. $L$.
        \item If $C = \bigvee_{j \in J} C_j$ and $\alpha = 0$, then each $C_j$ is forced to $0$ w.r.t. $L$.
    \end{enumerate}
\end{lemma}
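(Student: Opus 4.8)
The plan is to prove the two items separately: item~1 is essentially bookkeeping about variables, whereas item~2 carries the real content and is where we invoke the consistency-transfer statement \cref{lm:sat remains sat}.

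For item~1, suppose $C=\neg D$ is forced to $\alpha$ w.r.t.\ $L$. Since $\vars(\neg D)=\vars(D)$, we have $\Cl(D)=\Cl(C)$, and the Boolean function of $D$ depends on exactly the same variables as that of $C$, so $w(D)=w(C)\le cr/2$ and the notions of live/forced are well defined for $D$. Negation inverts the truth value pointwise, so $C\equiv\alpha$ on the solution set of $L^{\Cl(C)}=L^{\Cl(D)}$ holds if and only if $D\equiv 1-\alpha$ on that same set; hence $D$ is forced to $1-\alpha$ w.r.t.\ $L$.

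For item~2, write $C=\bigvee_{j\in J}C_j$, forced to $0$, i.e.\ $C$ is constantly $0$ on the solution set of $L^{\Cl(C)}$. Since $\vars(C_j)\subseteq\vars(C)$, each $C_j$ also satisfies the width bound, so live/forced are defined for it; and by the first item of \cref{thm:weak closure properties}, together with the width bound on $C$, we get $\card{\Cl(C)}\le r/2$, so $\Cl(C)$ is an admissible index set for \cref{lm:sat remains sat}. Suppose, for contradiction, that some $C_j$ is not forced to $0$. A formula fails to be constantly $0$ on a set precisely when it attains the value $1$ somewhere in it, so there is a solution of $L^{\Cl(C_j)}$ on which $C_j=1$; equivalently, $(C_j=1)\land L^{\Cl(C_j)}$ is satisfiable. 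Applying \cref{lm:sat remains sat} with $C_j$ playing the role of $C$, with $\alpha=1$, and with index set $I=\Cl(C)$ (which has size at most $r/2$) shows that $(C_j=1)\land L^{\Cl(C)}$ is satisfiable; let $z$ be a witnessing assignment. Then $z$ is a solution of $L^{\Cl(C)}$ with $C(z)=\bigvee_{j'\in J}C_{j'}(z)=1$, contradicting the hypothesis that $C$ is constantly $0$ there. Hence every $C_j$ is forced to $0$.

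The one genuine subtlety --- and the reason item~2 is not immediate --- is that ``$C$ is forced to $0$'' only controls $C$ (hence each $C_j$) on the solutions of $L^{\Cl(C)}$, whereas ``$C_j$ is forced to $0$'' is about the solutions of $L^{\Cl(C_j)}$; since $\vars(C_j)\subseteq\vars(C)$ the system $L^{\Cl(C_j)}$ has fewer equations and thus a larger solution set, so the value $0$ cannot simply be pushed downwards from $C$ to $C_j$. The fix is to reason contrapositively and transfer consistency in the opposite direction: a hypothetical assignment witnessing $C_j=1$ over the few equations indexed by $\Cl(C_j)$ can be completed to satisfy the larger system $L^{\Cl(C)}$, which is exactly the content of \cref{lm:sat remains sat}. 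Beyond this, the only care needed is to keep all index sets within the size range ($\le r/2$) for which \cref{lm:sat remains sat} and the closure bound of \cref{thm:weak closure properties} apply.
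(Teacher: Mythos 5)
Your proof is correct and follows essentially the same route as the paper: item~1 is the same bookkeeping observation, and for item~2 the paper argues in the contrapositive direction (since $(C=1)\wedge L^{\Cl(C)}$ is unsatisfiable, so is $(C_j=1)\wedge L^{\Cl(C)}$, and then invokes \cref{lm:unsat implies forced}), which is exactly the contrapositive packaging of the \cref{lm:sat remains sat} step you apply directly. The subtlety you flag --- that $\Cl(C_j)$ indexes a smaller subsystem than $\Cl(C)$, so forcedness of $C_j$ cannot be read off pointwise from forcedness of $C$ --- is precisely what \cref{lm:sat remains sat}/\cref{lm:unsat implies forced} resolves, and your size check $\card{\Cl(C)}\le r/2$ is the same hypothesis the paper relies on.
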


\subsection{Proof regularization}

We now have all the necessary tools to describe our main result in more detail.

Given a formula $C$, the \emph{in-degree of $C$} is defined as the number of children of the top gate of $C$ in its merged form.

\begin{definition}[$d$-regular proofs]
    Let $\Pi$ be an $F_k$-derivation and $d = (d_0, d_1, \ldots, d_k)$ a vector of threshold parameters with $d_0 = 1$. We say that $\Pi$ is \emph{$d$-regular} if, for every $i \in \{0, \ldots, k\}$ and every subformula $C$ of $\Pi$ at level $i$, the in-degree of $C$ is at most $d_i$.
\end{definition}

The main technical result is the following lemma, which shows how to regularize a proof by iteratively restricting the proof lines based on the concepts of live and forced subformulas. It utilizes the greedy technique that is similar to deterministic restrictions~\cite{DBLP:conf/stoc/ChaudhuriR96} and the size-width trade-off for resolution~\cite{Ben-SassonW01}.

\begin{lemma}[restate=mainlemma,name=Regularization Lemma]\label{lm:main lemma}
    Let $L$ be an unsatisfiable $(r, \Delta, c)$-expanding linear system in $n$ variables with $r = \Theta(n)$ and $k$ a positive integer.
    Let $m = n^{1/(2^k + 1)}$ and define the vector of threshold parameters $d = (d_0, d_1, \ldots, d_k)$ with $d_0 = 1$ and $d_i = m^{2^{i - 1}}$ for $i \ge 1$.
    
    If there exists an $F_k$-refutation $\Pi$ of $L$ with $\psz(\Pi) = O(n^{1 + \epsilon_k})$, where $\epsilon_k = 1/2^{k+1}$, then there exists a variable assignment $\rho$ and a formula assignment $\sigma$ satisfying the following properties:
    \begin{enumerate}
        \item the domains of $\sigma$ and $\rho$ are disjoint;
        \item the linear system $\restr{L}{\rho}$ is $(r, \Delta, c/2)$-weakly expanding;
        \item for every pair $(D, \alpha) \in \sigma$, the formula $D$ is forced to $\alpha$ w.r.t. $\restr{L}{\rho}$ and has a fan-in of at most $d_1 \cdots d_k$;
        \item $\Pi|_{\rho}|_{\sigma}$ is $d$-regular.
    \end{enumerate}
\end{lemma}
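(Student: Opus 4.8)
The plan is to prove the Regularization Lemma by an outer induction on the level $i$ (from $i=1$ up to $i=k$), mimicking the bottom-up deterministic-restriction procedure of Chaudhuri--Radhakrishnan, but carried out on the proof viewed as a multi-output circuit, and interleaving the variable assignment $\rho$ (handling forced subformulas, i.e. the ``size--width'' step of Ben-Sasson--Wigderson) with the formula assignment $\sigma$ (handling the greedy killing of high-fan-in gates). I would maintain as an invariant after processing level $i$: there are assignments $\rho^{(i)},\sigma^{(i)}$ such that $\restr{L}{\rho^{(i)}}$ is $(r,\Delta,c/2)$-weakly expanding (via \cref{lem:weak boundary from closure} and \cref{lem:closure step by step}, applied to the incidence graph of $L$), every pair in $\sigma^{(i)}$ is forced w.r.t.\ $\restr{L}{\rho^{(i)}}$ with bounded fan-in, the domains of $\rho^{(i)}$ and $\sigma^{(i)}$ are disjoint, and every surviving subformula of $\restr{\Pi}{\rho^{(i)},\sigma^{(i)}}$ at levels $\le i$ has in-degree $\le d_j$ at level $j$. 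The base case $i=0$ is vacuous since $d_0=1$ and level-$0$ formulas are literals.

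For the inductive step I process the gates $g$ at level $i$ one at a time, in some fixed order, repeatedly as long as $g$ has more than $d_i$ live children. First I use \cref{lm:how forcedness propagates} (and the definitions of live/forced) to replace every \emph{forced} child $h$ of $g$: by the inductive hypothesis $h$ is a level-$(i-1)$ formula depending on at most $d_1\cdots d_{i-1}$ variables, so $\Ext(\vars(h))$ is small (using \cref{thm:closure properties}\ref{itm:closure small}/\cref{thm:weak closure properties}\ref{itm:weak closure small}); I add $(h,\beta)$ to $\sigma$, where $\beta$ is the forced value, noting its top gate is a disjunction or variable so this is a legal formula assignment entry, and \cref{lm:forced remains forced}(2) ensures the other forced formulas stay forced. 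Once all of $g$'s children are live, if the in-degree still exceeds $d_i$ I pick one live child $h$ and \emph{kill} $g$: if $g$ is $\ORgate$ I need to set $h$ to $1$, if $g$ is (inside a negation of) an $\ANDgate$-like gate to $0$. Since $h$ is live, $h=\beta$ is consistent with $L^{\Cl(h)}$ for the desired $\beta$, hence by \cref{lm:sat remains sat} consistent with all of $L$; I extend $\rho$ by a solution of $L^{\Cl(h)}\wedge(h=\beta)$ restricted to $\vars(h)$, which fixes at most $d_1\cdots d_{i-1}$ variables, remove $N(\Cl(h))$ from the incidence graph, and invoke \cref{lem:closure step by step} to see this is a one-shot closure operation keeping the graph $(r,\Delta,c/2)$-weakly expanding; \cref{lm:forced remains forced}(1) preserves forcedness of the pairs already in $\sigma$ under this new variable assignment. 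Crucially, the in-degree of $g$ drops by at least one each iteration, so $g$ is handled after at most its original fan-in many steps.

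The accounting is the heart of the matter: each kill-step costs at most $d_1\cdots d_{i-1}$ newly-fixed variables, and the total number of kill-steps is bounded by the number of (subformula, iteration) pairs, i.e.\ by $\psz(\Pi)$ times $d_i$ roughly; with $\psz(\Pi)=O(n^{1+\epsilon_k})$, $d_i = m^{2^{i-1}}$, $m=n^{1/(2^k+1)}$ and $\epsilon_k = 1/2^{k+1}$ one checks that $d_1\cdots d_{i-1}\cdot\psz(\Pi)\cdot d_i = O(n\cdot m^{2^k - 1 +\text{something}}) = o(cr/8) = o(n)$, so the cumulative size of $\rho$ (and of $\Ext$ of the domain) stays below the thresholds $cr/2$ and $cr/4$ demanded by \cref{lm:sat remains sat}, \cref{lem:weak boundary from closure}, and \cref{lem:closure step by step} throughout all $k$ levels. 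I would organize this as a single chain of inequalities with the telescoping product $d_1\cdots d_k = m^{2^k - 1}$, so that $d_1\cdots d_k = n^{(2^k-1)/(2^k+1)} = o(n)$, leaving slack of $n^{2/(2^k+1)}$ which absorbs the $n^{\epsilon_k}$ and constant factors. Finally, $\rho$ and $\sigma$ are the result of concatenating the $\rho^{(i)}$'s and $\sigma^{(i)}$'s; disjointness of domains is maintained by construction (a variable fixed by a kill-step is never the top of a $\sigma$-pair, and once a subformula is assigned in $\sigma$ or its variables are touched by $\rho$ it is removed/constant), giving all four conclusions.

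The main obstacle I anticipate is the interaction between the two kinds of assignment under iteration: adding a formula assignment $(h,\beta)$ to $\sigma$ rewrites the proof (via the restricted-formula definition and \cref{thm:semantic proofs from restriction}), which can change which gates are live/forced and even collapse disjunctions, so I must argue that forcedness is monotone-preserved (\cref{lm:forced remains forced}) and that a gate already certified $d$-regular at a lower level cannot regrow its in-degree — this needs the observation that restriction only identifies or deletes children in the merged form, never adds them. A secondary subtlety is that after killing $g$ we must not have created, at \emph{lower} levels, new high-fan-in gates; since restriction never increases in-degree this is fine, but it must be stated carefully, and one must be sure that the forced-child replacements (which put $(h,1)$ for a disjunction $h$) are consistent with the bottom-up semantics so that $\restr{\Pi}{\rho}{\sigma}$ is genuinely well-defined and still a depth-$k$ semantic Frege derivation of $\bot$ from $\restr{L}{\rho}\cup\calA_\sigma$ as in \cref{thm:semantic proofs from restriction}.
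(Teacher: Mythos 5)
Your high-level plan (interleaving variable assignments that kill live high-fan-in gates with formula assignments that collapse forced subformulas, level by level, while tracking expansion via extensions/closures) matches the paper's strategy. But there is a genuine gap in the accounting, and it is fatal as stated.

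You process the high-fan-in gates at level~$i$ ``one at a time, in some fixed order,'' killing each one by fixing one of its live children. Each kill-step eliminates exactly one gate, so the number of kill-steps in phase $i$ can be as large as the number of level-$i$ gates, i.e.\ up to $\psz(\Pi)=O(n^{1+\epsilon_k})$. Each such step fixes up to $d_1\cdots d_{i-1}$ fresh variables, so the phase costs $\Omega(\psz(\Pi))$ variables even when $i=1$ (where $d_0=1$). Since $\psz(\Pi)$ already exceeds $n$, this blows through the budget $cr/4=\Theta(n)$ required by \cref{lem:weak boundary from closure} and \cref{lem:closure step by step}; the expansion invariant collapses after the very first phase. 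Your displayed estimate $d_1\cdots d_{i-1}\cdot\psz(\Pi)\cdot d_i = o(n)$ is simply false: the exponent of $n$ is strictly greater than $1$, not less.

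The missing idea is the paper's greedy ``most frequent pair'' rule together with a double-counting argument. Rather than satisfying gates one by one, one picks at each step the pair $(D,\alpha)$ (a live child $D$ and a target value $\alpha$) that, when assigned, eliminates the maximum number of remaining high-fan-in gates. Since each surviving gate has more than $d_i$ live children and there are at most $2s$ candidate pairs overall, pigeonhole gives a pair hit by at least a $d_i/(2s)$ fraction of $\calH_{i,q}$, so $|\calH_{i,q}|$ shrinks geometrically and the number of steps is $O\big(\tfrac{s}{d_i}\log s\big)$. The variable cost of a phase then becomes $O\big(\tfrac{s\log s}{d_i}\cdot d_1\cdots d_{i-1}\big)=O(s\log s/m)$, and with $s=O(n^{1+\epsilon_k})$, $m=n^{1/(2^k+1)}$, $\epsilon_k=1/2^{k+1}<1/(2^k+1)$, this is $o(n)$, as needed. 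Without the $1/d_i$ factor coming from this greedy selection, the restriction is too large and the lemma does not follow. (A secondary point you flag but do not resolve: the paper rebuilds $\sigma_i$ from scratch each phase and proves a technical consistency claim, $\restr{D}{\rho_{i-1},\sigma_{i-1},\rho_i,\sigma_i}=\restr{D}{\rho_i,\sigma_i}$, to reconcile the per-phase formula assignments; your sketch asserts this interaction ``must be stated carefully'' without supplying the argument, but this is a smaller gap than the counting one.)
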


These two restrictions are responsible for different types of subformulas. The variable assignment $\rho$ handles live subformulas with high in-degree, while the formula assignment $\sigma$ eliminates forced subformulas. The resulting collection of formulas $\Pi|_{\rho}|_{\sigma}$ is not necessarily a proof by itself, but it can be transformed into one using \cref{thm:semantic proofs from restriction}. By applying this theorem to $\restr{\Pi}{\rho}$, which is a refutation of $\restr{L}{\rho}$, and $\sigma$, we obtain a \emph{semantic} refutation of $\restr{L}{\rho}$ with small width. This leads to a contradiction, as the width of such a refutation is lower bounded by $cr/2$ due to \cref{lm:weak-exp-width}.

We now outline the proof of this lemma, which will be detailed in the next section. The proof is processed layer-by-layer in $k$ phases. After processing each layer $i$, we obtain a variable assignment $\rho_i$ and a formula assignment $\sigma_i$ such that these restrictions satisfy the properties of the lemma up to level $i$. By the inductive hypothesis, the concepts of live and forced subformulas are well-defined for subformulas up to level $i-1$ since they depend on at most $d_1 \cdots d_{i-1} < cr/2$ variables. When processing level $i$, we perform the following steps:
\begin{enumerate}
    \item \emph{Reducing the in-degree of live subformulas.} We consider all live subformulas on level $i$ with more than $d_i$ children and greedily satisfy them by fixing the values of their children. To preserve the hardness of the instance, we ensure that the graph of the restricted system $\restr{L}{\rho_i}$ remains a weak expander. This is achieved by \cref{lem:weak boundary from closure,lem:closure step by step}. The restriction $\rho_i$ extends the restriction $\rho_{i-1}$ constructed in the previous phase.

    \item \emph{Eliminating forced subformulas.} We consider all forced subformulas up to level $i-1$ and fix their values. The formula assignment $\sigma_{i-1}$ from the previous phase, which eliminates forced subformulas from $\restr{\Pi}{\rho_{i-1}}$, cannot be directly extended to obtain $\sigma_i$. This is because the notion of forced subformulas was defined w.r.t. the linear system $\restr{L}{\rho_{i-1}}$, and may change since we are now considering a different linear system $\restr{L}{\rho_i}$. Instead, we construct $\sigma_i$ directly in a top-down manner and then show that it remains consistent with $\sigma_{i-1}$. Specifically, we show that for every formula $D$,
    \[
        \restr{D}{\rho_{i-1},\sigma_{i-1},\rho_i,\sigma_i} = \restr{D}{\rho_i,\sigma_i}.
    \]
    This is achieved by \cref{lm:forced remains forced}.
\end{enumerate}

\section{Proofs of the main results}\label{sec:main-proof}

We now present the proofs of our main results. First, we demonstrate how the Regularization Lemma (\cref{lm:main lemma}) leads to our main result, \cref{thm:main theorem}. Following that, we provide the proof of \cref{lm:main lemma} itself.

\maintheorem*
\begin{proof}

    It is well-known~\cite{MR1678031,DBLP:journals/jacm/ChvatalS88} that for every $C \ge 8 \ln(2)$, with high probability, a random 3-CNF with $Cn$ clauses is unsatisfiable, and its graph is an $(r, 3, c)$-boundary expander, where $r = \Theta(n)$ and $c > 0$.

    Let $F$ be a random 3-CNF with $Cn$ clauses, and assume that its graph is a boundary expander. Following~\cite{DBLP:journals/cc/Alekhnovich11}, we prove a stronger lower bound on random 3-XOR instances obtained from random 3-CNFs by replacing each clause with a linear equation over $\bbF_2$. Specifically, for each clause $x_1^{\delta_1} \lor x_2^{\delta_2} \lor x_3^{\delta_3}$, we replace it with the equation $x_1 + x_2 + x_3 = \delta_1 + \delta_2 + \delta_3 \pmod 2$. Let $L$ be the linear system obtained from $F$ in this way.

    Consider an $F_k$-refutation $\Pi$ of the CNF encoding of $L$. According to \cref{prop:l bound implies steps}, it suffices to give a lower bound on $\psz(\Pi)$. Assume, for contradiction, that $\psz(\Pi)$ is $o(n^{1+\epsilon_k})$. Applying the Regularization Lemma to $\Pi$, we obtain a variable assignment $\rho$ and a formula assignment $\sigma$ that satisfy the properties of the lemma.

    Since $\restr{L}{\rho}$ is an $(r, 3, c/2)$-weak expander, \cref{lm:weak-exp-width} implies that the resolution width of the CNF encoding of $\restr{L}{\rho}$ is at least $cr/4 = \Omega(n)$. By \cref{thm:semantic proofs from restriction}, there exists a semantic refutation of $\restr{L}{\rho}$ with width at most $o(n)$ that can also use axioms from the set $\calA_\sigma$. However, by the construction of $\sigma$ and \cref{proposition new axioms}, all axioms from $\calA_\sigma$ can be derived from $\restr{L}{\rho}$ with width at most $o(n)$. This leads to a contradiction and completes the proof.
\end{proof}

Now we are ready to prove the Regularization Lemma (\cref{lm:main lemma}).
\mainlemma*
\begin{proof}
    Denote the proof size of $\Pi$ by $s = \psz(\Pi)$. As described earlier, we will prove the lemma by constructing a sequence of variable assignments $\rho_1, \ldots, \rho_k$ and formula assignments $\sigma_1, \ldots, \sigma_k$ through $k$ \emph{phases}, one for each level of $\Pi$.

    Since the depth of formulas in $\Pi$ might change during the process, we need to clarify what we mean by the level of a formula. We say that a subformula $C$ in $\Pi|_{\rho_i}|_{\sigma_i}$ \emph{appears on level $j$} if there exists a subformula $D$ in $\Pi$ such that $C = \restr{D}{\rho_i,\sigma_i}$ and $D$ has depth $j$. Note that the same formula might appear on different levels in a restricted proof. The level of a formula provides an upper bound on its depth.

    The $i$th phase of the proof consists of $q_i$ \emph{steps}. Let $G$ be the incidence graph of the linear system $L$, and define the initial restrictions $\rho_0$ and $\sigma_0$ as empty assignments. After the $q$th step of the $i$th phase, we obtain a variable assignment $\tau_{i,q}$, and denote by $\tau_{i,\le q}$ the union of all assignments constructed during the first $q$ steps of the $i$th phase. The $i$th phase is defined as follows:
    \begin{enumerate}
        \item We remove all subformulas of $\Pi|_{\rho_{i-1}}|_{\sigma_{i-1}}$ that appear on level $i$ and have more than $d_i$ live children. We fix the values of their children to satisfy them. Starting with $q = 1$, we iteratively repeat the following steps until no such subformulas remain:
        \begin{enumerate}
            \item Let $\calH_{i,q}$ be the set of all subformulas of $\Pi|_{\rho_{i-1}}|_{\sigma_{i-1}}|_{\tau_{i,\le q-1}}$ that appear on level $i$ and have more than $d_i$ live children. If $\calH_{i,q}$ is empty, stop the process.
            \item Pick the most frequent pair $(D, \alpha)$, where $D$ is a child of a formula in $\calH_{i,q}$ and $\alpha \in \Q$, such that setting $D$ to $\alpha$ eliminates the maximum number of formulas from $\calH_{i,q}$.
            \item Let $\tau_{i,q}$ be the variable assignment that satisfies $(D = \alpha) \land {\left(L|_{\rho_{i-1}}|_{\tau_{i,\le q-1}}\right)}^{\Cl(D)}$. Such an assignment exists since $D$ is live. Note that the graph of the restricted system $L|_{\rho_{i-1}}|_{\tau_{i,q}}$ is $G_{i,q} \coloneqq G_{i,q-1} \setminus \Ext_{G_{i,q-1}}(\vars(D))$.
            \item Increase $q$ by $1$ and repeat the process.
        \end{enumerate}

        \item Let $\rho_i$ be the union of $\rho_{i-1}$ and all variable assignments constructed during this phase, i.e., $\rho_i \coloneqq \rho_{i-1} \cup \tau_{i, \le q}$.

        \item Construct $\sigma_i$ such that $\Pi|_{\rho_i}|_{\sigma_i}$ does not contain any forced subformulas w.r.t. $\restr{L}{\rho_i}$ up to level $i-1$. The detailed construction of $\sigma_i$ will be described later.
    \end{enumerate}

    After the $k$th phase, $\rho_k$ and $\sigma_k$ are the final variable and formula assignments. Let $J_{i,q}$ be the domain of $\tau_{i,q}$, i.e., the variables of the subformula considered during the $q$th step of the $i$th phase. We denote by $\calJ_{i,q}$ the union of all $J_{i,q}$ considered up to this step. Formally,
    \[
        \calJ_{i,q} \coloneqq \bigcup_{i=1}^{k-1} \bigcup_{t=1}^{q_i} J_{i,t} \cup \bigcup_{t=1}^{q} J_{k,t}.
    \]
    To demonstrate that the process is well-defined and that the properties of the lemma hold, we prove stronger properties for each phase. Specifically, we establish the following claims by induction on $i$ and $q$:
    \begin{enumerate}
        \item the number of variables in $\calJ_{i,q}$ is at most $d_1 \cdots d_{i-1}$;
        \item the set of high-in-degree subformulas $\calH_{i,q}$ shrinks in each step, i.e., $\calH_{i,q} \subseteq \restr{\calH_{i,q-1}}{\tau_{i,q}}$;
        \item the number of steps in the $i$th phase, $q_i$, is at most $\frac{2s}{d_i} \log s$;
        \item the graph $G_{i,q}$ coincides with the graph $G \setminus \Ext_G(\calJ_{i,q})$ and is an $(r, \Delta, c/2)$-weak expander;
        \item the formula assignment $\sigma_i$ is well-defined and satisfies the following properties:
        \begin{enumerate}
            \item $\Pi|_{\rho_i}|_{\sigma_i}$ does not contain any forced subformulas w.r.t. $\restr{L}{\rho_i}$ up to level $i-1$;
            \item every subformula of $\Pi|_{\rho_i}|_{\sigma_i}$ that appears on level $j \le i$ has an in-degree of at most $d_j$;
            \item for every pair $(D, \alpha) \in \sigma_i$, the formula $D$ is forced to $\alpha$ w.r.t. $\restr{L}{\rho_i}$ and has a fan-in of at most $d_1 \cdots d_{i-1}$.
        \end{enumerate}
    \end{enumerate}

    All these properties are interconnected, making it impossible to prove them separately. We will demonstrate that the properties hold by induction on $i$ and $q$. The base case, where $i = 0$ and $q = 0$, is trivial. We now proceed with the inductive step. For clarity, we will state each property as a separate claim.

    \begin{claim}\label{cl:main-lemma-1}
        The number of variables in $\calJ_{i,q}$ is at most $d_1 \cdots d_{i-1}$.
    \end{claim}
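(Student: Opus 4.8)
The plan is to bound the total number of variables touched in the first $i$ phases by exploiting the step-count bounds (induction hypothesis item~3) together with the fan-in bound on the subformulas whose children we fix. The key observation is that during the $q$th step of phase $j$, we fix the values of children of formulas in $\calH_{j,q}$; the picked child $D$ lives at level at most $j-1$ in the (restricted) proof, so by the regularity obtained in the previous phases (induction hypothesis item~5b applied at levels below $j$) it depends on at most $d_1 \cdots d_{j-1}$ variables, hence $\card{J_{j,q}} \le d_1 \cdots d_{j-1}$. Summing over the at most $q_j \le \tfrac{2s}{d_j}\log s$ steps in phase $j$, the variables introduced in phase $j$ number at most $q_j \cdot d_1 \cdots d_{j-1} \le \tfrac{2s}{d_j}\log s \cdot d_1 \cdots d_{j-1}$.

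Next I would plug in the explicit parameters. We have $s = O(n^{1+\epsilon_k})$ with $\epsilon_k = 1/2^{k+1}$, and $d_j = m^{2^{j-1}}$ with $m = n^{1/(2^k+1)}$, so $d_1 \cdots d_j = m^{1 + 2 + \cdots + 2^{j-1}} = m^{2^j - 1}$. Therefore the bound on phase-$j$ variables becomes, up to the $\log s = O(\log n)$ factor and constants, $n^{1+\epsilon_k} \cdot m^{2^{j-1}-1} / m^{2^{j-1}} = n^{1+\epsilon_k} / m = n^{1+\epsilon_k} \cdot n^{-1/(2^k+1)}$. Since $\epsilon_k = 1/2^{k+1} < 1/(2^k+1)$ for all $k \ge 1$ (as $2^{k+1} > 2^k + 1$), this quantity is $n^{1 - \Omega(1)} = o(n^{1/2})$, which comfortably absorbs the $\log n$ factor and is far below $d_1 \cdots d_{i-1} = m^{2^{i-1}-1}$ whenever $i \ge 2$; for $i \le 1$ there is nothing to fix since $d_0 = 1$ forces all level-$0$ (literal) in-degrees to be trivially bounded and phase~1 only fixes children of level-$1$ formulas which are literals, contributing one variable each. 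Summing the geometric-like series over $j = 1, \ldots, i$ only changes things by a constant factor, so $\card{\calJ_{i,q}} \le d_1 \cdots d_{i-1}$ as claimed.

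The main obstacle I expect is not the arithmetic but the bookkeeping: one must be careful that the fan-in bound $d_1 \cdots d_{j-1}$ on the fixed child $D$ is actually available at the moment step $(j,q)$ is executed. This requires that the regularity property (item~5b) and the well-definedness of the live/forced notions have already been established for all levels below $j$ in the earlier phases — which is exactly what the simultaneous induction on $(i,q)$ is set up to provide — and that fixing $D$ inside phase $j$ does not retroactively increase the in-degree of any lower-level formula (it cannot, since restrictions only merge or collapse gates). A secondary subtlety is ensuring the crude union bound $\sum_{j \le i} q_j \cdot d_1\cdots d_{j-1}$ really is dominated by its top term $\le d_1 \cdots d_{i-1}$; this is where the strict inequality $\epsilon_k < 1/(2^k+1)$ and the gap between consecutive $d_j$ are used, and it should be spelled out that the slack is polynomial in $n$ so that the $\log s$ factor is harmless.
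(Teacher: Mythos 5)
Your first paragraph correctly establishes the bound $\card{J_{j,q}} \le d_1\cdots d_{j-1}$: the fixed child $D$ appears at level at most $j-1$ in the restricted proof, and the inductive fan-in regularity supplied by $\sigma_{j-1}$ means it depends on at most $d_1\cdots d_{j-1}$ variables. That is exactly the paper's one-sentence argument.

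The second half of your proposal, where you sum over steps to conclude $\card{\calJ_{i,q}} \le d_1\cdots d_{i-1}$, does not work, and the arithmetic along the way is off. With $s = O(n^{1+\epsilon_k})$ and $m = n^{1/(2^k+1)}$, the per-phase contribution $(2s/d_j)\log s \cdot d_1\cdots d_{j-1} \approx s\log s/m$ has exponent $1 + \epsilon_k - 1/(2^k+1)$, which for every $k\ge 1$ is close to $1$ (for $k=1$ it equals $11/12$), not below $1/2$ as you assert. Meanwhile $d_1\cdots d_{i-1} = m^{2^{i-1}-1} = n^{(2^{i-1}-1)/(2^k+1)}$ is a small polynomial: for $i=2$ it is $n^{1/(2^k+1)}$, and even for $i=k$ it is below $n^{1/2}$. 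So $\card{\calJ_{i,q}}$ is \emph{much larger} than $d_1\cdots d_{i-1}$, and the inequality you are trying to prove is false. You can already see this at $i=1$: each $J_{1,t}$ contributes one variable and there can be up to $q_1 \approx (2s/d_1)\log s$ steps, yet $d_1\cdots d_0 = 1$.

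The resolution is that the statement as printed contains a typo: it should read $J_{i,q}$, not $\calJ_{i,q}$. The paper's own claimproof only bounds a single subformula, and a later claim computes the cumulative bound $\card{\calJ_{i,q}} = o(n)$ by precisely the summation you attempted -- the genuine final bound is $o(n)$, not $d_1\cdots d_{i-1}$, and $o(n)$ is all that is needed downstream (to keep $\card{\Ext_G(\calJ_{i,q})} \le cr/4$). With $\calJ$ replaced by $J$, your first paragraph is a complete and correct proof matching the paper's.
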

    \begin{claimproof}
        The properties of $\sigma_{i-1}$ imply that every subformula of $\Pi|_{\rho_{i-1}}|_{\sigma_{i-1}}$ up to level $i-1$ depends on at most $d_1 \cdots d_{i-1}$ variables.
    \end{claimproof}

    \begin{claim}\label{cl:main-lemma-2}
        The set of high-in-degree subformulas $\calH_{i,q}$ shrinks in each step, i.e., $\calH_{i,q} \subseteq \restr{\calH_{i,q-1}}{\tau_{i,q}}$.
    \end{claim}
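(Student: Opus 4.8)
\begin{claimproof}
The plan is to attach to every $C \in \calH_{i,q}$ a formula $C' \in \calH_{i,q-1}$ with $\restr{C'}{\tau_{i,q-1}} = C$, where $\tau_{i,q-1}$ is the variable assignment produced in the previous step of the phase (the case $q = 1$ being trivial). First I would lift $C$ through the restriction: since $C$ appears on level $i$ in $\Pi|_{\rho_{i-1}}|_{\sigma_{i-1}}|_{\tau_{i,\le q-1}}$, there is a subformula $D$ of $\Pi$ of depth $i$ with $C = \restr{D}{\rho_{i-1},\sigma_{i-1},\tau_{i,\le q-1}}$; put $C' \coloneqq \restr{D}{\rho_{i-1},\sigma_{i-1},\tau_{i,\le q-2}}$, so that $C'$ appears on level $i$ in $\Pi|_{\rho_{i-1}}|_{\sigma_{i-1}}|_{\tau_{i,\le q-2}}$ (with the same witness $D$) and $C = \restr{C'}{\tau_{i,q-1}}$, since successive restrictions compose. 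It then remains to show that $C'$ has more than $d_i$ children that are live with respect to $\restr{L}{\rho_{i-1},\tau_{i,\le q-2}}$.

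Next I would match up the children. As an element of $\calH_{i,q}$, the formula $C$ is a non-constant disjunction with more than $d_i$ children, hence so is $C'$; write $C' = \bigvee_{j} C'_j$. Since $\tau_{i,q-1}$ is an ordinary variable assignment, the weakening clause of the restriction never fires, and every $\restr{C'_j}{\tau_{i,q-1}}$ is a constant, a variable, or a negation (never a disjunction), so no flattening occurs and the children of $C$ are precisely those among $\{\restr{C'_j}{\tau_{i,q-1}}\}_j$ that are not the constant $0$. Choosing, for every child $E$ of $C$, an index $j(E)$ with $\restr{C'_{j(E)}}{\tau_{i,q-1}} = E$ gives an injection from the children of $C$ into the children of $C'$, so it suffices to show that a live child $E$ of $C$ maps to a live child $C'_{j(E)}$ of $C'$.

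The crux is this last step, which I would obtain from the contrapositive of \cref{lm:forced remains forced}(1). By the induction hypothesis, the level-$(i-1)$ formulas $E$ and $C'_{j(E)}$ depend on at most $d_1 \cdots d_{i-1} < cr/2$ variables, so \enquote{live} and \enquote{forced} are well defined for them; and, again by the induction hypothesis (the part concerning the graphs $G_{i,q}$), both $\restr{L}{\rho_{i-1},\tau_{i,\le q-2}}$ and its further restriction $\restr{L}{\rho_{i-1},\tau_{i,\le q-1}} = \restr{(\restr{L}{\rho_{i-1},\tau_{i,\le q-2}})}{\tau_{i,q-1}}$ are $(r,\Delta,c/2)$-weak expanders. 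Hence \cref{lm:forced remains forced}(1), applied with the variable assignment $\tau_{i,q-1}$, shows that if $C'_{j(E)}$ were forced to some $\beta \in \Q$ with respect to $\restr{L}{\rho_{i-1},\tau_{i,\le q-2}}$, then $E = \restr{C'_{j(E)}}{\tau_{i,q-1}}$ would be forced to $\beta$ with respect to $\restr{L}{\rho_{i-1},\tau_{i,\le q-1}}$, contradicting the liveness of $E$; so $C'_{j(E)}$ is live. Composing with the injection $E \mapsto j(E)$, we conclude that $C'$ has at least as many live children as $C$, hence more than $d_i$; thus $C' \in \calH_{i,q-1}$ and $C = \restr{C'}{\tau_{i,q-1}} \in \restr{\calH_{i,q-1}}{\tau_{i,q-1}}$. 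I expect the only genuine hazard to be keeping track of which restricted linear system \enquote{live} is measured against at each index, and checking that the two systems involved are (weak) expanders so that \cref{lm:forced remains forced} is applicable; both are furnished by the parts of the outer induction that deal with $G_{i,q}$.
\end{claimproof}
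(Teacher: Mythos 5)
Your proof is correct and takes essentially the same route as the paper: the paper's proof of this claim is a one-liner that simply invokes \cref{lm:forced remains forced} (the part about variable assignments) to argue that forced subformulas stay forced, hence no new live children can appear. Your write-up fleshes out that argument in full detail — lifting $C$ through the restriction, matching children, and applying the contrapositive of \cref{lm:forced remains forced}(1) using the induction hypothesis that the graphs $G_{i,q}$ remain weak expanders — and you also correctly read the claim as $\calH_{i,q} \subseteq \restr{\calH_{i,q-1}}{\tau_{i,q-1}}$ (the paper's subscript $\tau_{i,q}$ is a typo, since $\tau_{i,q}$ is only constructed \emph{after} $\calH_{i,q}$ is determined).
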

    \begin{claimproof}
        We need to demonstrate that no new live gates appear during each phase. This is achieved by \cref{lm:forced remains forced}, which shows that forced gates remain forced under variable assignments.
    \end{claimproof}

    \begin{claim}
        The number of steps in the $i$th phase, $q_i$, is at most $\frac{2s}{d_i} \log s$.
    \end{claim}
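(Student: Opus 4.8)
The plan is to run the standard greedy majority-covering argument, tracking the decay of $\card{\calH_{i,q}}$ through the steps of phase~$i$, in the spirit of the size--width trade-off of Ben-Sasson and Wigderson and the deterministic restrictions of Chaudhuri and Radhakrishnan. The two quantitative inputs I would isolate first are: $\card{\calH_{i,1}} \le s$, and at every step at most $s$ distinct formulas can occur as a child of a member of $\calH_{i,q}$. Both follow from the fact that $\calH_{i,q}$ consists of distinct subformulas of a restriction of $\Pi$, together with the observation that restricting a proof never increases its number of distinct subformulas, so this number stays $\le \psz(\Pi) = s$ throughout.

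The heart of the argument is an averaging step. By definition every $H \in \calH_{i,q}$ has more than $d_i$ live children, hence in-degree at least two, so (a negation gate has a single child) its top gate is a disjunction. Counting incidences, the multiset $\{(H,D) : H \in \calH_{i,q},\ D \text{ a live child of } H\}$ has size exceeding $d_i\,\card{\calH_{i,q}}$, while at most $s$ distinct $D$ occur; hence some live child $D^{\ast}$ is a live child of more than $\tfrac{d_i}{s}\card{\calH_{i,q}}$ members of $\calH_{i,q}$. The pair $(D^{\ast},1)$ is an admissible greedy choice: being live, $D^{\ast}$ can be fixed to $1$ consistently with the subsystem $L^{\Cl(D^{\ast})}$ of the current (still weakly expanding, by the inductive hypothesis on the phase) linear system, so the assignment $\tau_{i,q}$ of step (c) exists; and since $\tau_{i,q}$ assigns every variable of $D^{\ast}$, the formula $\restr{D^{\ast}}{\tau_{i,q}}$ is literally the constant $1$, collapsing each parent disjunction of $D^{\ast}$ lying in $\calH_{i,q}$ to the constant $1$ and removing it from $\calH$. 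As the greedy step picks a child/value pair eliminating the maximum number of formulas, step $q$ removes at least $\tfrac{d_i}{s}\card{\calH_{i,q}}$ of them.

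Combining this with \cref{cl:main-lemma-2} — which says no new high-in-degree subformulas are created, so every element of $\calH_{i,q+1}$ is the restriction of a \emph{surviving} member of $\calH_{i,q}$ — gives the recursion $\card{\calH_{i,q+1}} \le (1-d_i/s)\card{\calH_{i,q}}$, hence $\card{\calH_{i,q}} \le s(1-d_i/s)^{q-1} \le s\,e^{-(q-1)d_i/s}$. The phase continues only while $\calH_{i,q}$ is non-empty, so $\card{\calH_{i,q_i}} \ge 1$ forces $(q_i-1)\,d_i/s \le \ln s$, i.e.\ $q_i \le 1 + \tfrac{s\ln s}{d_i}$. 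Finally $d_i \le d_1\cdots d_k = m^{2^k-1} = o(n) \le s$, so $\tfrac{s}{d_i} \ge 1$ and the additive $+1$ is absorbed, yielding $q_i \le \tfrac{2s}{d_i}\log s$.

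I expect the geometric-decay computation itself to be routine; the two points I would write out with care are the admissibility and effectiveness of the greedy choice — namely that $\restr{D^{\ast}}{\tau_{i,q}}$ really simplifies to the constant $1$ and thereby kills all parents of $D^{\ast}$ in $\calH_{i,q}$ — and the appeal to \cref{cl:main-lemma-2}, without which $\card{\calH_{i,q}}$ could a priori grow back between steps. Neither is deep, but both are precisely where the "no new live gates" structure established in the previous claim is being used.
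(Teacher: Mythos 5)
Your proof is correct and takes the same double-counting approach as the paper: lower-bound the incidences between $\calH_{i,q}$ and live children by $d_i\card{\calH_{i,q}}$, average to find a pair eliminating at least a $d_i/O(s)$ fraction of $\calH_{i,q}$, invoke \cref{cl:main-lemma-2} to rule out regrowth, and close with geometric decay. The only (immaterial) difference is that you observe the members of $\calH_{i,q}$ are disjunctions so only $(D,1)$ is a useful pair, averaging over $s$ candidates rather than the paper's $2s$ pairs $(D,\alpha)$ with $\alpha\in\Q$; both yield $q_i \le \frac{2s}{d_i}\log s$.
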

    \begin{claimproof}
        This is shown by the standard double counting argument. Let us count the sum $\sum_{C \in \calH_{i,q}} (\text{number of live children of $C$})$ in two ways. Since the number of live children of each formula is at most $d_i$, we have a lower bound $d_i \card{\calH_{i,q}}$. On the other hand, we can count the number of formulas in $\calH_{i,q}$ that become constant after replacing a live child $D$ with a constant $\alpha$. Thus, this sum coincides with the following one:
        \[
            \sum_{\substack{D \text{ is a live child of some formula } C \in \calH_{i,q} \\ \alpha \in \Q}} (\text{number of formulas in $\calH_{i,q}$ that disappear under $D \coloneqq \alpha$}).
        \]

        There are at most $s$ formulas among the live children of formulas in $\calH_{i,q}$, so the most frequent pair $(D, \alpha)$ appears at least $d_i/2s$ times. Since $\calH_{i,q}$ can only shrink in each step, the number of steps is at most $q'$, where $q'$ is the smallest integer satisfying
        \[
            \left(1 - \frac{d_i}{2s}\right)^{q'} \card{\calH_{i,1}} < 1.
        \]
        The claim follows by solving this inequality.
    \end{claimproof}

    Before proving the remaining properties, we need to establish some auxiliary results. We start by giving an upper bound on the size of the extension of a set $J$ in terms of the size of $J$.

    \begin{fact}
        Given $J \subseteq R$ of size at most $cr/2$, we have $\card{\Ext_G(J)} \le \frac{1+\Delta/c}{2} \card{J}$.
    \end{fact}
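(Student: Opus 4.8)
The plan is to unfold the definition $\Ext_G(J) = J \cup N(\Cl_G(J))$ and exploit that the boundary of the closure already lies inside $J$. Write $I \coloneqq \Cl_G(J)$. Since $I$ is $(r,J)$-contained we have $\partial(I) \subseteq J$ and $\card{I} \le r$, so the only vertices that $\Ext_G(J)$ can add outside of $J$ are the \emph{non-unique} neighbours of $I$, i.e.\ the vertices of $R$ having at least two neighbours in $I$. Writing $N_{\ge 2}$ for this set, we get $N(I) = \partial(I) \cup N_{\ge 2} \subseteq J \cup N_{\ge 2}$, hence $\card{\Ext_G(J)} \le \card{J} + \card{N_{\ge 2}}$; so the whole task reduces to bounding $\card{N_{\ge 2}}$ by $\frac{\Delta/c - 1}{2}\card{J}$.

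For that I would run a two-sided count of the edges incident to $I$. On one side, every left vertex has degree at most $\Delta$, so there are at most $\Delta\card{I}$ such edges. On the other side, each vertex of $\partial(I)$ absorbs exactly one of them while each vertex of $N_{\ge 2}$ absorbs at least two, which gives $\card{\partial(I)} + 2\card{N_{\ge 2}} \le \Delta\card{I}$. I then plug in the boundary-expansion inequality $\card{\partial(I)} \ge c\card{I}$ — legitimate precisely because $\card{I} \le r$ by the definition of the closure — to obtain $\card{N_{\ge 2}} \le \frac{(\Delta - c)\card{I}}{2}$. Finally I convert $\card{I}$ into $\card{J}$, either by invoking \cref{thm:closure properties} (its first item) or, self-containedly, by combining $c\card{I} \le \card{\partial(I)}$ with $\partial(I) \subseteq J$ to get $\card{I} \le \card{J}/c$; substituting and simplifying yields $\card{\Ext_G(J)} \le \card{J} + \frac{(\Delta-c)\card{J}}{2c} = \frac{1+\Delta/c}{2}\card{J}$, as claimed.

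There is no real obstacle here: the argument is a routine edge count, and the only mildly non-obvious move is noticing that the factor $\frac12$ is exactly the saving coming from the fact that every vertex of $N(I)\setminus\partial(I)$ is hit at least twice by edges from $I$ — without that observation the count would only give the weaker bound $(1+\Delta/c)\card{J}$. The hypothesis $\card{J}\le cr/2$ is not actually needed for this inequality, but it is what guarantees that the closure is well-behaved (in particular unique, by \cref{thm:closure properties}) in the setting where the Fact is used.
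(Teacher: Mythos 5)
The paper states this Fact without proof, treating it as a routine consequence of the closure definition and boundary expansion, so there is no paper argument to compare against. Your edge-count derivation is correct: the decomposition $N(I) = \partial(I) \sqcup N_{\ge 2}$, the bound $\card{\partial(I)} + 2\card{N_{\ge 2}} \le \Delta\card{I}$, the expansion inequality $\card{\partial(I)} \ge c\card{I}$ (legitimate because $\card{I} \le r$ by definition of an $(r,J)$-contained set, and here $G$ is the $(r,\Delta,c)$-boundary expander of $L$, not merely a weak expander), and the conversion $\card{I} \le \card{J}/c$ all check out, and the arithmetic $\card{J}\bigl(1 + \tfrac{\Delta-c}{2c}\bigr) = \tfrac{1+\Delta/c}{2}\card{J}$ is right. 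Your closing remark is also apt: the hypothesis $\card{J} \le cr/2$ is not used in the counting itself (the closure is always $(r,J)$-contained and $\Cl_G(J)$ is well-defined as the lexicographically first maximal one), but it is what makes the closure unique and the surrounding machinery in the Regularization Lemma behave.
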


    Next, we show that the extensions of all sets we consider, specifically $\calJ_{i,q}$, are bounded by $cr/4$.

    \begin{claim}
        For every $i$ and $q$, we have $\card{\Ext_G(\calJ_{i,q})} \le cr/4$.
    \end{claim}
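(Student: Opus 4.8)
The plan is to bound $\card{\calJ_{i,q}}$ from above and then apply the preceding \textbf{Fact}. From \cref{cl:main-lemma-1} I know that each individual set $J_{i,t}$ of variables fixed in a single step has size at most $d_1\cdots d_{i-1}$ (it is $\vars(D)$ for a subformula $D$ appearing on level $i$, whose children lie at level $\le i-1$ and hence depend on at most $d_1\cdots d_{i-1}$ variables). Summing over all steps of all phases up to $(i,q)$, and using the step-count bound $q_j \le \frac{2s}{d_j}\log s$ from the previous claim, I get
\[
    \card{\calJ_{i,q}} \le \sum_{j=1}^{i} q_j \cdot d_1\cdots d_{j-1} \le \sum_{j=1}^{i} \frac{2s\log s}{d_j}\, d_1\cdots d_{j-1}.
\]
Now I plug in the explicit parameters $s = O(n^{1+\epsilon_k})$, $m = n^{1/(2^k+1)}$, $d_j = m^{2^{j-1}}$, so that $d_1\cdots d_{j-1} = m^{2^{j-1}-1}$ and $d_1\cdots d_{j-1}/d_j = m^{-1}$; each summand is therefore $O(s\log s / m) = O(n^{1+\epsilon_k}\log n / m)$. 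Since $\epsilon_k = 1/2^{k+1}$ and $1/m = n^{-1/(2^k+1)}$, one checks $1+\epsilon_k - 1/(2^k+1) < 1$, so each term — and hence the whole sum (there are only $k = O(1)$ of them) — is $o(n) = o(r)$. In particular $\card{\calJ_{i,q}} \le cr/2$ for $n$ large, which is the hypothesis needed to invoke the \textbf{Fact}.

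With that in hand, the \textbf{Fact} gives $\card{\Ext_G(\calJ_{i,q})} \le \frac{1+\Delta/c}{2}\card{\calJ_{i,q}}$, and since $\Delta, c$ are constants this is still $o(n) = o(r)$, hence at most $cr/4$ for all sufficiently large $n$. This is exactly the claimed bound. I should be slightly careful to state the arithmetic at the level of generality the lemma needs: the constant $C$ (equivalently $r/n$) and the degree bound $\Delta = 3$ are fixed first, and then $n$ is taken large enough that the $o(n)$ estimate drops below $cr/4$; this is consistent with the \enquote{sufficiently large constant $C$} phrasing of \cref{thm:main theorem}.

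The only real subtlety — and the step I'd flag as the main obstacle — is making sure the induction is not circular: the bound $q_j \le \frac{2s}{d_j}\log s$ was itself proved under the inductive hypothesis, and \cref{cl:main-lemma-1} relies on the properties of $\sigma_{i-1}$, which are among the things being established. So I would be explicit that this claim sits \emph{after} Claims \ref{cl:main-lemma-1}--(step count) in the induction on $(i,q)$, uses only facts already available for indices preceding $(i,q)$ (and the already-proved earlier claims of the current step), and feeds into the \emph{subsequent} claims about $G_{i,q}$ being a weak expander (via \cref{lem:weak boundary from closure,lem:closure step by step}) and about $\sigma_i$. Beyond that ordering point, the proof is a routine substitution of the parameter values into the geometric-type sum, so I would keep it to a couple of lines.
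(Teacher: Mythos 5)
Your proof is correct and follows essentially the same route as the paper: bound $\card{\calJ_{i,q}}$ by summing the per-step variable bound $d_1\cdots d_{j-1}$ over the step count $q_j \le \frac{2s}{d_j}\log s$, observe the telescoping $d_1\cdots d_{j-1}/d_j = m^{-1}$, conclude $\card{\calJ_{i,q}} = o(n)$, and then apply the preceding Fact to get $\card{\Ext_G(\calJ_{i,q})} \le cr/4$ for large $n$. The paper keeps the arithmetic implicit (just writes the sum as $o(n)$), whereas you spell out the exponent check $1+\epsilon_k - 1/(2^k+1) < 1$; your remark about the ordering of the claims in the mutual induction matches how the paper arranges them.
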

    \begin{claimproof}
        Recall the definition of $\calJ_{i,q}$:
        \[
            \calJ_{i,q} \coloneqq \bigcup_{i=1}^{k-1} \bigcup_{t=1}^{q_i} J_{i,t} \cup \bigcup_{t=1}^{q} J_{k,t}.
        \]
        We have established that each $J_{j,t}$ in the sum has size at most $d_1 \cdots d_{j-1}$. Moreover, each $q_j \le \frac{2s}{d_j} \log s$ and $q \le \frac{2s}{d_i} \log s$. Combining these facts, we obtain
        \[
            \card{\calJ_{i,q}} = \sum_{j=1}^i \bigO{\frac{s}{d_j}\log{s} \cdot d_1 \cdots d_j} = \bigO{s \log{n} \left(\frac{1}{d_1} + \cdots + \frac{d_1 \cdots d_{i-1}}{d_i}\right)} = o(n).
        \]
        In particular, since $r = \Theta(n)$, we have $\card{\calJ_{i,q}} \le \frac{cr}{2(1 + \Delta/c)}$ for sufficiently large $n$. Therefore, the claim above implies that the size of the extension of $\calJ_{i,q}$ is at most $cr/4$.
    \end{claimproof}

    The properties of the graph $G_{i,q}$ now follow easily.

    \begin{claim}
        The graph $G_{i,q}$ coincides with the graph $G \setminus \Ext_G(\calJ_{i,q})$ and is an $(r, \Delta, c/2)$-weak expander.
    \end{claim}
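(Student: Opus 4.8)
The plan is to proceed by induction on $(i,q)$ in lexicographic order, using the two structural lemmas for boundary expanders, \cref{lem:closure step by step} and \cref{lem:weak boundary from closure}, together with the extension-size bound $\card{\Ext_G(\calJ_{i,q})} \le cr/4$ established in the previous claim. The base case is the empty assignment: $\calJ$ is empty, and since every nonempty $I \subseteq L$ with $\card{I} \le r$ satisfies $\card{\partial(I)} \ge c\card{I} > 0$ in the boundary expander $G$, we get $\Cl_G(\varnothing) = \varnothing$, hence $\Ext_G(\varnothing) = \varnothing$ and $G_{0,0} = G$, which is in particular an $(r,\Delta,c/2)$-weak expander.

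For the inductive step, set $J \coloneqq \calJ_{i,q-1}$ and $J' \coloneqq J_{i,q} = \vars(D)$, so that $\calJ_{i,q} = J \cup J'$ (when $q=1$, the set $J$ is the one accumulated at the end of phase $i-1$). By the induction hypothesis $G_{i,q-1} = G \setminus \Ext_G(J)$, and by the previous claim both $\card{\Ext_G(J)} \le cr/4$ and $\card{\Ext_G(J \cup J')} \le cr/4$. The subformula $D$ picked at step $(i,q)$ survives in $\Pi|_{\rho_{i-1}}|_{\sigma_{i-1}}|_{\tau_{i,\le q-1}}$, so its variables lie among the variables of the restricted linear system $L|_{\rho_{i-1}}|_{\tau_{i,\le q-1}}$, whose incidence graph is $G_{i,q-1}$; since the right vertex set of $G \setminus \Ext_G(J)$ is exactly $R \setminus \Ext_G(J)$, this yields $J' \subseteq R \setminus \Ext_G(J)$. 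All hypotheses of \cref{lem:closure step by step} are therefore met for the original boundary expander $G$, the set $J$, and the set $J'$, so $G_{i,q-1} \setminus \Ext_{G_{i,q-1}}(J') = G \setminus \Ext_G(J \cup J')$. As $G_{i,q}$ is defined to be $G_{i,q-1} \setminus \Ext_{G_{i,q-1}}(\vars(D))$, we conclude $G_{i,q} = G \setminus \Ext_G(\calJ_{i,q})$. Finally, since $\card{\Ext_G(\calJ_{i,q})} \le cr/4$ and $G$ is an $(r,\Delta,c)$-boundary expander, \cref{lem:weak boundary from closure} shows that $G \setminus \Ext_G(\calJ_{i,q}) = G_{i,q}$ is an $(r,\Delta,c/2)$-weak expander, which is the second half of the claim.

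The only genuinely delicate point is the verification of $J' \subseteq R \setminus \Ext_G(J)$, i.e., that no variable of the live subformula $D$ has already been assigned or swallowed into the closure by the earlier restrictions. This is exactly what connects the "proof side" of the construction (live subformulas surviving in the restricted proof) to the "graph side" (removed sets of right vertices), and it is what lets us treat the many individual removals $G_{i,q-1} \setminus \Ext_{G_{i,q-1}}(\vars(D))$ as a single removal $G \setminus \Ext_G(\calJ_{i,q})$; once this bridge is in place, the claim is an immediate application of the two cited lemmas, with the extension-size bound supplying their hypotheses.
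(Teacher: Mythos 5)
Your proof is correct and follows essentially the same route as the paper: induction on the step $(i,q)$, combining \cref{lem:closure step by step} to turn the incremental removal into a single removal $G \setminus \Ext_G(\calJ_{i,q})$, and \cref{lem:weak boundary from closure} with the size bound $\card{\Ext_G(\calJ_{i,q})} \le cr/4$ to get weak expansion. Your explicit verification that $J_{i,q} \subseteq R \setminus \Ext_G(\calJ_{i,q-1})$ (because $D$ survives in the restricted proof and hence its variables lie in the remaining right side) is a hypothesis of \cref{lem:closure step by step} that the paper's proof leaves implicit, so that extra care is a welcome addition rather than a deviation.
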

    \begin{claimproof}
        Let $G = G_{i',q'}$ be the graph at the beginning of the $q$th step of the $i$th phase. If $q=1$, this is the graph $G_{i-1,q_{i-1}}$ at the end of the previous phase. Otherwise, it is the graph $G_{i,q-1}$ at the end of the previous step of the same phase. The analysis does not depend on which of these two cases we consider. By the inductive hypothesis, $G_{i',q'} = G \setminus \Ext_{G_{i',q'}}(\calJ_{i',q'})$.

        By definition, $\calJ_{i,q} = \calJ_{i',q'} \cup J_{i,q}$, and the graph $G_{i,q}$ is obtained from $G_{i',q'}$ by removing the extension of $J_{i,q}$, i.e., $G_{i,q} = G_{i',q'} \setminus \Ext_{G_{i',q'}}(J_{i,q})$. By the previous claim, $\card{\Ext_{G_{i',q'}}(\calJ_{i,q})} \le cr/4$, and thus we can apply \cref{lem:closure step by step} to conclude that $G_{i,q} = G \setminus \Ext_G(\calJ_{i,q})$ is an $(r, \Delta, c/2)$-weak expander by \cref{lem:weak boundary from closure}.
    \end{claimproof}

    We now proceed with the construction of the formula assignment $\sigma_i$. For brevity, we will simply use the terms \enquote{live} and \enquote{forced} instead of \enquote{live w.r.t. $\restr{L}{\rho_i}$} and \enquote{forced w.r.t. $\restr{L}{\rho_i}$}, respectively. We will demonstrate that the properties of $\sigma_i$ hold by induction on $i$. 

    Given a forced formula $D$, we say that it is \emph{minimally forced} if no proper subformula of $D$ is forced. The construction of $\sigma_i$ is iterative. We start with the empty formula assignment $\mu_0$ and define the set $\calB_0$ as the set of all pairs from $\sigma_{i-1}$ restricted to $\rho_i$, i.e.,
    \[
        \calB_0 \coloneqq \left\{(\restr{D}{\rho_i}, \alpha) : (D, \alpha) \in \sigma_{i-1}\right\}.
    \]
    We start by observing that for every pair $(D, \alpha)$ in $\calB_0$, the formula $D$ is forced to $\alpha$ due to \cref{lm:forced remains forced}. We then construct formula assignments $\mu_1, \mu_2, \ldots$ as follows, starting with $t=1$:
    \begin{enumerate}
        \item Let $D$ be a minimally forced subformula that appears in $\calB_{t-1}$ and is not a constant. If no such formula exists, choose $D$ as any minimally forced subformula that appears in $\Pi|_{\rho_i}|_{\mu_{t-1}}$ at any level up to $i-1$. If no such formula exists, stop the process.
        \item Extend $\mu_{t-1}$ by setting $D$ to $\alpha$: $\mu_t \coloneqq \mu_{t-1} \cup \{(D, \alpha)\}$.
        \item Update $\calB_{t-1}$ by applying the formula assignment $\{(D, \alpha)\}$ to all pairs in $\calB_{t-1}$: $\calB_t \coloneqq \left\{(\restr{E}{\{(D, \alpha)\}}, \beta) : (E, \beta) \in \calB_{t-1}\right\}$.
        \item Increase $t$ and repeat the process.
    \end{enumerate}

    We note that every $\mu_t$ is a proper formula assignment since, by \cref{lm:how forcedness propagates}, a minimally forced formula cannot begin with a negation or be a disjunction forced to $0$.

    Let $\sigma_i$ be the final formula assignment. By construction, $\Pi|_{\rho_i}|_{\sigma_i}$ does not contain any forced subformulas up to level $i-1$. We need to show that the fan-in of subformulas at each level $j \le i$ is at most $d_j$ and that every subformula in the domain of $\sigma_i$ depends on at most $d_1 \cdots d_{i-1}$ variables. It follows from the construction that $\sigma_i$ is consistent with $\sigma_{i-1}$, meaning that for every pair $(E, \beta) \in \calB_0$, we have $\restr{E}{\sigma_i} = \beta$. Moreover, an even stronger property holds for \emph{any} formula $D$.

    \begin{claim}[restate=technicalinduction,name=]\label{clm:consistent}
        For every formula $D$, we have $\restr{D}{\rho_{i-1},\sigma_{i-1},\rho_i,\sigma_i} = \restr{D}{\rho_i,\sigma_i}$.
    \end{claim}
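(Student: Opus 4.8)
The plan is to prove the identity $\restr{D}{\rho_{i-1},\sigma_{i-1},\rho_i,\sigma_i}=\restr{D}{\rho_i,\sigma_i}$ by structural induction on $D$. Two ingredients will be used repeatedly. First, $\rho_i$ extends $\rho_{i-1}$ as a variable assignment, so $\restr{E}{\rho_{i-1},\rho_i}=\restr{E}{\rho_i}$ for every formula $E$; this lets the leading restriction be absorbed. Second, the consistency of $\sigma_i$ with $\sigma_{i-1}$ already noted above: for every $(E,\alpha)\in\sigma_{i-1}$ one has $\restr{E}{\rho_i,\sigma_i}=\alpha$, because $\calB_0$ is seeded with $\restr{\sigma_{i-1}}{\rho_i}$ and the iterative peeling of minimally forced subformulas, via \cref{lm:forced remains forced} and \cref{lm:how forcedness propagates}, eventually turns every formula of $\calB_0$ into its forced constant, which further restrictions preserve. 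We also use that all formulas are kept in merged form, so no disjunct of a member of $\dom(\sigma_i)$ is a constant.

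For the base case $D=x$ I would run a short case analysis on which of $\dom(\rho_{i-1})$, $\dom(\sigma_{i-1})$, $\dom(\rho_i)$ contain $x$. The only delicate case is $(x,\beta)\in\sigma_{i-1}$: by \cref{lm:forced remains forced}, $\restr{x}{\rho_i}$ is still forced to $\beta$ with respect to $\restr{L}{\rho_i}$, so either $x\notin\dom(\rho_i)$, in which case $(x,\beta)\in\calB_0$ yields $\restr{x}{\sigma_i}=\beta$, or $x\in\dom(\rho_i)$, in which case $\rho_i(x)=\beta$ since a constant formula is forced only to its own value; both sides then equal $\beta$, and the remaining cases are immediate. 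The negation case $D=\neg E$ follows at once from the inductive hypothesis for $E$, since each of the four restrictions commutes with $\neg$ up to constant propagation.

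The disjunction case $D=\bigvee_j E_j$ is the heart of the argument; by the inductive hypothesis $\restr{E_j}{\rho_{i-1},\sigma_{i-1},\rho_i,\sigma_i}=\restr{E_j}{\rho_i,\sigma_i}$ for every child. I would split on how $\restr{D}{\rho_{i-1},\sigma_{i-1}}$ evaluates. If it is again a disjunction (of the non-constant restricted children), one checks that the set of children surviving all four left-hand restrictions coincides — as a multiset of formulas, applying the inductive hypothesis child by child — with the set surviving $\rho_i,\sigma_i$ on the right, so that both sides produce syntactically the same disjunction, and the concluding weakening test against $\dom(\sigma_i)$ returns the same verdict; the analogous bookkeeping covers the degenerate cases where the restricted disjunction becomes $0$, or becomes $1$ through a constant child. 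The one genuinely non-routine situation is when $\restr{D}{\rho_{i-1},\sigma_{i-1}}$ collapses to $1$ because $\bigvee_j\restr{E_j}{\rho_{i-1},\sigma_{i-1}}$ is a weakening of some $G=\bigvee_l G_l$ with $(G,1)\in\sigma_{i-1}$: this collapse happens at an intermediate step that has no counterpart on the right-hand side before $\sigma_i$ is applied.

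To handle this sub-case I would invoke consistency: $\restr{G}{\rho_i,\sigma_i}=1$, and since each $G_l$ equals some $\restr{E_{j(l)}}{\rho_{i-1},\sigma_{i-1}}$, the inductive hypothesis gives $\restr{G_l}{\rho_i,\sigma_i}=\restr{E_{j(l)}}{\rho_i,\sigma_i}$. Unwinding why $\restr{G}{\rho_i,\sigma_i}$ simplifies to $1$ — either one of these formulas is the constant $1$, or their disjunction is a weakening of some $H\in\dom(\sigma_i)$ with $(H,1)\in\sigma_i$ (here the merged-form assumption rules out constant disjuncts) — and noting that each such formula also occurs among the $\sigma_i$-restricted children of $\restr{D}{\rho_i}$, one concludes $\restr{D}{\rho_i,\sigma_i}=1$; a symmetric argument shows that a collapse to $1$ on the right forces one on the left, so the two collapse behaviours agree. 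I expect this collapse sub-case to be the main obstacle: it is exactly where the asymmetry between the two restriction sequences (the extra $\sigma_{i-1}$ in the middle on the left) bites, and where one genuinely needs that $\sigma_i$ was constructed to \emph{dominate} $\sigma_{i-1}$, not merely extend it.
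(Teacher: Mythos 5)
Your proposal is correct and follows essentially the same approach as the paper: structural induction on $D$, with the variable base case resolved via \cref{lm:forced remains forced} and the consistency of $\sigma_i$ with $\sigma_{i-1}$ (seeded through $\calB_0$), the negation case trivial, and the disjunction case split between the routine propagation via the inductive hypothesis and the weakening collapse via $\dom(\sigma_{i-1})$. The only real difference is cosmetic: where you verify the child-level bookkeeping by hand, the paper discharges the non-degenerate disjunction case in one stroke by a chain of equalities through \cref{lem:restriction is distributive} and \cref{lem:restriction is idempotent}, reducing both sides to $\restr{E}{\rho_i,\sigma_i}$.
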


    The proof of this claim is highly technical and appears in \cref{app:consistent}. We now proceed with the proof of the remaining properties of $\sigma_i$.

    We can now demonstrate that every formula in the domain of $\sigma_i$ depends on at most $d_1 \cdots d_{i-1}$ variables. This is evident for any $t$, as all formulas in $\calB_t$ satisfy this condition. For $t=0$, this follows from the properties of $\sigma_{i-1}$. For $t>0$, all formulas in $\calB_t$ are obtained by restricting formulas in $\calB_{t-1}$, so the number of variables they depend on can only decrease.

    \begin{claim}
        Every formula $D$ in the domain of $\sigma_i$ depends on at most $d_1 \cdots d_{i-1}$ variables.
    \end{claim}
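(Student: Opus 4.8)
The plan is to observe that every pair placed into $\dom(\sigma_i)$ is selected at some step $t$ of the iterative construction, and by the selection rule its formula $D$ is of exactly one of two kinds: (a) a non-constant formula occurring in $\calB_{t-1}$, or (b) a minimally forced subformula occurring in $\restr{\Pi}{\rho_i,\mu_{t-1}}$ on some level $j \le i-1$. It then suffices to bound the number of variables in each of these two families.

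For family (a) the bound is already in hand from the discussion preceding the claim: $\calB_0$ is the image of $\sigma_{i-1}$ under $\rho_i$, and every formula in $\dom(\sigma_{i-1})$ depends on at most $d_1\cdots d_{i-2}$ variables by the inductively-established properties of $\sigma_{i-1}$; applying $\rho_i$ never enlarges a variable set, and every later $\calB_t$ is obtained from $\calB_{t-1}$ by one further formula assignment, which again cannot enlarge variable sets. So each formula of family (a) depends on at most $d_1\cdots d_{i-2}\le d_1\cdots d_{i-1}$ variables.

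The substance lies in family (b). Here I would write $D=\restr{D_0}{\rho_i,\mu_{t-1}}$ for a subformula $D_0$ of $\Pi$ of depth $j\le i-1$, and reduce to the restriction of $\Pi$ by $\rho_{i-1},\sigma_{i-1}$ via the step-by-step consistency identity: applying \cref{clm:consistent} with the intermediate assignment $\mu_{t-1}$ in place of $\sigma_i$ (which should be precisely the form established in \cref{app:consistent}) yields $\restr{D_0}{\rho_i,\mu_{t-1}}=\restr{D_0}{\rho_{i-1},\sigma_{i-1},\rho_i,\mu_{t-1}}$, so $D$ is a further restriction of $\restr{D_0}{\rho_{i-1},\sigma_{i-1}}$. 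The latter appears on level $j$ in $\restr{\Pi}{\rho_{i-1},\sigma_{i-1}}$, so by the inductive in-degree bound for $\sigma_{i-1}$ every subformula of it occurring on a level $j'\le j$ has in-degree at most $d_{j'}$; unwinding this through the $j$ alternation levels --- negations contributing a factor $1$ and the convention $d_0=1$ absorbing the leaves --- bounds its variable count by $d_1\cdots d_j$. Since restricting further by $\rho_i$ and then $\mu_{t-1}$ only removes variables, $D$ depends on at most $d_1\cdots d_j\le d_1\cdots d_{i-1}$ variables, as required.

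The main obstacle is exactly this reliance on the \emph{step-by-step} form of \cref{clm:consistent}: one has to be confident that the deferred proof delivers the consistency identity for every intermediate $\mu_{t-1}$, not merely for the final $\sigma_i$, since $\mu_{t-1}$ and $\sigma_{i-1}$ are defined relative to different linear systems and are not comparable as sets of pairs, so there is no more elementary route from $\restr{D_0}{\rho_i,\mu_{t-1}}$ to $\restr{D_0}{\rho_{i-1},\sigma_{i-1}}$. A minor point of bookkeeping is the passage from top-gate in-degree bounds to total variable counts in merged form, where one must keep track of how the $\neg$-gates and the $\lor/\neg$ alternation interact with the parameters $d_0,\dots,d_j$.
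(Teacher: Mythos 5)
Your proposal mirrors the paper's own proof: the same two-way split between formulas inherited from $\calB_0$ and fresh minimally forced subformulas of $\restr{\Pi}{\rho_i,\mu_{t-1}}$, the same use of \cref{clm:consistent} to pull the fresh case back to $\restr{\Pi}{\rho_{i-1},\sigma_{i-1},\rho_i}$, and the same reduction to the inductive in-degree bounds from the properties of $\sigma_{i-1}$ (the paper packages the second case as a structural induction on $D$ --- variable / negation / disjunction --- while you phrase it as ``unwinding through the levels,'' but the content is identical). The subtlety you flag is a real one and is also present in the paper's argument: \cref{clm:consistent} is stated for the final $\sigma_i$, yet the claim is invoked mid-construction; the paper handles this implicitly by the ordering of the iterative process, which exhausts all (restrictions of) formulas from $\calB_0$ before any fresh minimally forced subformula is selected, so that by the time the second case is reached the part of the assignment relevant to $\sigma_{i-1}$ is already frozen in $\mu_{t-1}$ and subsequent steps cannot disturb the live children of $D$.
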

    \begin{claimproof}
        We will show by induction on the level of $D$ that if $D$ appears on level $j$, then $w(D) \le d_1 \cdots d_j$. If $D$ was added when $\calB_t$ contained a non-constant formula, then the claim follows from the reasoning above. Otherwise, $D$ was added when $\calB_t$ contained only constant formulas and thus is a subformula of $\Pi|_{\rho_i}|_{\mu_t}$ for some $t$.

        If $D$ is a variable, the claim holds trivially. If $D$ is a negation, then it follows from the inductive hypothesis. Finally, let $D$ be a disjunction $\bigvee_{m \in I} D_m$. By the inductive hypothesis, each $D_m$ depends on at most $d_1 \cdots d_{j-1}$ variables. Since all proper subformulas of $D$ are live, in particular every $D_m$ is live, and thus $\restr{D}{\mu_{t-1}} = \restr{D_m}{\sigma_i}$. From \cref{clm:consistent}, it follows that $D$ is a subformula of $\Pi|_{\rho_{i-1}}|_{\sigma_{i-1}}|_{\rho_i}$. If it appears on a level strictly less than $i$, the claim follows from the property of $\sigma_{i-1}$ that bounds the fan-in of all subformulas at levels up to $i-1$ in $\Pi|_{\rho_{i-1}}|_{\sigma_{i-1}}$. Otherwise, $D$ is at level $i$ and $\rho_i$ is constructed such that $D$ has at most $d_i$ live children. This completes the proof.
    \end{claimproof}

    The same reasoning shows that the fan-in of every subformula of $\Pi|_{\rho_i}|_{\sigma_i}$ that appears on level $j \le i$ is at most $d_j$. If $j < i$, this follows from the properties of $\sigma_{i-1}$ and the fact that forced subformulas cannot become live. If $j = i$, it follows from the construction of $\rho_i$ and the fact that such subformulas cannot have more than $d_i$ live children. This completes the proof of the Regularization Lemma.
\end{proof}

\section{Formula assignments}\label{sec:formula-assignment}

In this section, we provide the proof of \cref{thm:semantic proofs from restriction}, which shows how to transform a syntactic Frege proof into a semantic one by applying formula assignments. We begin by proving auxiliary results that are used in the proof of the main theorem.

We start by showing that our definition of formula assignments is idempotent, i.e., applying an assignment twice does not change the result.

\begin{lemma}[restate=restrictionidempotent,name=]\label{lem:restriction is idempotent}
    Let $C$ be a Boolean formula and $\sigma$ a formula assignment.
    The application of $\sigma$ is idempotent, i.e., $\restr{\left(\restr{C}{\sigma}\right)}{\sigma} = \restr{C}{\sigma}$.
\end{lemma}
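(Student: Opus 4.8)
The plan is to proceed by structural induction on $C$, showing simultaneously that $\restr{(\restr{C}{\sigma})}{\sigma} = \restr{C}{\sigma}$ and that the intermediate disjunction $C'$ computed when restricting $C$ a second time agrees with $\restr{C}{\sigma}$ in the relevant sense. The key observation driving the induction is that, after one application of $\sigma$, the resulting formula has been ``normalised'': any variable in $\dom(\sigma)$ has been replaced by a constant, and any disjunction that is a weakening of some $D\in\dom(\sigma)$ has been collapsed to the constant $1$. A second pass of $\sigma$ should therefore find nothing left to do.

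\textbf{Key steps.} First I would handle the base case $C = x$: if $x\in\dom(\sigma)$ then $\restr{C}{\sigma}=\sigma(x)\in\Q$, and restricting a constant does nothing (constants are not in $\dom(\sigma)$, since the top gate of every element of $\dom(\sigma)$ is a variable or a disjunction); if $x\notin\dom(\sigma)$ then $\restr{C}{\sigma}=x$ and applying $\sigma$ again leaves $x$ unchanged. Second, the negation case $C=\neg D$: by the inductive hypothesis $\restr{(\restr{D}{\sigma})}{\sigma}=\restr{D}{\sigma}$, and since $\restr{(\neg D)}{\sigma}=\neg(\restr{D}{\sigma})$ with constant propagation, a second application commutes through the negation and through any constant propagation, giving the claim. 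Third, the disjunction case $C=\bigvee_{i\in I}C_i$: write $C'=\bigvee_{i\in I}(\restr{C_i}{\sigma})$. By the inductive hypothesis each $\restr{C_i}{\sigma}$ is fixed by $\sigma$, so restricting $C'$ a second time produces exactly $C''=\bigvee_{i\in I}(\restr{(\restr{C_i}{\sigma})}{\sigma})=\bigvee_{i\in I}(\restr{C_i}{\sigma})=C'$ at the level of children. It then remains to check that the ``weakening'' test behaves consistently: if $C'$ is a weakening of some $D\in\dom(\sigma)$, then $\restr{C}{\sigma}=1$, and restricting the constant $1$ again yields $1$; if $C'$ is not a weakening of any $D\in\dom(\sigma)$, then $\restr{C}{\sigma}=C'$, and since restricting $C'$ again reproduces $C'$ at the level of children and $C'$ is still not a weakening of any $D\in\dom(\sigma)$, we get $\restr{C'}{\sigma}=C'$. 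One subtlety to spell out is that $\restr{C}{\sigma}$ may not literally be a disjunction (constant propagation or unit simplification may have occurred), in which case it is either a constant --- handled as above --- or a strictly smaller formula to which the inductive hypothesis already applies after re-parsing.

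\textbf{Main obstacle.} The delicate point is the weakening clause in the disjunction case: I need to argue that the set of children-after-restriction does not change between the first and second pass, so that the weakening test gives the same answer both times. This hinges on the inductive hypothesis applied to each child (so children are stable) together with the fact that the collection $\{\restr{C_i}{\sigma}\}_{i\in I}$ as a \emph{set} is unchanged --- including the behaviour when some children collapse to constants $0$ or $1$ and are absorbed. I would phrase the induction carefully enough that ``$\restr{C}{\sigma}$ is fixed by $\sigma$'' is stated for the genuinely resulting formula (after all simplifications), and then observe that each simplification step (dropping a $0$ child, collapsing on a $1$ child, collapsing via a weakening match) produces a formula on which the second pass is a no-op by the same case analysis. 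No heavy computation is involved; the work is in being pedantic about which formula ``$\restr{C}{\sigma}$'' denotes after simplification.
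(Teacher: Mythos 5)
Your proposal is correct and follows essentially the same structural induction as the paper's proof, with the same case split on variables, negations, and disjunctions, and the same use of the inductive hypothesis on the children to show the weakening test gives the same answer on the second pass. You are slightly more explicit about the edge case where $C'$ simplifies away from being a disjunction, which the paper handles by splitting only on whether $\restr{C}{\sigma}$ is constant; both are fine.
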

\begin{proof}
    We prove the statement by induction on the structure of $C$. If $C$ is a variable, then either this variable is in the domain of $\sigma$ and becomes a constant, or it is not and remains unchanged. If $C$ is a negation, then since the top gate of any formula in the domain of $\sigma$ cannot be a negation, the lemma holds immediately by the inductive hypothesis. Finally, if $C$ is a disjunction $\bigvee_{j \in J} C_j$, we consider two cases. If $\restr{C}{\sigma}$ is a constant, then a second application of $\sigma$ clearly does not change it. Otherwise, by definition, $\restr{C}{\sigma} = \bigvee_{j \in J} (\restr{C_j}{\sigma})$ and $\restr{C}{\sigma}$ is not a weakening of any circuit in the domain of $\sigma$. By the inductive hypothesis, $\restr{\left(\restr{C_j}{\sigma}\right)}{\sigma} = \restr{C_j}{\sigma}$ for all $j \in J$. Therefore, the lemma holds by definition since $\bigvee_{j \in J} \restr{\left(\restr{C_j}{\sigma}\right)}{\sigma}$ coincides with $\restr{C}{\sigma}$ and thus is not a weakening of any circuit in the domain of $\sigma$. This completes the proof.
\end{proof}

Next, we show that the restriction operation is distributive with respect to disjunctions.

\begin{lemma}[restate=restrictiondistributive,name=]\label{lem:restriction is distributive}
    Let $C$ and $D$ be Boolean formulas and $\sigma$ a formula assignment. Then the application of $\sigma$ is distributive over $\lor$, i.e., $\restr{\left(C \lor D\right)}{\sigma} = \restr{\left(\restr{C}{\sigma} \lor \restr{D}{\sigma}\right)}{\sigma}$.
    Furthermore, if $\restr{\left(C \lor D\right)}{\sigma}$ is not a constant, $\restr{\left(C \lor D\right)}{\sigma} = \restr{C}{\sigma} \lor \restr{D}{\sigma}$.
\end{lemma}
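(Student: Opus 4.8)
The plan is to unfold the definition of $\restr{\cdot}{\sigma}$ by one level on each side of the asserted equality. On the left, $\restr{(C\lor D)}{\sigma}$ is computed by taking the children of the merged disjunction $C\lor D$, replacing each by its $\sigma$-restriction to form a disjunction $E'$, simplifying $E'$ by constant propagation and merging, and finally returning $1$ if the simplified $E'$ is a disjunction that is a weakening of some element of $\dom(\sigma)$ and returning $E'$ otherwise. Since in merged form the children of $C\lor D$ are the children of $C$ together with the children of $D$ (with $C$, resp.\ $D$, counting as its own sole child when its top connective is not $\lor$), $E'$ is precisely the simplification of $\bigvee$ over the multiset $\{\restr{C'}{\sigma} : C'\text{ a child of }C\}\cup\{\restr{D'}{\sigma} : D'\text{ a child of }D\}$.

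On the right, $\restr{(\restr{C}{\sigma}\lor\restr{D}{\sigma})}{\sigma}$ is computed the same way from the children of $\restr{C}{\sigma}$ and of $\restr{D}{\sigma}$. Here two facts do the work: restricting those children by $\sigma$ a second time changes nothing, by \cref{lem:restriction is idempotent}; and, because $\restr{C}{\sigma}$ is itself the simplification of $\bigvee$ over the restricted children of $C$, supplying $\restr{C}{\sigma}$ as a single summand yields the same simplified disjunction as supplying those restricted children individually (and likewise for $D$). Consequently the intermediate formula $E'$ agrees on the two sides, and since the final weakening test depends only on $E'$ and $\sigma$, the two results coincide --- this is the first claim. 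For the \enquote{furthermore} part, if $\restr{(C\lor D)}{\sigma}$ is not a constant then the weakening test did not fire (it would have produced the constant $1$) and no child restricted to $1$ (which would likewise have forced the value $1$), so $\restr{(C\lor D)}{\sigma}=E'=\restr{C}{\sigma}\lor\restr{D}{\sigma}$ in merged form.

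I expect the main obstacle to be the bookkeeping of the simplification and weakening rules rather than any conceptual difficulty. The delicate case is when $\restr{C}{\sigma}=1$ \emph{not} because some restricted child of $C$ is the literal constant $1$ but because $\bigvee$ over the restricted children of $C$ is a weakening of some $E\in\dom(\sigma)$: on the right $\restr{C}{\sigma}$ then enters as the constant $1$ and constant propagation closes the case, while on the left one instead uses that a disjunction containing all of those disjuncts is still a weakening of $E$, so the weakening rule fires there as well. The analogous reconciliations for the constant-$0$ / single-child-collapse cases and for the cases where $\restr{C}{\sigma}$ or $\restr{D}{\sigma}$ is a variable or a negation are routine but should be written out; the cleanest way to package all of this is to first prove, by induction on structure, the auxiliary statement that $\restr{(\bigvee_j E_j)}{\sigma}$ is unchanged when any summand $E_j$ is replaced by $\restr{E_j}{\sigma}$, and then apply it.
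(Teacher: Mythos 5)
Your proposal is correct and follows essentially the same route as the paper's proof: both sides are reduced to the same intermediate disjunction of $\sigma$-restricted children, idempotence (\cref{lem:restriction is idempotent}) handles the right-hand side, and the weakening test is observed to depend only on that intermediate formula and $\sigma$, with the ``furthermore'' part following because a non-constant result rules out the weakening test firing. The auxiliary one-summand-at-a-time lemma you propose at the end is an optional repackaging that the paper does not need, since it argues about both summands simultaneously.
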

\begin{proof}
    The disjunction $C \lor D$ can be expressed as an unbounded fan-in disjunction $\Big(\bigvee_{j \in J} C_j\Big) \lor \Big(\bigvee_{k \in K} D_k\Big)$, where $C = \bigvee_{j \in J} C_j$ if $C$ is a disjunction, or $J = \{1\}$ and $C_1 = C$ otherwise. The same applies to $D$. Let $E$ be the circuit obtained by applying $\sigma$ to the children of $C \lor D$, i.e.,
    \[
        E \coloneqq \Big(\bigvee_{j \in J} \restr{C_j}{\sigma}\Big) \lor \Big(\bigvee_{k \in K} \restr{D_k}{\sigma}\Big).
    \]
    By definition, $\restr{\left(C \lor D\right)}{\sigma}$ is defined in terms of $E$. If $\left(\bigvee_{j \in J} \restr{C_j}{\sigma}\right)$ is a weakening of some $F$ in the domain of $\sigma$, then both $\restr{C}{\sigma}$ and $\restr{\left(C \lor D\right)}{\sigma}$ turn into $1$ and the lemma holds. The same reasoning applies when $\left(\bigvee_{k \in K} \restr{D_k}{\sigma}\right)$ is a weakening of some $F' \in \dom(\sigma)$.
    Otherwise, both $\restr{C}{\sigma}$ and $\restr{D}{\sigma}$ are defined as $\bigvee_{j \in J} \restr{C_j}{\sigma}$ and $\bigvee_{k \in K} \restr{D_k}{\sigma}$, respectively. Thus, $\restr{C}{\sigma} \lor \restr{D}{\sigma}$ coincides with $E$. Due to the idempotence of the restriction operation (\cref{lem:restriction is idempotent}), $\restr{\left(\restr{C}{\sigma} \lor \restr{D}{\sigma}\right)}{\sigma}$ is defined in terms of $E$ in the same way as $\restr{\left(C \lor D\right)}{\sigma}$, and thus they are equal.
    
    To conclude the proof, observe that if $\restr{\left(C \lor D\right)}{\sigma}$ is not a constant, $\restr{\left(\restr{C}{\sigma} \lor \restr{D}{\sigma}\right)}{\sigma}$ is also not a constant and coincides with $\left(\restr{C}{\sigma} \lor \restr{D}{\sigma}\right)$.
\end{proof}

Due to \cref{prop:Shoenfield is enough}, we can focus on any Frege system that uses the primitive connectives $\neg$ and $\lor$ without loss of generality. We will consider Shoenfield's system~\cite{MR225631}, which consists of the following rules:
\begin{center}
    \begin{tabular}{rl} 
        Excluded middle:& $\vdash p \vee \neg p$; \\
        Weakening (expansion) rule: & $p \vdash q \vee p$; \\
        Cut rule:& $p \vee q, \neg p \vee r \vdash q \vee r$; \\
        Contraction rule:& $p \vee p \vdash p$; \\
        Associative rule:& $(p \vee q) \vee r \vdash p \vee (q \vee r)$.
    \end{tabular}
\end{center}


\theoremflarestrictions*
\begin{proof}
    Recall that we consider the proof $\Pi$ as a sequence of formulas in merged form. We will prove the theorem by induction on $i$. Observe that if $\restr{T_i}{\sigma} = 1$, the result holds trivially. If $T_i \in \calF$, then clearly $\restr{T_i}{\sigma} \in \restr{\calF}{\sigma}$, and the base case holds. Otherwise, $T_i$ is derived using one of the rules of Shoenfield's system. We will show that the result holds for each of these rules.
    \begin{description}
        \item[Contraction, and associative rule:]
        These rules does not affect the merged form of $T_i$, and thus the result holds trivially.

        \item[Weakening rule:]
        The resulting formula can also be obtained by applying the weakening rule to the restricted formulas.

        \item[Excluded middle:]
        The formula $T_i$ is $A \lor \neg A$ for some formula $A$. If $\restr{A}{\sigma}$ is a constant, then by definition, $\restr{T_i}{\sigma} = 1$. Otherwise, $\restr{T_i}{\sigma}$ is defined in terms of the formula $B \coloneqq \restr{A}{\sigma} \lor \restr{\neg A}{\sigma}$, which is a tautology. Thus, either $\restr{T_i}{\sigma} = B$ and can be obtained by applying the excluded middle rule to $\restr{A}{\sigma}$, or $\restr{T_i}{\sigma} = 1$.

        \item[Cut rule:]
        This is the most interesting case. Let $T_i = B \lor C$ be derived from the formulas $T_{j_1} = A \lor B$ and $T_{j_2} = \neg A \lor C$, where $j_1, j_2 < i$. We start by considering simple cases first:

        \begin{itemize}
            \item $T_{j_1}$ or $T_{j_2}$ becomes $0$ under $\sigma$. The result holds since the empty formula trivially implies anything.

            \item $\restr{T_i}{\sigma} = 0$. Since the top gate of $T_i$ is a disjunction, it must be the case that $\restr{B}{\sigma} = \restr{C}{\sigma} = 0$. By \cref{lem:restriction is distributive}, $\restr{T_{j_1}}{\sigma} = \restr{\left(\restr{A}{\sigma} \lor \restr{B}{\sigma}\right)}{\sigma} = \restr{\left(\restr{A}{\sigma}\right)}{\sigma}$. By \cref{lem:restriction is idempotent}, $\restr{T_{j_1}}{\sigma} = \restr{A}{\sigma}$. Similarly, $\restr{T_{j_2}}{\sigma} = \neg \restr{A}{\sigma}$. $T_i$ is derived from these formulas using the cut rule.
        \end{itemize}

        Since we have already covered the case where $\restr{T_i}{\sigma} = 1$, we can assume that $\restr{T_i}{\sigma}$ is not a constant. By the second claim of \cref{lem:restriction is distributive}, for the rest of the proof, we have $\restr{T_i}{\sigma} = \restr{B}{\sigma} \lor \restr{C}{\sigma}$.

        If $\restr{A}{\sigma}$ is a constant, then without loss of generality, assume that $\restr{A}{\sigma} = 0$. By \cref{lem:restriction is idempotent}, this implies $\restr{T_{j_1}}{\sigma} = \restr{B}{\sigma}$. Therefore, $T_i$ can be derived from $T_{j_1}$ using the weakening rule.

        The remaining cases depend on whether $\restr{T_{j_1}}{\sigma}$ and $\restr{T_{j_2}}{\sigma}$ are constants, specifically whether they become $1$ under the restriction, as we have already considered the case where either of them becomes $0$. This is the only part of the proof where we need the axioms from the set $\calA_\sigma$.
        \begin{itemize}
            \item $\restr{T_{j_1}}{\sigma} \neq 1$ and $\restr{T_{j_2}}{\sigma} \neq 1$.
            By the second claim of \cref{lem:restriction is distributive}, we have $\restr{T_{j_1}}{\sigma} = \restr{A}{\sigma} \lor \restr{B}{\sigma}$ and $\restr{T_{j_2}}{\sigma} = \neg \restr{A}{\sigma} \lor \restr{C}{\sigma}$. Therefore, $T_i$ can be derived using the cut rule.

            \item $\restr{T_{j_1}}{\sigma} = 1$ but $\restr{T_{j_2}}{\sigma} \neq 1$. We have already considered the cases where $\restr{A}{\sigma} = 1$ and $\restr{B}{\sigma} = 1$. In the remaining cases, $\restr{T_{j_1}}{\sigma}$ becomes $1$ when the formula $\restr{A}{\sigma} \lor \restr{B}{\sigma}$ is a weakening of some disjunction $D_1 \in \dom(\sigma)$. Since $D_1$ must be a disjunction, we have $(D_1, 1) \in \sigma$, and thus $D_1 \in \calA_\sigma$. We will show that $\restr{T_i}{\sigma}$ follows semantically from $D_1$ and $\restr{T_{j_2}}{\sigma}$ by contrapositive.

            Let $\alpha$ be a complete variable assignment that falsifies $\restr{T_i}{\sigma}$. In particular, $\alpha$ also falsifies both $\restr{B}{\sigma}$ and $\restr{C}{\sigma}$. Thus, both $T_{j_1}$ and $T_{j_2}$ cannot be simultaneously satisfied by $\alpha$. If $\alpha$ falsifies $T_{j_2}$, then we are done. Otherwise, $\alpha$ must falsify $T_{j_1}$. Since $T_{j_1}$ is a weakening of $D_1$, $\alpha$ must also falsify $D_1$.

            \item $\restr{T_{j_1}}{\sigma} \neq 1$ but $\restr{T_{j_2}}{\sigma} = 1$.
            This case is symmetric to the previous one. The formula $T_{j_2}$ is replaced by an axiom from $\calA_\sigma$.

            \item $\restr{T_{j_1}}{\sigma} = 1$ and $\restr{T_{j_2}}{\sigma} = 1$.
            This case is a combination of the last two cases. Both $T_{j_1}$ and $T_{j_2}$ are replaced by axioms from $\calA_\sigma$.
        \end{itemize}
    \end{description}
\end{proof}

\section{Live and forced formulas}\label{sec:live-forced-gates}

Here, we provide the proofs of the lemmas that capture the properties of live and forced formulas. We begin with \cref{lm:sat remains sat}, which demonstrates that the satisfiability of a formula with respect to a linear subsystem, restricted to the closure of its variables, is preserved when restricted to any sufficiently small subsystem.

\satremainssat*
\begin{proof}
    Assume the contrary and let $I$ be an index set of size at most $r/2$ such that $(C = \alpha) \land L^I$ is unsatisfiable. We will construct a linear system $\calF$ that is consistent with $(C = \alpha) \land L^{\Cl(C)}$ but $\calF \land L^I$ is unsatisfiable. Then, we will use the properties of linear systems and closure properties to show that this leads to a contradiction.

    Let $\sigma$ be a satisfying assignment of $(C = \alpha) \land L^{\Cl(C)}$ and let $\tau$ be the projection of $\sigma$ to $\vars(C)$. Based on $\tau$, we define a linear system $\calF \coloneqq \bigwedge_{y \in \vars(C)} (y = \tau(y))$. Observe that $(C = \alpha)$ is semantically implied by $\calF$. Therefore, $\sigma$ satisfies $\calF \land L^{\Cl(C)}$, and $\calF \land L^I$ remains unsatisfiable.

    Let $\ell_i(x) = b_i$ denote the $i$th equation of $L$. Since $\calF \land L^I$ is an unsatisfiable linear system, there exists\footnote{This fact follows from the properties of Gaussian elimination.} a subset $J \subseteq \vars(C)$ and $I' \subseteq I$ such that the polynomial $\sum_{y \in J} (y - \tau(y)) + \sum_{i \in I'} (\ell_i(x) - b_i)$ is the constant $1$ polynomial. In particular, all variables from $\partial(I')$, i.e., those that appear uniquely in $L^{I'}$, must cancel out with variables from $J$. This implies that $\partial(I') \subseteq J \subseteq \vars(C)$ and $I' \subseteq \Cl(C)$ due to the monotonicity of the closure operation. This contradicts the satisfiability of $\calF \land L^{\Cl(C)}$ and completes the proof.
\end{proof}

In our applications of \cref{lm:sat remains sat}, we will use a weaker corollary of this lemma, which is more practical when stated in its contrapositive form.

\begin{corollary}\label{lm:unsat implies forced}
    Let $L$ be an $(r, \Delta, c)$-weakly expanding linear system with $m$ equations and $C$ a formula with width $w(C) \le cr/2$. If there exist an index set $I \subseteq [m]$ of size at most $r/2$ and a constant $\alpha \in \Q$ such that $(C=1-\alpha) \land L^I$ is unsatisfiable, then $C$ is forced to $\alpha$ w.r.t. $L$.
\end{corollary}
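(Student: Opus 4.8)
The plan is to obtain the corollary by simply contraposing \cref{lm:sat remains sat} and combining it with the elementary fact that small linear subsystems of a weak expander are satisfiable.

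First I would record two preliminary observations. The width bound $w(C) \le cr/2$ gives, via the first item of \cref{thm:weak closure properties}, that $\card{\Cl(C)} \le \max(w(C)/c,\, r/2) = r/2$, so $L^{\Cl(C)}$ is a linear subsystem with at most $r/2 \le r$ equations. For a weak expander every set of at most $r$ rows has a non-empty boundary (by property~1 for sets of size $\le r/2$ and by property~2 for larger ones, since $c\card{I}>0$), hence contains a variable occurring in a unique equation; the standard peeling/Gaussian-elimination argument then shows no subsystem of size $\le r$ can be minimally unsatisfiable, so $L^{\Cl(C)}$ is satisfiable and its solution set $S$ is non-empty. (This is exactly the observation used for \cref{lm:weak-exp-width}.)

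Next I would invoke \cref{lm:sat remains sat} in contrapositive form, applied with the \emph{complementary} value $1-\alpha$: if $(C = 1-\alpha) \land L^{\Cl(C)}$ were satisfiable, the lemma would force $(C = 1-\alpha) \land L^I$ to be satisfiable for \emph{every} index set of size at most $r/2$, contradicting the hypothesis that $(C = 1-\alpha)\land L^I$ is unsatisfiable for the given $I$. Hence $(C = 1-\alpha) \land L^{\Cl(C)}$ is unsatisfiable, i.e. $C$ never takes the value $1-\alpha$ on $S$. Since $S \neq \varnothing$, the only value $C$ takes on $S$ is $\alpha$, so $C$ is constant equal to $\alpha$ on the solution set of $L^{\Cl(C)}$; together with the width hypothesis $w(C)\le cr/2$, this is precisely the definition of $C$ being forced to $\alpha$ w.r.t.\ $L$, completing the proof.

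There is no substantive obstacle: the proof is a one-line contraposition. The only two points that genuinely need attention are (i) verifying that $L^{\Cl(C)}$ is satisfiable, so that "constant on the solution set" is not vacuously true (this is where the non-empty-boundary clause of weak expansion is used, via the size bound on $\Cl(C)$), and (ii) remembering to feed the lemma the value $1-\alpha$ rather than $\alpha$, which is what makes the contrapositive yield forcedness to $\alpha$ specifically rather than merely non-liveness.
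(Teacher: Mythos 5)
Your proof is correct and follows exactly the route the paper intends: the corollary is the contrapositive of \cref{lm:sat remains sat} applied to the value $1-\alpha$, and the paper itself presents it as an immediate consequence without further argument. Your additional check that $L^{\Cl(C)}$ is satisfiable (via the bound $\card{\Cl(C)} \le r/2$ from \cref{thm:weak closure properties} and the non-empty-boundary property of weak expanders), so that constancy of $C$ on its solution set determines a unique $\alpha$, is a careful point the paper leaves implicit, but it is genuinely part of unpacking the definition of forcedness rather than a deviation.
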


The remaining lemmas in this section are consequences of \cref{lm:unsat implies forced}.

\forcedremainsforced*
\begin{proof}
    Let $G$ be the underlying graph of $L$. By the definition of forced formulas, we have that $(C = 1 - \alpha) \land L^{\Cl(C)}$ is unsatisfiable.

    In the first part, the variable assignment $\rho$ changes the linear system $L$ to $\restr{L}{\rho}$. Let $G'$ be the underlying graph of $\restr{L}{\rho}$. After applying $\rho$, the formula $(\restr{C}{\rho} = 1 - \alpha) \land \restr{L^{\Cl(C)}}{\rho}$ remains unsatisfiable. Noting that $\restr{L^{\Cl(C)}}{\rho}$ is a subsystem of $\restr{L}{\rho}$ with at most $r/2$ equations, we can apply \cref{lm:unsat implies forced} to conclude that $\restr{C}{\rho}$ is forced to $\alpha$ w.r.t. $\restr{L}{\rho}$.

    For the second part, $\{(D, \beta)\}$ is a simple formula assignment of size one. By the monotonicity of the closure, $\Cl(D) \subseteq \Cl(C)$. In particular, $L^{\Cl(C)}$ implies $(D = \beta)$, and thus the formula $(C' = 1 - \alpha) \land L^{\Cl(C)}$ is unsatisfiable. Applying \cref{lm:unsat implies forced} to this formula, we conclude that $C'$ is forced to $\alpha$ w.r.t $L$.
\end{proof}

\howforcednesspropagates*
\begin{proof}
    The first case is straightforward. If $C = \neg D$, then $\vars(C) = \vars(D)$, and $C$ is forced to $1 - \alpha$ w.r.t. $L$ if and only if $D$ is forced to $\alpha$ w.r.t. $L$.

    For the second case, consider any child $C_j$ of $C$. Since $(C = 1) \land L^{\Cl(C)}$ is unsatisfiable, $(C_j = 1) \land L^{\Cl(C)}$ is also unsatisfiable. Thus, by \cref{lm:unsat implies forced}, $C_j$ is forced to $0$ w.r.t. $L$.
\end{proof}

\section{Conclusion}

We adapted the technique of deterministic restrictions to the proof complexity setting and demonstrated its use in proving super-linear lower bounds on the size of $\ACz$-refutations of random 3-CNFs. Improving this bound is an intriguing problem in itself. Since deterministic restrictions can also be used to obtain lower bounds for $\ACz[2]$, it would be interesting to explore whether this technique can be applied to prove lower bounds on $\ACz[2]$-refutations of random 3-CNFs or similar instances. Another promising direction is to further investigate the concept of formula assignments to determine if they can be used to establish stronger lower bounds.

\section*{Acknowledgements}

We would like to thank J\'{a}n Pich, Pavel Pudl\'{a}k, and Rahul Santhanam for their helpful discussions and insightful comments on the manuscript.

\printbibliography

\appendix

\section{Weak expanders and the closure operation}\label{app:weak-expanders}

We start by proving the main properties of the closure operation in weak expanders. The proof is similar to the proof of the corresponding properties for expanders, but we present it here for completeness.

\weakclosureproperties*
\begin{proof}
    \begin{enumerate}
        \item By definition of the closure, the size of $I$ is at most $r$. If $\card{I} \le r/2$, then the claim holds trivially. Otherwise, when $r/2 < \card{I} \le r$, we can apply the definition of weak expanders to conclude that $\card{\partial(I)} \ge c\card{I}$. Since $I$ is a closure of $J$, we have $\partial(I) \subseteq J$, and the claim follows.

        \item Let $I_1$ and $I_2$ be two closures of $J$ of size at most $cr/2$. By the previous item, their size is at most $r/2$. Thus, the size of their union does not exceed $r$. Observing that $\partial(I_1 \cup I_2) \subseteq \partial(I_1) \cup \partial(I_2) \subseteq J$, we conclude that $I_1 \cup I_2$ is a closure of $J$. By maximality, this implies that $I_1 \cup I_2 = I_1 = I_2$.

        \item By definition, $\partial(J')$ is an $(r, J')$-contained set, and since $J' \subseteq J$, it is also $(r, J)$-contained. We can extend this set to a closure of $J$. According to the previous item, this closure is unique, and thus $\Cl(J') \subseteq \Cl(J)$.
    \end{enumerate}
\end{proof}

Before proving the rest of the lemmas, we need to establish some auxiliary properties of weak expanders. The first shows that the closure of the extension of a set is the same as the closure of the set itself.

\begin{lemma}\label{lem:no neighbours in closure}
    Let $G$ be an $(r,\Delta,c)$-weak expander and $J \subseteq R$ with $\card{\Ext_G(J)} \le cr/2$. Then $\Cl_G(\Ext(J)) = \Cl_G(J)$.
\end{lemma}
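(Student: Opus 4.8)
The statement $\Cl_G(\Ext(J)) = \Cl_G(J)$ is an equality of two maximal $(r, \cdot)$-contained sets, so the natural plan is to prove containment in both directions, using the monotonicity and uniqueness of closures guaranteed by \cref{thm:weak closure properties} under the size hypothesis $\card{\Ext_G(J)} \le cr/2$. The key observation I would lean on is that $\Cl_G(J)$ is, by its very definition, a set whose boundary lies inside $J$, hence whose entire neighbourhood $N(\Cl_G(J))$ is (by definition) contained in $\Ext(J) = J \cup N(\Cl_G(J))$; in particular $\partial(\Cl_G(J)) \subseteq J \subseteq \Ext(J)$, so $\Cl_G(J)$ is an $(r,\Ext(J))$-contained set, which already gives $\Cl_G(J) \subseteq \Cl_G(\Ext(J))$ by maximality (and uniqueness of the closure of $\Ext(J)$, which holds since $\card{\Ext_G(J)} \le cr/2$).

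**Key steps.** First I would record the size bookkeeping: since $J \subseteq \Ext(J)$ we have $\card{J} \le \card{\Ext_G(J)} \le cr/2$, so both $\Cl_G(J)$ and $\Cl_G(\Ext(J))$ are well-defined, unique, and of size at most $r/2$ by \cref{itm:weak closure small,itm:weak closure unique} of \cref{thm:weak closure properties}. Second, the easy inclusion: as sketched above, $\partial(\Cl_G(J)) \subseteq J \subseteq \Ext(J)$ shows $\Cl_G(J)$ is $(r, \Ext(J))$-contained, so $\Cl_G(J) \subseteq \Cl_G(\Ext(J))$ by maximality. Third, the reverse inclusion: let $I \coloneqq \Cl_G(\Ext(J))$, so $\partial(I) \subseteq \Ext(J) = J \cup N(\Cl_G(J))$. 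I want to upgrade this to $\partial(I) \subseteq J$, which would show $I$ is $(r,J)$-contained and hence $I \subseteq \Cl_G(J)$, finishing the argument. To do this, consider the set $I \cup \Cl_G(J)$: by the easy inclusion $\Cl_G(J) \subseteq I$, so in fact $I \cup \Cl_G(J) = I$. Thus any boundary vertex of $I$ that lies in $N(\Cl_G(J))$ has a neighbour in $\Cl_G(J) \subseteq I$; but a boundary vertex of $I$ has, by definition, a \emph{unique} neighbour in $I$, so that unique neighbour must lie in $\Cl_G(J)$ — meaning this vertex is also a boundary vertex of $\Cl_G(J)$ (its unique $I$-neighbour is its only neighbour in the subset $\Cl_G(J)$ as well, since $\Cl_G(J) \subseteq I$). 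Hence such a vertex lies in $\partial(\Cl_G(J)) \subseteq J$. Therefore every element of $\partial(I)$ lies in $J$, i.e. $\partial(I) \subseteq J$, and $I$ is $(r,J)$-contained, giving $I \subseteq \Cl_G(J)$. Combined with the easy inclusion, $I = \Cl_G(J)$.

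**Main obstacle.** The only delicate point is the reverse inclusion, specifically the argument that a boundary vertex of $I$ landing in $N(\Cl_G(J))$ is forced into $\partial(\Cl_G(J))$: one must be careful that "unique neighbour in $I$" together with $\Cl_G(J) \subseteq I$ really does imply "unique (hence boundary) neighbour in $\Cl_G(J)$", rather than the vertex having no neighbour at all in $\Cl_G(J)$ — but that cannot happen, since the hypothesis that the vertex lies in $N(\Cl_G(J))$ guarantees at least one neighbour there, and uniqueness in the larger set $I$ then pins it down. I would make sure the key containment $\Cl_G(J) \subseteq \Cl_G(\Ext(J))$ is established \emph{first} so that this collapsing argument is available. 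Everything else is routine bookkeeping with the size bounds.
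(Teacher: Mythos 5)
Your proof is correct and follows essentially the same structure as the paper's: establish $\Cl_G(J) \subseteq \Cl_G(\Ext(J))$ via monotonicity (you unpack this through maximality/uniqueness, which is where monotonicity comes from), then show $\partial(I) \subseteq J$ for $I = \Cl_G(\Ext(J))$, using $\Cl_G(J) \subseteq I$ to rule out boundary status for vertices in $N(\Cl_G(J)) \setminus J$. The only cosmetic difference is that the paper argues directly — any $j \in \Ext(J)\setminus J$ is a non-unique neighbour of $\Cl_G(J)$ and hence (since $\Cl_G(J) \subseteq I$) has two neighbours in $I$, so $j \notin \partial(I)$ — whereas you argue the contrapositive, showing any $v \in \partial(I) \cap N(\Cl_G(J))$ has its unique $I$-neighbour inside $\Cl_G(J)$, forcing $v \in \partial(\Cl_G(J)) \subseteq J$; the two are logically the same step.
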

\begin{proof}
    The inclusion $\Cl(J) \subseteq \Cl(\Ext(J))$ follows from the monotonicity of the closure operation. To show the reverse inclusion, let $I$ be the closure of $\Ext(J)$. We will demonstrate that $\partial(I) \subseteq J$, which implies that $I$ is an $(r, J)$-contained set and thus a subset of $\Cl(J)$ due to the uniqueness and maximality of $\Cl(J)$.

    Consider any $j \in \Ext(J) \setminus J$. We will show that $j$ cannot be in $\partial(I)$. Recall that $\Ext(J) = J \cup N(\Cl(J))$, which implies that $j$ is a neighbour of $\Cl(J)$ that is not in $J$. By definition, $\partial(\Cl(J)) \subseteq J$, so $j$ must be a non-unique neighbour of $\Cl(J)$. In other words, there exist two distinct indices $i_1$ and $i_2$ in $\Cl(J)$ such that $j$ is their neighbour. In particular, $i_1, i_2 \in I$, and thus $j \notin \partial(I)$. This completes the proof.
\end{proof}

Let $G$ be a bipartite graph with parts $L$ and $R$. Recall that we defined the deletion of a set $J \subseteq R$ as the induced subgraph of $G$ with the vertices in $J$ removed from $R$, along with all vertices in $L$ that only have neighbours in $J$. In our applications, we only remove the \emph{extensions} of sets of sufficiently small size from weak expanders. We will demonstrate that this operation is defined in terms of the closure of such extensions. This is formalized as follows.
\begin{remark}
    Let $G = (L \sqcup R, E)$ be an $(r, \Delta, c)$-weak expander and $J \subseteq R$ with $\card{\Ext(J)} \le cr/2$. Then, for every $i \in L$ whose neighbours are all in $J$, we have $i$ is in the closure of $J$. Consequently, the graph $G \setminus \Ext(J)$ coincides with the induced subgraph of $G$ on $(L \setminus \Cl(J)) \sqcup (R \setminus \Ext(J))$.
\end{remark}
\begin{proof}
    We begin by observing that every vertex in $\Cl(J)$ only has neighbours in $\Ext(J)$ and therefore must be removed from the graph. We will show that no other vertices need to be removed. Let $I \subseteq \Cl(J)$ be any set of size at most $r$ that has neighbours in $\Ext(J)$. Since the size of $\Cl(J)$ is at most $r/2$ due to \cref{thm:weak closure properties}, it is enough to show that $I$ coincides with $\Cl(J)$. Clearly, $I$ is an $(r, \Ext(J))$-contained set, and thus $I \subseteq \Cl(\Ext(J))$. Noting that $\Cl(\Ext(J)) = \Cl(J)$ by \cref{lem:no neighbours in closure}, we conclude that $I \subseteq \Cl(J)$. This completes the proof.
\end{proof}

Now we are ready to prove the two lemmas that are used in the proof of the Regularization Lemma.

\weakboundaryfromclosure*
\begin{proof}
    Assume that the claim does not hold, i.e., the weak expansion property is violated for some non-empty set $I \subseteq L \setminus \Cl(J)$ of size at most $r$. We need to consider two cases based on the size of $I$ since the weak expansion property is defined differently for sets of size at most $r/2$ and larger sets.

    Consider the case where $\card{I} \le r/2$. Then $I$ violates the weak expansion property if it does not have any unique neighbours in $G'$, i.e., $\partial_{G'}(I) = \varnothing$. Thus, $I$ can only have unique neighbours in $\Ext(J)$ in $G$ itself. Now consider the set $I \cup \Cl(J)$. By \cref{thm:weak closure properties}, the size of $\Cl(J)$ is bounded by $r/2$, and thus the size of $I \cup \Cl(J)$ does not exceed $r$. Therefore, $I \cup \Cl(J)$ is an $(r, \Ext(J))$-contained set. Together with \cref{lem:no neighbours in closure}, this implies that $I \cup \Cl(J)$ is a closure of $J$, which contradicts the maximality of $\Cl(J)$.

    In the case where $r/2 < \card{I} \le r$, the weak expansion property is violated if $\card{\partial_{G'}(I)} < c\card{I}/2$. We will show that this does not hold for any such $I$. Similarly to the reasoning above, every unique neighbour of $I$ in $G$ is also a unique neighbour of $I$ in $G'$ if it does not belong to $\Ext(J)$. Thus, we have that
    \[
        \card{\partial_{G'}(I)} \ge \card{\partial_G(I)} - \card{\Ext(J)}.
    \]
    Since $G$ is a boundary expander, we have that $\card{\partial_G(I)} \ge c\card{I}$. On the other hand, $\card{\Ext(J)} \le cr/4 \le c\card{I}/2$, and thus $\card{\partial_{G'}(I)} \ge c\card{I}/2$. This completes the proof.
\end{proof}

\closurestepbystep*
\begin{proof}
    Note that $G'$ is an $(r, \Delta, c/2)$-weak expander by \cref{lem:weak boundary from closure}. Define the set $I$ as $\Cl_G(J) \cup \Cl_{G'}(J')$. Due to \cref{lem:no neighbours in closure}, it is enough to show that $I$ is the closure of $\Ext_G(J \cup J')$ in $G$.

    We start by showing that $I$ is an $(r, \Ext_G(J \cup J'))$-contained set. By the closure properties, both $\Cl_G(J)$ and $\Cl_{G'}(J')$ are bounded in size by $r/2$, and thus the size of $I$ is at most $r$. Since all unique neighbours of $\Cl_{G'}(J')$ in $G$ are either unique neighbours of $\Cl_{G'}(J')$ in $G'$ or contained in $\Ext_G(J)$, we have that $\partial_G(\Cl_{G'}(J')) \subseteq \partial_{G'}(\Cl_{G'}(J')) \cup \Ext_G(J)$. Therefore, $\partial_G(I) \subseteq \Ext_G(J \cup J')$. This implies that $I$ is an $(r, \Ext_G(J \cup J'))$-contained set and thus a subset of $\Cl_G(\Ext_G(J \cup J'))$ due to the uniqueness of the closure operation.

    Assume that $\Cl_G(\Ext_G(J \cup J'))$ does not coincide with $I$, meaning that there exists a vertex $i \in \Cl_G(\Ext_G(J \cup J')) \setminus I$. We consider two cases based on whether $i$ remains in $G'$ after removing $\Ext_G(J)$. If $i$ is not present in $G'$, then all neighbours of $i$ are in $\Ext_G(J)$, including the unique ones. Therefore, $i$ can be added to $\Cl_G(J)$, contradicting its maximality. Otherwise, $i$ is in $G'$. Similarly, $i$ can be added to $\Cl_{G'}(J')$, which contradicts the maximality of $\Cl_{G'}(J')$. This completes the proof.
\end{proof}

\section{Proof of \texorpdfstring{\cref{clm:consistent}}{Claim~\ref*{clm:consistent}}}\label{app:consistent}

\technicalinduction*
\begin{claimproof}
    We show the statement by induction on the structure of $D$.

    If $D$ is a variable $x$. If $x|_{\rho_{i-1}}|_{\sigma_{i-1}} = x$, then the claim holds trivially. Otherwise, $x$ disappears by one of these restrictions. If $x$ is in the domain of $\rho_{i-1}$, then it also is in $\rho_i$ and the claim holds. Otherwise, $x$ is in the domain of $\sigma_{i-1}$ and is forced to a constant $\alpha$. Then \cref{lm:forced remains forced} implies that either $x|_{\rho_i} = \alpha$ or $(x, \alpha) \in \sigma_i$ since $x$ is minimally forced.

    If $D$ is a negation $\neg D'$, then the claim immediately follows by the inductive hypothesis.

    Finally, let $D$ be a disjunction $\bigvee_{m \in I} D_m$ and let $E \coloneqq \bigvee_{m \in I} D_m|_{\rho_{i-1}}|_{\sigma_{i-1}}$. We will consider all possible cases for $D$.
    \begin{enumerate}
        \item If $\restr{D}{\rho_{i-1}}$ is a constant $\alpha$, then $\restr{D}{\rho_i} = \alpha$ since $\rho_i$ extends $\rho_{i-1}$.
        \item If there exists $m \in I$ such that $D_m|_{\rho_{i-1}}|_{\sigma_{i-1}}$ is $1$, then the inductive hypothesis implies that $D_m|_{\rho_i}|_{\sigma_i}$ is also $1$. It is impossible that $D|_{\rho_i} = 0$ due to \cref{lm:how forcedness propagates}. Thus, either $D|_{\rho_i}$ is already $1$, or $D|_{\rho_i}|_{\sigma_i}=1$ by the definition of $\sigma_i$.
        \item If for every $m \in I$, $D_m|_{\rho_{i-1}}|_{\sigma_{i-1}}$ is $0$, then also every $D_m|_{\rho_i}|_{\sigma_i}$ is $0$ by the inductive hypothesis. Similarly to the previous case, either $D|_{\rho_i}$ is already $0$, or $D|_{\rho_i}|_{\sigma_i}=0$ by the definition of $\sigma_i$.
        \item In the remaining cases the value of $D|_{\rho_{i-1}}|_{\sigma_{i-1}}$ depends on the formula $E$. Using the definition of formula assignments, we can conclude that $D|_{\rho_i}|_{\sigma_i} = E|_{\rho_i}|_{\sigma_i}$, which either becomes a constant $1$ when $E$ is a weakening of some formula from $\sigma_{i-1}$, or remains the same otherwise. Formally, this follows from the following chain of equalities:
        \begin{align*}
            D|_{\rho_i}|_{\sigma_i} &=
            \left(\bigvee_{m \in I} D_m|_{\rho_i}|_{\sigma_i}\right)\Bigg|_{\sigma_i} =
            \left(\bigvee_{m \in I} D_m|_{\rho_{i-1}}|_{\sigma_{i-1}}|_{\rho_i}|_{\sigma_i}\right)\Bigg|_{\sigma_i}
            \\ &=
            \left(\bigvee_{m \in I} D_m|_{\rho_{i-1}}|_{\sigma_{i-1}}\right)\Bigg|_{\rho_i}\Bigg|_{\sigma_i} =
            E|_{\rho_i}|_{\sigma_i}.
        \end{align*}
    \end{enumerate}
\end{claimproof}

\end{document}